%
%


\documentclass{article}

\usepackage{arxiv}

\usepackage[english]{babel}
\usepackage[utf8]{inputenc}
\usepackage[pdftex]{graphicx}
\usepackage[table]{xcolor} 
\usepackage{amsthm,amsmath,amssymb,natbib}
\RequirePackage[colorlinks,citecolor=blue,urlcolor=blue]{hyperref}

\usepackage{pgf,tikz}
\usetikzlibrary{shapes,snakes}
\usetikzlibrary{arrows}

\usepackage{tabularx}
\usepackage{algorithmic}
\usepackage[subsection]{algorithm}
\usepackage{lscape}

\usepackage{caption}
\usepackage{subcaption}

\RequirePackage{hypernat}

\DeclareMathOperator*{\argmax}{argmax}

\newcommand{\proKEP}{KEP}
\newcommand{\eg}{\emph{e.g.}}
\newcommand{\ie}{\emph{i.e.}}
\newcommand{\IndA}{IA}
\newcommand{\NE}{NE}
\newcommand{\SWE}{SWE}
\newcommand{\FVS}{{\sc Feedback vertex set}}
\newcommand{\threeDIM}{{\sc 3-Dimensional matching}}
\newcommand{\BestReponsepes}{{\sc  Pessimistic Best Response}}
\newcommand{\BestReponseopt}{{\sc  Optimistic Best Response}}
\newcommand{\proKEGcard}{\#-KEG}
\newcommand{\proKEGwei}{W-KEG}

\newtheorem{axiom}{Axiom}[section]
\newtheorem{claim}[axiom]{Claim}

\newtheorem{example}[axiom]{Example}
\newtheorem{lemma}[axiom]{Lemma}

\newtheorem{theorem}[axiom]{Theorem}

\newcolumntype{H}{>{\setbox0=\hbox\bgroup}c<{\egroup}@{}}

\title{Game theoretical analysis of  Kidney Exchange Programs
\thanks{Parts of this material are based on data and information provided by Canadian Blood Services. However, the analyses, conclusions, opinions and statements expressed herein are those of the authors and not necessarily those of Canadian Blood Services}}

\author{
  Margarida Carvalho
  \\
  CIRRELT and D\'epartement d'Informatique et de Recherche Op\'erationnelle\\
   Universit\'e de Montr\'eal\\
  \texttt{carvalho@iro.umontreal.ca} \\
  \And
Andrea Lodi \\
 Canada Excellence Research Chair in Data Science for Real-Time Decision-Making\\
 Polytechnique de Montr\'eal\\
 \texttt{andrea.lodi@polymtl.ca} \\
}

\begin{document}

\maketitle

\begin{abstract}

The goal of a kidney exchange program (KEP) is to maximize number of transplants within a pool of incompatible patient-donor pairs by exchanging donors. A KEP can be modelled as a maximum matching problem in a graph. A KEP between incompatible patient-donor from pools of several hospitals, regions or countries has the potential to increase the number of transplants. These entities aim is to maximize the transplant benefit for their patients, which can lead to strategic behaviours. Recently, this was formulated as a non-cooperative two-player game and the game solutions (equilibria) were characterized  when the entities objective function is the number of their patients receiving a kidney. In this paper, we generalize these results for $N$-players and discuss the impact in the game solutions when transplant information quality is introduced. Furthermore, the game theory model is analyzed through computational experiments on instances generated through the Canada Kidney Paired Donation Program. These experiments highlighting the importance of using the concept of Nash equilibrium, as well as, the anticipation of the necessity to further research for supporting  police makers once measures on transplant quality are available. 
\end{abstract}

\keywords{Kidney exchange program \and
Non-cooperative
\and
Nash equilibria
\and
Social welfare
\and
Maximum matching
\and
Graft survival
}

\paragraph{Kidney exchange programs ({\proKEP}s)} A patient suffering from chronic kidney disease can see their life quality significantly improved by receiving a transplant of a compatible kidney. Typically, patients are registered in a waiting list and whenever a deceased donor arrives an allocation algorithm is run to determine to which patient in the waiting list the graft will be offered. The allocation algorithms are designed in a way that reflects utilitarian and equitable objectives. In addition, a patient can also receive a kidney transplantation from a  healthy relative or a friend willing to make the donation.

Unfortunately, the allocation of deceased donors is far from meeting the demand for kidney transplant and living donors may not be able to donate to their loved one due to physiologically incompatibility. Around one in a thousand European citizens suffer from end-stage renal disease \citep{ref_Europe}. In United States and Canada, it is the 9th and 11th, respectively, leading cause of death each year \citep{ref_2_CDC,ref_3_KFC}. In particular, in 2017, 78\% of the Canadians on the waiting list for an organ were waiting for a kidney \citep{ref_3_KFC}. This disease has a high economic impact on national health services \citep{ref_2_NKP,ref_3_KFC}. Moreover, the number of Canadians living with end-stage kidney disease has grown \citep{ref_3_KFC}.

In an attempt to tackle the organ shortage, the idea of living-donor exchanges was proposed~\cite{Rapaport1986}. In this paradigm, two incompatible pairs can swap their donors if the donor in one pair is compatible with the patient of the other pair and vice-versa. This leads to 2-way exchanges, whose concept can be extended to $L$-way exchanges with $L\geq 2$; see Figure~\ref{fig:2way3way}. In 2004, \citet{Roth_Sonme_Unver_2004} proposed the first mechanism deciding kidney exchanges in a pool of incompatible pairs.\footnote{The authors also considered \emph{indirect exchanges}, where a donor from an incompatible pair donates to the waiting list in return for a priority on it.} It considered patients preferences over donors and no limit on $L$. Since the transplantions of an $L$-exchange must take place simultaneously to avoid donors withdraw and, for living donation, the patients preference can be reduced to compatible donors, bounds on $L$ are imperative and 0-1 preferences are enough. Hence, \citet{Roth_Sonme_Unver_2005_b} focused on mechanisms for $2$-way exchanges maximizing the number of transplants. Their mathematical setup for the graph representation of a KEP is used in this work.

Motivated by potential welfare gains~\cite{Roth_Sonmez_Unver_c}, kidney exchange programs started to be established.  The success of early {\proKEP}s in South Korea~\cite{Park2004} and the Netherlands~\cite{Klerk2005} was reported, and, nowadays, they are run in many countries; \eg, see \cite{FirstHandbook,biro_building_2018} for a recent report on current practices in Europe, and \cite{CanadianKEP} for the Canadian program. There is an extensive literature formulating the problem of selecting the optimal set of exchanges in a {\proKEP} with integer programming models and developing techniques to solve them efficiently \citep{Roth_Sonmez_Unver_2007,Abraham2007,Manlove2012,Constantino2013,Glorie2014,Dickerson:2016}. As a rule of thumb,  KEPs are centralized, that is, there is one decision maker deciding the exchanges to be executed. In general, the main goal is to maximize the number of patients receiving a donation, but other criteria have also been considered.

\begin{figure}
	\centering
	\begin{tabular}{cc}
		2-way exchange & $L$-way exchange \\ 
		\begin{tikzpicture}[-,>=stealth',shorten >=0.3pt,auto,node distance=3cm,
			thick,countryA node/.style={circle,draw},countryB node/.style={diamond,draw,fill=gray}, scale=.85, transform shape]
			\tikzstyle{matched} = [draw,line width=3pt,-]
			
			\node[countryA node] (1) {\begin{tabular}{cc}
					\tiny Patient  \\ 
					\tiny Donor
			\end{tabular}};
			\node[countryA node] (2) [right of=1] {\begin{tabular}{cc}
					\tiny Donor  \\ 
					\tiny Patient
			\end{tabular}};
			
			\path[->,every node/.style={font=\sffamily\small}]
			(1) edge[bend right] node  {} (2)
			(2) edge[bend right] node  {} (1);
		\end{tikzpicture}
		& 
		\begin{tikzpicture}[-,>=stealth',shorten >=0.3pt,auto,node distance=2cm,
			thick,countryA node/.style={circle,draw},countryB node/.style={draw=white}, scale=.85, transform shape]
			\tikzstyle{matched} = [draw,line width=3pt,-]
			
			\node[countryA node] (1) {$1$};
			\node[countryA node] (2) [right of=1] {$2 $};
			\node[countryB node] (3) [right of=2] {$ \ldots$};
			\node[countryB node] (4) [right of=3] {$ $};
			\node[countryA node] (5) [right of=4] {$L$};
			
			\path[->,every node/.style={font=\sffamily\small}]
			(1) edge[bend right] node  {} (2)
			(2) edge[bend right] node  {} (3)
			(4) edge[bend right] node  {} (5)
			(5) edge[bend right] node  {} (1);
		\end{tikzpicture}
	\end{tabular}
	\caption{Vertices represent  incompatible patient-donor pairs and arcs represent compatibilities.}
	\label{fig:2way3way}
\end{figure}
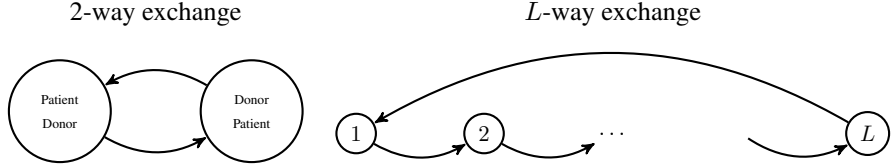

\paragraph{Multi-agent programs}
In the U.S., there are cross hospital  {\proKEP} platforms such as NKR, UNOS and APD. An international exchange between Czech Republic and Austria has taken place \citep{ref_4_Chez}. In August 2018, another international exchange was performed between a pair from Spain and another from Italy~\citep{ref_7_Italy_Spain}.
A cross continent exchange between Israel and Czech Republic leading to 6 transplants took place in December 2019~\citep{ref_8_Israel_Czech}. Currently, the hypothesis of an European Program~\citep{ref_ENCKEP} is being investigated with already some cross-border programs in place, \eg, the Scandiatransplant Kidney Paired Exchange Program 
and the South Alliance for Transplant.  In September 2019, it was announced a joint kidney exchange program between Australia and New Zealand~\cite{ref_Australia_NZ}.  Therefore, there is a recent motivation to study {\proKEP}s that can have incompatible pairs belonging to different agents (which can be hospitals, regions or countries) in an attempt to reach a higher social benefit in comparison to the individual case. Under these circumstances, the optimization cannot be anymore simply the maximization of the overall patients' benefit: it is natural to expect that each agent behaves strategically with the goal of maximizing the benefit of her patients receiving a transplant.  Indeed,  \citet{Ashlagi2017} show that  the centralized platforms UNOS and NKR do not register a high level of exchanges across hospitals, in part due to a market mechanism that fails to incentivize hospitals to submit all their pairs.\footnote{The other reason for the low number of cross-hospital exchanges is associated with the costs of participating in these platforms.} In this context, a  game theory setting can suitably frame the agents' incentives. In fact, the literature in this context can be divided on centralized and decentralized non-cooperative game models, and cooperative approaches.

\paragraph{Centralized (non-cooperative) multi-agent {\proKEP}s} The centralized programs focus on mechanisms guaranteeing individual rationality (IR) and strategy-proofness (SP) when each player's utility is the number of her patients receiving a kidney. While an IR mechanism ensures that each agent matches an equal or greater number of her patients than individually, an SP mechanism makes dominant for each agent the strategy of truthfully revealing her incompatible patient-donor pairs. In order to achieve IR or SP,  the social optimum (maximum number of exchanges when there is full collaboration) might have to be sacrificed~\cite{Roth_Sonmez_Unver_2005_a,Sonmez_Unver_2011}. Table~\ref{Table:strategyproof} provides the current landscape of known approximation ratios on the social optimum for these mechanisms. The values in bold correspond to the mechanisms that are tight, \ie, the mechanisms that match a known upper bound. 
	Alternatively to the analysis of the worst-case graph for such mechanisms, Bayesian models have been described. For example, \citet{Blum2017} achieve positive results by  considering deterministic graphs but  randomly associating vertices to players; under certain conditions, it is proven that a social optimum is likely to be individually rational. The authors also provide a simple mechanism that is individually rational and  likely to be a social optimum. Bayesian models reproducing the topology of exchange pools have also been proposed. Here, a random graph is generated based on blood type and tissue type compatibility distributions. In~\cite{Ashlag:2011,Ashlagi2014}, it is shown under mild conditions that, in almost all large graphs, there is an IR allocation using exchanges of size at most 3 close to the social optimum. The authors also devise a bonus mechanism such that there is an almost social optimum Nash equilibrium where agents truthfully report their pairs. ~\citet{Toulis2011,Toulis2015} show a similar results for $L=2$ with the main difference that a fixed number of hospitals has a very large pool, while in~\cite{Ashlag:2011,Ashlagi2014} there is a growing number of small hospitals.

While the previous papers study the static case, \citet{Dickerson2015} considers a multi-period dynamic model where a  credit mechanism is build, proven to be strategy-proof  and to lead to a global maximum matching; that approach does not rely on the structure of the compatibility graph, it assumes that each patient-donor pair remains only one period in the pool and \emph{a priori} knowledge of the  average arrival rate of pairs to each hospital.

\begin{table}[ptbh]
	\tiny
		\begin{tabular}{rrr|c|cc|cc|cc|c|c}
			& Setup &           &\multicolumn{9}{c}{Literature} \\ 
			\hline
			&        &          & \multicolumn{1}{c}{\cite{Roth_Sonmez_Unver_2005_a,Sonmez_Unver_2011}} 
			& \multicolumn{2}{c}{\cite{Ashlagi2015}}& \multicolumn{2}{c}{\cite{Caragiannis2015}}&
			\multicolumn{2}{c}{ \cite{Ashlagi2014,Ashlag:2011} }  & \multicolumn{1}{c}{\cite{Toulis2011,Toulis2015} } & \cite{Blum2017} \\\cline{4-12}
			& & & approx & approx & UB & approx & UB & approx & UB &  \\
			Worst-case graph & 2 players, $L=2$	& SP    & &$\boldsymbol{\frac{1}{2}}$ (D) && $\frac{2}{3}$ (R) & $\frac{4}{5}$ (R) & & & &\\
			&  $n$ players, $L=2$& IR  & $\boldsymbol{1}$ (D) & & &  & & & & & \\
			&                    & SP  & & $\frac{1}{2}$ (R) & $\frac{1}{2}$ (D),$\frac{7}{8}$ (R) & & & & & & \\
			&  $n$ players, $L\geq 3$ & IR  & & &&&& $\boldsymbol{\frac{1}{L-1}}$ (D) & $\frac{1}{L-1}$ (D) & &\\
			&                         & SP &  & & & & &&$\frac{7}{8}$ (R) & &\\  \hline
			Random graph     & & IR & & & & & & \multicolumn{2}{c|}{ $\checkmark$} & \multicolumn{1}{c|}{ $\checkmark$} & $\checkmark$\\
			(Bayesian model) & & SP & & && & & \multicolumn{2}{c|}{ $\checkmark$} & \multicolumn{1}{c|}{ $\checkmark$} &\\ \vspace{0.3cm}
	\end{tabular}
	\caption{Summary of approximation guarantees and upper bounds (UB) for centralized mechanisms on static {\proKEP}s. Notation D and R stands for deterministic and random mechanisms, respectively. For the Bayesian settings, approximation ratios to social optima are not provided since these references provide conditions for almost social optimal outcomes while heavily relying on the proposed Bayesian model - hence, for these works, the table only reports the type of mechanisms build.}
	\label{Table:strategyproof}
\end{table}

	\paragraph{Decentralized (non-cooperative) Multi-agent {\proKEP}s} For the first time, \citet{Carvalho2017} modeled a cross-agent {\proKEP} as a non-cooperative game where the decisions on internal exchanges are decentralized; also here, players' utilities are given by the number of their patients matched. Concretely, first, players reveal their {\proKEP} pools, second, each player decides her internal exchanges, and third, the system decides the external exchanges among the remaining pairs.  By giving to the players the power of deciding their internal exchanges  after observing the overall pool, they have incentive to truthfully present their individual {\proKEP}s.\footnote{This is a simple observation for the 2-player case (the focus of \cite{Carvalho2017}). In this work, we provide a proof of this claim for any finite number of players.}  Furthermore, it is proven and argued that the game outcome (Nash equilibrium) is a social optimum. These results focus on the 2-player caseand $L$=2.

\paragraph{Cooperative Multi-agent {\proKEP}s} Finally, there have been investigations modeling multi-agent {\proKEP}s through the lens of cooperative game theory. In this case, the contribution of each player for the overall social welfare of a multi-agent {\proKEP} is evaluated and a protocol to reward the players accordingly is determined. The cooperative and non-cooperative settings are related as both analyze the individual role of a player in the game. \citet{Klimentova2019} devise a  compensation mechanism driven by a fairness metric assessing each players' contribution. Their approach was recently compared with the use of Shapley values to guide the mechanism~\cite{Biro2020}. In another paper, \citet{Biro:2019_Cooperative} determine an international matching that is as close as possible to the core solution; a core solution is an utility distribution for the players that makes full cooperation a self-enforcing strategy.  \citet{biro2019ip} analyze the impact of cross-border KEPs when participating countries may have different constraints and objective functions.

\paragraph{Our contribution and paper structure} While in the U.S. the market is fragmented due to hospitals' strategic behavior, in European countries and in Canada, there are single national {\proKEP}s where collaboration among hospitals is mandatory. In this context, in Section~\ref{sec:background}, we focus on the case where each player (\eg, hospital) aims to maximize the number of her patients receiving a kidney. We recover from \cite{Carvalho2017} the 2-player game setting considering $L=2$ and generalize their positive theoretical results for any finite number of players,\footnote{If $L>2$, our results may not hold as discussed later in the paper.} discuss the differences with centralized mechanisms, namely, on the misrepresentation of pairs, and provide a new interpretation within nationwide programs. We remark that for 2-players, it is easy to see that the game is potential and both players benefit equally from any international matching, properties used in the results of~\cite{Carvalho2017}. Instead, for an arbitrary number of players those arguments do not extend, so the algorithms built within the constructive proofs in this work are thus independent.

While generally the priority is the maximization of the number of transplants, other utilities are also considered in {\proKEP}s; see Figure 1 in~\cite{biro_et_al2019} and \cite{KPD_report2018}. In Section~\ref{sec:qualitativeInfor}, we discuss the impact in the game social welfare of a measure of transplant quality when it is introduced in the game and thus, players'utilities move from the cardinality case to a weighted one.  Associating weights  to transplantations has been considered in the context of KEPs  in an attempt to reflect equity (prioritize certain patients). For example,  \citet{Freedman2018} develop a framework to learn societal weights and \citet{Dickerson2014} propose weights to promote group fairness for hard-to-match patients. \citet{Sommer2020} provide interval ranges for exchanges' weights that simultaneously guarantee the maximum number of transplants.  Distinct from the literature, our motivation to introduce weights comes from the efforts to predict graft quality. \citet{Kurt2011} propose a game  where pairs within an $L$-way exchange decide the time to advance with the associated transplants taking into account the evolution of their health-status. Thus, their game outcome can serve to assign weights to the exchanges in a {\proKEP}.  \citet{Massie2016} build a risk index for living donor kidney transplantation and more recently,  \citet{Luck2017,Luck2018} use machine learning techniques  on the problem of predicting the outcome (graft quality) of a kidney transplant when a pair patient-donor is matched. Concretely, they predict the time of failure for a transplant given a specific patient and a specific donor. Such prediction widely enlarges the quantity of information that is present in the system and, of course, helps determining the benefit of each matching, highly distinguishing one from another.  Information on transplantations' quality will lead to consider a utility further away from cardinality (maximizing the number of transplants) and definitely more into a weighted sum (the overall benefit of the quality of transplants).  However, when a measure of graft quality is associated with transplantations, we prove that it becomes computationally hard to even verify an equilibrium. Consequently, it may not be reasonable to expect all players to have the computational power to determine an equilibrium. To the best of our knowledge,  this is the first time that the addition of  information on the quality of a transplant to the multi-agent setting is discussed and we believe it is important to settle its associated effect on the KEP complexity.

After the rigorous theoretical analysis of Sections~\ref{sec:background} and~\ref{sec:qualitativeInfor}, the contribution of the experimental work of  Section~\ref{sec:computational} is twofold. On the negative side, we corroborate our analysis with extensive computational experiments that highlight the importance of considering game theoretical concepts for planning exchanges, namely, that not all social optima are Nash equilibria, as well as its current limitations. On the positive side, 
	\begin{enumerate}
		\item When players maximize the number of transplants, we provide strategies in order to guarantee, in practice, that the game outcome is simultaneously an equilibrium and a social welfare optimum. Moreover, we observe a high level (and thus benefit) of international exchanges at equilibira: in average, equilibria maximizing the number of transplants have at least 50\% of international exchanges.
		\item When players use additional information on the quality of the transplants, despite the negative theoretical results, equilibria close to be social optima generally exist and our algorithmic approach can efficiently compute them in practice. Similarly, we also observe a high level (and thus benefit) of international exchanges at equilibira: in average, the equilibria  with largest social welfare use 52\% of international exchanges.
	\end{enumerate}  Thus, we contribute to the design of \emph{(i) a methodology that finds the Nash equilibrium closer to the social optimum and \emph{(ii)} a protocol for proper evaluation of {\proKEP} outcomes.\footnote{For example, in the cardinality case, there might be multiple exchange plans that maximize the number of transplants. Therefore, one must be prudent when drawing conclusions based solely on one of the solutions returned by a solver.}}

Section~\ref{sec:conclusions} concludes this paper, summarizing our contributions and drawing the current open questions.

\section{Cardinality game:  {\proKEGcard}}
\label{sec:background}

\subsection{Preliminaries}

In the remaining of the paper, we will focus on KEPs restricted to 2-way exchanges. Mathematically,  limiting to $L=2$ has the advantage that the problem reduces to matchings on graphs, which is a well understood structure (that will be recalled later in the paper) and, for such case, optimizing the weighted benefit of selected exchanges can be solved in polynomial time~\cite{Papadimitriou:1982}, while for $L>2$ the problem becomes NP-hard~\cite{Abraham2007}. In practice (see, \eg, ~\cite{biro_et_al2019}), most KEPs prioritize 2-way exchanges (\eg, the Scandiatransplant KEP) as they lead to more robust solutions (in this case, a vertex failure   minimizes the damage to the overall solution) and indeed, there are national KEPs that only consider $L=2$ (\eg, France~\cite{FirstHandbook}). Furthermore, \citet{Kratz_2019} motivate and highlight the potential gains achievable with 2-way exchanges once the blood group compatibility limitation will be surpassed with the current medical technology allowing it. 

\citet{Roth_Sonmez_Unver_2005_a,Sonmez_Unver_2011} were the first considering the strategical aspects of a multi-player {\proKEP}. Next, we borrow the setup described by them but model the players interaction in a decentralized way as in~\cite{Carvalho2017}.

Let $N=\lbrace 1,2, \ldots, n \rbrace$ be the finite set of players. For each $p \in N$, $G^p=(V^p,E^p)$ is player $p$'s internal compatibility graph, \ie, the set of vertices $V^p$ represents the set of incompatible patient-donor pairs in player $p$'s pool and the set of edges $E^p$ represents the pairwise compatibilities within $V^p$. Let $E^I$ be the set of external (international) edges, \ie, $(a,b) \in E^I$ if  $a \in V^i$ and $b \in V^j$ with $i \neq j$. The set $E^I_p$ represents the subset of $E^I$ incident with vertices in $V^p$. The overall game graph is $G=(V,E)$ with $V= \cup_{p=1}^n V^p$ and $E=E^I \cup_{p=1}^n E^p$.

A subset $M^p \subseteq E^p$ is called a matching of graph $G^p$ if no two edges of it share the same node. A $M^p$-alternating path in $G^p$ is a simple path in $G^p$ that alternates between edges in $M^p$ and edges not in $M^p$. A $M^p$-augmenting path  in $G^p$  is a $M^p$-alternating path that starts in a $M^p$-unmatched vertex and ends in a $M^p$-matched vertex. In this non-cooperative game, each player $p$ strategy is a matching $M^p$ of $G^p$. The game goes as follows. First, the players simultaneously share their pools $G^p$ and $E^I$ is determined. Then, independently, each player $p$ selects a matching $M^p$ of $G^p$ and, simultaneously,  players  reveal their internal matching to the system that we call independent agent ({\IndA}). Finally, {\IndA} selects a matching of maximum cardinality $M^I(M^1, \ldots, M^n)$ in the remaining international graph $(V,E^I(M^1, \ldots, M^n))$ where $E^I(M^1, \ldots, M^n)= \lbrace (a,b) \in E^I: a,b \textrm{ not incident with edges in } \cup_{p=1}^n M^p \rbrace$. Whenever the context makes it clear, we drop the dependence of the {\IndA} matching on the players strategies and simply write $M^I$. Let $M^I_p$ be the subset of $M^I$ that is incident with vertices in $V^p$. In order to ensure that $M^I$ is unambiguously computed, for $\mathbf{M}=(M^1, \ldots, M^n)$, we define $\mathcal{A}(\mathbf{M})$ as the deterministic algorithm that computes $M^I$; given a matching $M$ such that $M \cap E^I \neq \emptyset$, then  we define $\mathbf{M}=(M\cap E^1, \ldots, M\cap E^n)$, \ie,  the matched edges in $E^I$ are ignored.
For sake of simplicity and whenever the context makes it clear, we use the notation $M^I$ to represent the {\IndA} decision that implicitly uses algorithm $\mathcal{A}$ and depends on the players internal matchings.

The utility function of each player is the number of her patients receiving a kidney, \ie,
\begin{equation}
	2 \vert M^p \vert + \vert M^I_p \vert.
	\label{Obj:cardinality}
\end{equation}
This choice of the utility is motivated by the fact that in a KEP, one of the most common primary goals is to maximize the number of transplants. We call the game with utilities~\eqref{Obj:cardinality}, cardinality game and denote it by  {\proKEGcard}. Nevertheless, in Section~\ref{sec:qualitativeInfor}, more general utility functions will be considered.

Note that by construction, it seems intuitive that each player $p$ will truthfully reveal the $G^p$ since it is player $p$ that controls the internal exchanges that will take place and, by providing true information, the player can evaluate the full potential of the game, namely, through external exchanges. The idea is that in this way player $p$ can anticipate which vertices to leave to be matched externally. For now, let us assume that this holds, and in the next section, we come back to this point and provide a formal prove to this intuition.

A player $p$ \emph{best response} to the other players matching $M^{-p}$ is a matching $M^{*^p}$ of $G^p$ that maximizes her utility. In the  {\proKEGcard}, it is
	\begin{equation}
		M^{*^p} = \argmax_{M^p} \lbrace 2\vert M^p\vert+\vert M^I_p(M^{-p},M^p) \vert: M^p\textrm{ is a matching of } G^p \rbrace.
	\end{equation}

The goal of a non-cooperative game is to determine the game outcome. The concept of solution used by \citet{Carvalho2017} is that of \textit{pure Nash equilibrium} ({\NE}). The vector of internal matchings $(M^1,M^2, \ldots, M^n)$ is a {\NE} if for each $p \in N$, there is no incentive to deviate from $M^p$. For  {\proKEGcard}, it means 
\begin{equation}
	2 \vert M^p \vert + \vert M^I_p \vert \geq 2 \vert R^p \vert + \vert M^I_p(M^{-p}, R^p) \vert \qquad \forall p \in N,  \forall \textrm{ matching } R^p \textrm{ of } G^p,
\end{equation}
where the operator $(\cdot)^{-p}$ denotes $(\cdot)$ except player $p$.

The \textit{social welfare} value for a  {\proKEGcard} where players play $(M^1,M^2, \ldots, M^n)$ is the total number of patients receiving a kidney. A \textit{social optimum} is a maximum matching of the overall graph game $G = (V,E)$, \ie, a matching in which the social welfare value is maximized. A \textit{social welfare equilibrium} ({\SWE}) is a {\NE} that is also a social optimum.

\subsection{$N$-players \proKEGcard}

\citet{Carvalho2017} proved that there is always a {\SWE} for the 2-player game that can be computed in polynomial time, and they briefly discussed the potential extension to the
$N$-player case. Their proof technique for 2 players heavily relies on the \emph{potential game} property of the 2-player case: a game is potential if there is a real-valued function from the domain of strategy profiles such  that  its  value  strictly increases   when  a  player  unilaterally  switches to a strategy that strictly increases her utility. In the next example, we show that their conjectured potential function for the $N$-player case does not hold.

\begin{example}
		Consider the 3-player kidney exchange game represented in the top of Figure~\ref{fig:pot}. Note that only player 1 (that denoted by white circles) can behave strategically as the remaining players have no internal edges. 
		
		In~\cite{Carvalho2017}, the authors conjecture that the generalization of their potential function to the $N$-player case is $\Phi(M^1,\ldots,M^n)=\sum_{p \in N} 2\vert M^p \vert+\vert M^I(M^1, \ldots,M^n) \vert.$ However, this example disproves the conjecture. In the second image of Figure~\ref{fig:pot}, player 1 selects $M^1=\lbrace (5,13),(8,9) \rbrace$, making the {\IndA} matching equal to $M^I=\lbrace (2,3),(4,12),(6,7),(10,11) \rbrace$; note that the {\IndA} has multiple matchings of maximum size and hence, in this example, we suppose that the {\IndA} has a deterministic way to select among them given $M^1$. Player 1 can increase her utility by unilaterally deviating to $M^1=\lbrace (4,5),(13,14),(8,9)\rbrace$; see the third image. Finally, player 1 can further increase her number of patients matched by playing $M^1=\lbrace (13,14),(9,10) \rbrace$; see last image. Observe the value of $\Phi$ over these unilateral deviations of player 1: $\Phi$ did not strictly increase whenever player 1 unilaterally deviated to a strictly better strategy. This contradicts the definition of potential function.
		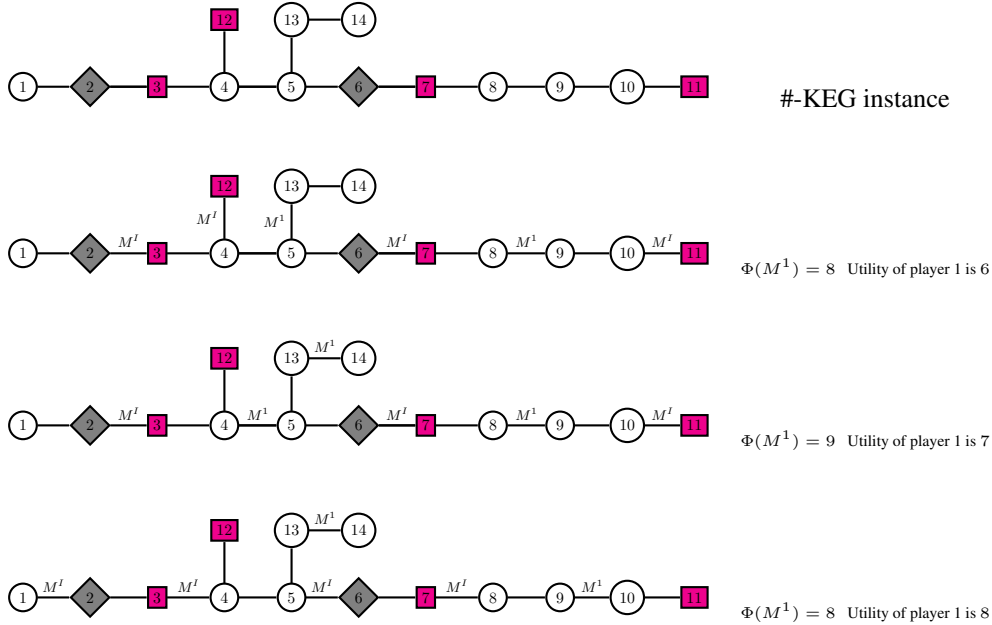
\begin{figure} \centering
			\begin{tabular}{cc}
				\begin{tikzpicture}[-,>=stealth',shorten >=0.3pt,auto, scale=0.7,node distance=1.5cm,
					thick,countryA node/.style={circle,draw},countryB node/.style={diamond,draw,fill=gray},countryC node/.style={rectangle,draw,fill=magenta}, scale=.85, transform shape]
					\tikzstyle{matched} = [draw,line width=3pt,-]
					
					\node[countryA node] (1) {$1$};
					\node[countryB node] (2) [right of=1] {$2$};
					\node[countryC node] (3) [right of=2] {$3$};
					\node[countryA node] (4) [right of=3] {$4$};
					\node[countryA node] (5) [right of=4] {$5$};
					\node[countryB node] (6) [right of=5] {$6$};
					\node[countryC node] (7) [right of=6] {$7$};
					\node[countryA node] (8) [right of=7] {$8$};
					\node[countryA node] (9) [right of=8] {$9$};
					\node[countryA node] (10) [right of=9] {$10$};
					\node[countryC node] (11) [right of=10] {$11$};
					\node[countryC node] (12) [above of=4] {$12$};
					\node[countryA node] (13) [above of=5] {$13$};
					\node[countryA node] (14) [right of=13] {$14$};
					
					\path[-,every node/.style={font=\sffamily\small}]
					(1) edge node  {} (2)
					(2) edge node  {} (3)
					(3) edge node  {} (4)
					(4) edge node  {} (5)
					(5) edge node  {} (6)
					(6) edge node  {} (7)
					(2) edge node  {} (3)
					(4) edge node  {} (5)
					(6) edge node  {} (7)
					(7) edge node  {} (8)
					(8) edge node  {} (9)
					(9) edge node  {} (10)
					(10) edge node  {} (11)
					(4) edge node  {} (12)
					(5) edge node  {} (13)
					(13) edge node  {} (14);
				\end{tikzpicture}
				& \#-KEG instance \\[2em]
				\begin{tikzpicture}[-,>=stealth',shorten >=0.3pt,auto, scale=0.7,node distance=1.5cm,
					thick,countryA node/.style={circle,draw},countryB node/.style={diamond,draw,fill=gray},countryC node/.style={rectangle,draw,fill=magenta}, scale=.85, transform shape]
					\tikzstyle{matched} = [draw,line width=3pt,-]
					
					\node[countryA node] (1) {$1$};
					\node[countryB node] (2) [right of=1] {$2$};
					\node[countryC node] (3) [right of=2] {$3$};
					\node[countryA node] (4) [right of=3] {$4$};
					\node[countryA node] (5) [right of=4] {$5$};
					\node[countryB node] (6) [right of=5] {$6$};
					\node[countryC node] (7) [right of=6] {$7$};
					\node[countryA node] (8) [right of=7] {$8$};
					\node[countryA node] (9) [right of=8] {$9$};
					\node[countryA node] (10) [right of=9] {$10$};
					\node[countryC node] (11) [right of=10] {$11$};
					\node[countryC node] (12) [above of=4] {$12$};
					\node[countryA node] (13) [above of=5] {$13$};
					\node[countryA node] (14) [right of=13] {$14$};
					
					\path[-,every node/.style={font=\sffamily\small}]
					(1) edge node  {} (2)
					(2) edge node  {$M^I$} (3)
					(3) edge node  {} (4)
					(4) edge node  {} (5)
					(5) edge node  {} (6)
					(6) edge node  {$M^I$} (7)
					(4) edge node  {} (5)
					(6) edge node  {} (7)
					(7) edge node  {} (8)
					(8) edge node  {$M^1$} (9)
					(9) edge node  {} (10)
					(10) edge node  {$M^I$} (11)
					(4) edge node  {$M^I$} (12)
					(5) edge node  {$M^1$} (13)
					(13) edge node  {} (14);
				\end{tikzpicture} & \tiny $\Phi(M^1)=8$ \ \  Utility of player 1 is $6$\\[2em]
				\begin{tikzpicture}[-,>=stealth',shorten >=0.3pt,auto, scale=0.7,node distance=1.5cm,
					thick,countryA node/.style={circle,draw},countryB node/.style={diamond,draw,fill=gray},countryC node/.style={rectangle,draw,fill=magenta}, scale=.85, transform shape]
					\tikzstyle{matched} = [draw,line width=3pt,-]
					
					\node[countryA node] (1) {$1$};
					\node[countryB node] (2) [right of=1] {$2$};
					\node[countryC node] (3) [right of=2] {$3$};
					\node[countryA node] (4) [right of=3] {$4$};
					\node[countryA node] (5) [right of=4] {$5$};
					\node[countryB node] (6) [right of=5] {$6$};
					\node[countryC node] (7) [right of=6] {$7$};
					\node[countryA node] (8) [right of=7] {$8$};
					\node[countryA node] (9) [right of=8] {$9$};
					\node[countryA node] (10) [right of=9] {$10$};
					\node[countryC node] (11) [right of=10] {$11$};
					\node[countryC node] (12) [above of=4] {$12$};
					\node[countryA node] (13) [above of=5] {$13$};
					\node[countryA node] (14) [right of=13] {$14$};
					
					\path[-,every node/.style={font=\sffamily\small}]
					(1) edge node  {} (2)
					(2) edge node  {$M^I$} (3)
					(3) edge node  {} (4)
					(4) edge node  {$M^1$} (5)
					(5) edge node  {} (6)
					(6) edge node  {$M^I$} (7)
					(4) edge node  {} (5)
					(6) edge node  {} (7)
					(7) edge node  {} (8)
					(8) edge node  {$M^1$} (9)
					(9) edge node  {} (10)
					(10) edge node  {$M^I$} (11)
					(4) edge node  {} (12)
					(5) edge node  {} (13)
					(13) edge node  {$M^1$} (14);
				\end{tikzpicture} & \tiny $\Phi(M^1)=9$ \ \  Utility of player 1 is $7$\\[2em]
				\begin{tikzpicture}[-,>=stealth',shorten >=0.3pt,auto, scale=0.7,node distance=1.5cm,
					thick,countryA node/.style={circle,draw},countryB node/.style={diamond,draw,fill=gray},countryC node/.style={rectangle,draw,fill=magenta}, scale=.85, transform shape]
					\tikzstyle{matched} = [draw,line width=3pt,-]
					
					\node[countryA node] (1) {$1$};
					\node[countryB node] (2) [right of=1] {$2$};
					\node[countryC node] (3) [right of=2] {$3$};
					\node[countryA node] (4) [right of=3] {$4$};
					\node[countryA node] (5) [right of=4] {$5$};
					\node[countryB node] (6) [right of=5] {$6$};
					\node[countryC node] (7) [right of=6] {$7$};
					\node[countryA node] (8) [right of=7] {$8$};
					\node[countryA node] (9) [right of=8] {$9$};
					\node[countryA node] (10) [right of=9] {$10$};
					\node[countryC node] (11) [right of=10] {$11$};
					\node[countryC node] (12) [above of=4] {$12$};
					\node[countryA node] (13) [above of=5] {$13$};
					\node[countryA node] (14) [right of=13] {$14$};
					
					\path[-,every node/.style={font=\sffamily\small}]
					(1) edge node  {$M^I$} (2)
					(2) edge node  {} (3)
					(3) edge node  {$M^I$} (4)
					(4) edge node  {} (5)
					(5) edge node  {$M^I$} (6)
					(6) edge node  {} (7)
					(7) edge node  {$M^I$} (8)
					(8) edge node  {} (9)
					(9) edge node  {$M^1$} (10)
					(10) edge node  {} (11)
					(4) edge node  {} (12)
					(5) edge node  {} (13)
					(13) edge node  {$M^1$} (14);
				\end{tikzpicture} & \tiny $\Phi(M^1)=8$ \ \  Utility of player 1 is $8$\\
			\end{tabular}
			\caption{(Top) Compatability graph where player 1 owns the white circles, player 2 owns gray diamonds and player 3 owns the magenta squares. (Remain) Unilateral deviations of player 1.}
			\label{fig:pot}
		\end{figure}
	\end{example}

We take on from this discussion and we formally prove the generalization of the result in \cite{Carvalho2017} without using a potential function argument.

\begin{theorem}
	For any game with a set of players $N$ and any deterministic algorithm $\mathcal{A}$ (cardinality maximizing) for the {\IndA}, there is a {\SWE} and it can be computed in polynomial time.
	\label{THM:SWE_EXISTENCE}
\end{theorem}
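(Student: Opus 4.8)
The plan is to build, for the given $\mathcal{A}$, a profile $\mathbf{M}=(M^{1},\dots,M^{n})$ whose outcome is a maximum matching of $G$ and from which no player wants to deviate. Two observations organize the argument. First, \emph{social optimality comes for free from any decomposition of a maximum matching}: if $M^{*}$ is a maximum matching of $G$ and the players play $M^{p}=M^{*}\cap E^{p}$, then $M^{*}\cap E^{I}$ is already a \emph{maximum} matching of the residual international graph $\Gamma$ (otherwise a larger matching of $\Gamma$ together with $\bigcup_{p}M^{p}$ would exceed $\nu(G)$), so whatever maximum matching $\mathcal{A}$ returns has the same size and the outcome has $\nu(G)$ edges. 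Second, \emph{deviations are uniformly bounded}: against a fixed $M^{-p}$, let $G_{p}$ be the subgraph of $G$ induced on the vertices left uncovered by $\bigcup_{q\neq p}M^{q}$; since $p$'s strategy together with $\mathcal{A}$'s response is always a matching of $G_{p}$, player $p$ can never cover more than $t_{p}(M^{-p})$ of her vertices, where $t_{p}(M^{-p})$ is the largest number of $V^{p}$-vertices over all matchings of $G_{p}$. Hence it suffices to reach a profile that is the decomposition of a maximum matching and in which every player $p$ already attains $t_{p}(M^{-p})$.

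The core is a constructive, iterative ``repair'' procedure that uses $\mathcal{A}$ as a black box. Start from the decomposition of an arbitrary maximum matching of $G$ and compute the outcome via $\mathcal{A}$. If some player $p$ is \emph{deficient} --- her current number of covered vertices is below $t_{p}(M^{-p})$, equivalently she has a profitable deviation by the bound above --- then rebuild the matching: using alternating/augmenting paths (and the Gallai--Edmonds structure of $\Gamma$ to control which vertices every maximum matching of $\Gamma$ must cover), produce a new maximum matching of $G$ that covers strictly more of $V^{p}$, possibly by having $p$ keep \emph{fewer} internal edges and expose the relevant vertices to the {\IndA} --- exactly the phenomenon seen in the example of Figure~\ref{fig:pot}. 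Recompute the outcome and repeat with the next deficient player. When no player is deficient, the second observation says the profile is a {\NE}, and the first says it is a social optimum; so it is an {\SWE}. Each iteration is a bounded number of maximum-matching / Gallai--Edmonds computations, hence polynomial.

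The heart --- and the main obstacle --- is proving that the repair loop terminates in polynomially many steps. The $N$-player {\proKEGcard} is \emph{not} a potential game, as the preceding example shows, so one cannot just iterate best responses and invoke a potential; worse, because the outcome always covers exactly $2\nu(G)$ vertices, whatever one deficient player gains must be lost by others, so no sum-of-utilities or convex surrogate is monotone. The argument therefore needs (a) a structural lemma guaranteeing that the repair for a deficient $p$ can always be carried out so that the vertices it takes away come from players that still have ``slack,'' and (b) a genuinely monotone integer quantity --- plausibly the lexicographic/leximin-type value of the covered-vertex vector among maximum matchings of $G$, or a refinement tracking $\sum_{p}\big(t_{p}-(\text{coverage of }p)\big)$ --- that strictly improves at every repair while social optimality is preserved, and which is bounded by $O(n\lvert V\rvert)$. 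One must also verify that the repair stays compatible with the adversarial tie-breaking of $\mathcal{A}$: since $\mathcal{A}$ is fixed but arbitrary, the residual international graph has to be maneuvered so that the maximum matching $\mathcal{A}$ is forced to return actually delivers to $p$ --- and does not silently re-deprive some previously-satisfied player --- the vertices the repair intended. Pinning down (a), (b) and this compatibility is where the real work lies; everything else is standard matching theory.
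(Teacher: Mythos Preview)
Your outline is honest about where the gap is, and that gap is real. You explicitly leave the termination argument --- the choice of a monotone integer quantity that strictly improves at every repair while social optimality is preserved --- as ``where the real work lies,'' and neither of your candidates is verified. In fact your own observation that total coverage is always $2\nu(G)$ shows that $\sum_p(t_p-\text{coverage of }p)$ moves only through the $t_p$'s, which themselves depend on $M^{-p}$ and change in an uncontrolled way when other players are repaired; and a leximin-of-coverage argument would have to rule out cycles in which the repaired player's gain is exactly offset by a single other player's loss and then vice versa. There is also a smaller issue with your definition of $G_p$: as the \emph{induced} subgraph of $G$ it contains edges of $E^{j}$ for $j\neq p$, which neither $p$ nor the {\IndA} may use, so $t_p(M^{-p})$ can be strictly unreachable and ``no player deficient'' may never hold. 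Restricting $G_p$ to $E^{p}\cup E^{I}$ fixes the upper bound, but then ``deficient $\Leftrightarrow$ has a profitable deviation'' still fails in the $\Rightarrow$ direction because of $\mathcal{A}$'s tie-breaking, so your stopping criterion and your repair trigger are not aligned.

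The paper sidesteps the whole termination difficulty by a different, one-shot construction. It first recalls (Theorem~\ref{thm:Carvalho2KEG}) that from a maximum matching $M$ a player $p$ has a profitable deviation exactly when there is a certain $M$-alternating path in $G^{p}(M)$ ending at an internationally matched opponent vertex, provided $\mathcal{A}$ agrees. It then builds an auxiliary graph $G^{\mathcal{D}}$ by attaching a dummy leaf to each endpoint of every edge of $M\cap E^{I}$, and simply runs the standard augmenting-path routine on $G^{\mathcal{D}}$ starting from $M$. Each augmenting path in $G^{\mathcal{D}}$ (from an $M$-unmatched vertex of $V$ to a dummy) is precisely one of the ``bad'' alternating paths; applying them via symmetric difference leaves the cardinality in $G$ unchanged, and when no augmenting path remains no player has a deviation of the characterized form. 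Termination is then just the termination of Edmonds' algorithm on $G^{\mathcal{D}}$, which is trivially polynomial. Compatibility with an arbitrary fixed $\mathcal{A}$ is handled afterwards: one starts from a maximum matching whose international part equals $\mathcal{A}$'s output, partitions each path into player-pieces, and applies a piece only when $\mathcal{A}$ actually returns the intended international edges. The key difference from your plan is that the paper does \emph{not} iterate player by player with a delicate potential; it kills all instabilities simultaneously by a single maximum-matching computation on an extended graph, which is exactly what makes the polynomial bound immediate.
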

\proof
See the proof in Appendix~\ref{app:SWE}.

We note that the underlying polynomial time algorithm of Theorem~\ref{THM:SWE_EXISTENCE} does not use a best-response dynamics as in \cite{Carvalho2017}. In fact, in each iteration, it can break simultaneously a series of instabilities, \ie, players incentive to deviate from a current matching.

{\SWE} are Pareto efficient and, for the 2-player case, \citet{Carvalho2017} build an algorithm that, given any {\NE} that is not a social optimum as input, outputs  a {\SWE} that dominates it. Thus, it is reasonable to assume that players will focus on  {\SWE}. This phenomenon is called the focal-point-effect; see \cite{Schelling1960}.

\subsection{Market Design: impact on social welfare}

The way in which players' interaction is designed in a game significantly impacts the social welfare. Indeed, by looking at Table~\ref{Table:strategyproof}, it can be concluded that strategy-proof mechanisms cannot guarantee that the social optimum is attained. Although individually rational mechanisms can guarantee a social optimum, note that this does not ensure that players will participate with their full pool of incompatible patient-donor pairs. Let us clarify these statements by recovering an example from~\cite{Roth_Sonmez_Unver_2005_a,Sonmez_Unver_2011}:
\begin{example}[\citet{Roth_Sonmez_Unver_2005_a,Sonmez_Unver_2011}]
	Consider a 2-player kidney exchange game. The graph in Figure~\ref{CirclesDiamonds} represents the overall compatibility graph.
	
	Both players alone can match 2 vertices. This means that  an individually rational mechanism must ensure that each player has at least two of her vertices matched. Assume that the central authority outputs a maximum matching. The graph has 4 maximum matchings, $M= \lbrace (2,3), (4,5), (6,7)  \rbrace$, $R= \lbrace (1,2), (3,4), (5,6)  \rbrace$, $S=\lbrace (1,2),(4,5),(6,7) \rbrace$ and $T=\lbrace (1,2),(3,4),(6,7) \rbrace$, and all have the property of being individually rational. However, none of these four social optimal solutions is strategy-proof: under $M$ and $T$, player 1 has incentive to hide and match $(5,6)$, forcing the mechanism to select $\lbrace (1,2), (3,4) \rbrace$; under $R$ and $S$, player 2 has incentive to hide and match $(2,3)$, forcing the mechanism to select $\lbrace (4,5), (6,7)  \rbrace$.

	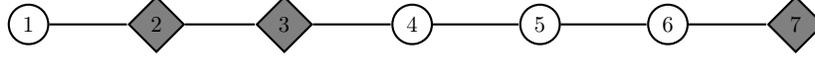
\begin{figure}
		\centering
		\begin{tikzpicture}[-,>=stealth',shorten >=0.3pt,auto,node distance=2cm,
			thick,countryA node/.style={circle,draw},countryB node/.style={diamond,draw,fill=gray}, scale=.85, transform shape]
			\tikzstyle{matched} = [draw,line width=3pt,-]
			
			\node[countryA node] (1) {$1$};
			\node[countryB node] (2) [right of=1] {$2$};
			\node[countryB node] (3) [right of=2] {$3$};
			\node[countryA node] (4) [right of=3] {$4$};
			\node[countryA node] (5) [right of=4] {$5$};
			\node[countryA node] (6) [right of=5] {$6$};
			\node[countryB node] (7) [right of=6] {$7$};
			
			\path[-,every node/.style={font=\sffamily\small}]
			(1) edge node  {} (2)
			(2) edge node  {} (3)
			(3) edge node  {} (4)
			(4) edge node  {} (5)
			(5) edge node  {} (6)
			(6) edge node  {} (7);
			(2) edge node  {} (3)
			(4) edge node  {} (5)
			(6) edge node  {} (7);
		\end{tikzpicture}

		\caption{Player 1 owns the white circles and player 2 owns the gray diamonds.}
		\label{CirclesDiamonds}
	\end{figure}
	
\end{example}

\begin{figure}
	\centering
	\begin{minipage}{.5\textwidth}
		\centering
		\begin{tikzpicture}[-,>=stealth',shorten >=0.3pt,auto,node distance=2cm,
			thick,countryA node/.style={circle,draw},countryB node/.style={diamond,draw,fill=gray},countryC node/.style={rectangle,draw,fill=magenta} , scale=.85, transform shape]
			\tikzstyle{matched} = [draw,line width=3pt,-]
			
			\node[countryA node] (1) {$1$};
			\node[countryC node] (3) [right of=1] {$3$};
			\node[countryC node] (2) [above of=3] {$2$};
			\node[countryB node] (4) [below of=3] {$4$};
			
			\path[-,every node/.style={font=\sffamily\small}]
			(1) edge node  {} (2)
			(1) edge node  {} (3)
			(1) edge node  {$M^I$} (4);
		\end{tikzpicture}
	\end{minipage}%
	\begin{minipage}{0.5\textwidth}
		\centering
		\begin{tikzpicture}[-,>=stealth',shorten >=0.3pt,auto,node distance=2cm,
			thick,countryA node/.style={circle,draw},countryB node/.style={diamond,draw,fill=gray},countryC node/.style={rectangle,draw,fill=magenta} , scale=.85, transform shape]
			\tikzstyle{matched} = [draw,line width=3pt,-]
			
			\node[countryA node] (1) {$1$};
			\node[countryC node] (3) [right of=1] {$3$};
			\node[countryC node, dashed] (2) [above of=3] {$2$};
			\node[countryB node] (4) [below of=3] {$4$};
			
			\path[-,every node/.style={font=\sffamily\small}]
			(1) edge[dashed]  node  {} (2)
			(1) edge node  {$M^I$} (3)
			(1) edge node  {} (4);
		\end{tikzpicture}
	\end{minipage}%
	\caption{Player 1 owns the white circles, player 2 owns the gray diamonds and player 3 owns the magenta squares. In the left figure, when all players reveal their pairs, {\IndA} selects the matching $\lbrace (1,4) \rbrace$. In the right figure, player 3 hides vertex 2 and the {\IndA} selects the matching $\lbrace (1,3) \rbrace$. Then, this $\mathcal{A}$ of  {\IndA} is not giving incentive for full participation. On the other hand, if for the left case, $\mathcal{A}$ selects  $\lbrace (1,3) \rbrace$, independently on the decision of  $\mathcal{A}$ for the right case, all players have incentive (or are indifferent) to reveal their full pools. }
	\label{fig:fullInformation}
\end{figure}
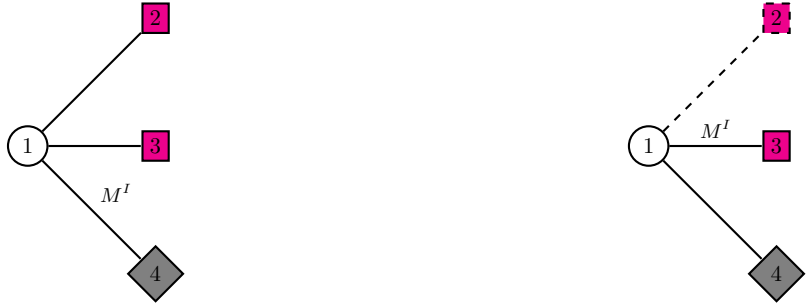

On the other hand, the decentralized non-cooperative game-theory setup, \#-KEG, overcomes the drawbacks of individually and strategy-proof mechanisms. The intuition on why \#-KEG makes it dominant for each player $p$ to reveal her full pool $G^p$ is on the distinction with centralized mechanisms: under \#-KEG, players have full control of their internal matchings and therefore, they must benefit from sharing $G^p$ and anticipating the potential external matchings. For example, note that player 1 from Figure~\ref{CirclesDiamonds} does not need to hide vertices 5 and 6 as \#-KEG allows her to internally match these vertices. The
	idea is that under our game rules, hiding can be seen as internally matching vertices. Nevertheless, this intuition has to be cautiously taken.  Figure~\ref{fig:fullInformation} shows an example where the design of $\mathcal{A}$ plays an important role on incentivizing or desincentivizing players to fully reveal their pools of patient-donor pairs. We have the following sufficient condition,\footnote{We note that this result is in line with the recent paper~\cite{Smeulder2020} where it is studied the complexity of determining the set of vertices that a player must misrepresent within a centralized mechanism. The authors show that for $L=2$, this optimization problem can be reduced to computing the optimal set of internal matchings for that player in \#-KEG.}

	\begin{lemma}
		When there are multiple optimal solutions for the {\IndA}, if $\mathcal{A}$ lexicographically optimizes the player with more vertices in the game, then there is no strict incentive to hide vertices under \#-KEG. 
		\label{lem:full}
	\end{lemma}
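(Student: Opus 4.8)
The plan is to prove that, for every player $p$, revealing the full pool $G^p$ weakly dominates any strategy that hides part of it; the lemma then follows. Fix the pools revealed and the internal matchings chosen by the players other than $p$, let $S\subseteq V^p$ be a set that $p$ considers hiding (together with all internal and international edges incident to $S$), and let $G$ and $G_S$ be the overall game graphs when $p$ reveals $V^p$ in full, respectively hides $S$. Writing $u_p(G)$ and $u_p(G_S)$ for player $p$'s best-response utility in the two cases, it is enough to show $u_p(G)\ge u_p(G_S)$. I would do this by taking an optimal response of $p$ in $G_S$ and turning it into an at-least-as-good response in $G$.

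Concretely, let $M^p$ be an optimal matching for $p$ in $G_S$ with resulting {\IndA} matching $M^I$, so $u_p(G_S)=2\lvert M^p\rvert+\lvert M^I_p\rvert$. Let $N$ be a maximum matching of $G^p[S]$ and put $T=S\setminus V(N)$, the part of $S$ that $p$ cannot clear internally. The response I propose for $p$ in $G$ is $\bar M^p:=M^p\cup N$; it is a valid matching of $G^p$ because $M^p$ avoids $S$. The structural fact I would use is that, when $p$ plays $\bar M^p$ in $G$, the vertices left for the {\IndA} are exactly those left for it in $G_S$ under $M^p$, plus the vertices of $T$ and their incident international edges --- and every vertex of $T$ belongs to $p$. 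In particular $M^I$ is still a feasible {\IndA} matching in $G$ (though no longer necessarily a maximum one).

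Two observations then settle the case in which $p$ is the player with the most vertices. First, augmenting a matching never unmatches a vertex: augmenting $M^I$ inside the international graph of $G$ up to a maximum-cardinality matching $\tilde M^I$ yields $V(\tilde M^I)\supseteq V(M^I)$, hence $\lvert\tilde M^I_p\rvert\ge\lvert M^I_p\rvert$. Second, when $p$ owns the most vertices in $G$, the rule forces $\mathcal{A}$ to pick, among maximum-cardinality international matchings of $G$, one maximizing the number of $p$'s matched vertices; since $\tilde M^I$ is a competitor with $\lvert\tilde M^I_p\rvert\ge\lvert M^I_p\rvert$, the actual choice $\hat M^I$ of $\mathcal{A}$ satisfies $\lvert\hat M^I_p\rvert\ge\lvert M^I_p\rvert$. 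Therefore $u_p(G)\ge 2\lvert\bar M^p\rvert+\lvert\hat M^I_p\rvert=2\lvert M^p\rvert+2\lvert N\rvert+\lvert\hat M^I_p\rvert\ge 2\lvert M^p\rvert+\lvert M^I_p\rvert=u_p(G_S)$, the extra $2\lvert N\rvert\ge 0$ only helping.

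For a player $p$ who is not the largest, I would use the same response $\bar M^p$ but also track the lexicographic maximization that $\mathcal{A}$ performs for the higher-priority players before it ever looks at $p$'s count; note that the relative order of the players other than $p$ is identical in $G$ and $G_S$, and that revealing $S$ can only move $p$ up in the size-ordering, so the set of players served before $p$ in $G$ is a subset (in the same order) of the set served before $p$ in $G_S$. Starting from $\tilde M^I$, I would reroute along alternating paths, coordinate by coordinate along this prefix, to meet the values that $\mathcal{A}$ enforces in $G$ for the higher-priority players, each reroute being chosen so as not to shrink the set of $p$-vertices matched. \textbf{The main obstacle} is precisely to make this last step rigorous: the maximum-cardinality matching that $\mathcal{A}$ singles out in $G$ is pinned down by the higher-priority players first, and one must argue that it still matches at least $\lvert M^I_p\rvert$ of $p$'s vertices. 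The symmetric difference of the two matchings decomposes into alternating paths and even cycles, but a reroute that raises a higher-priority player's count can, if picked carelessly, unmatch a vertex of $p$; showing that such a reroute can always be chosen to avoid this --- leaning on the fact that the extra vertices $T$ that $p$ contributes in $G$ are all $p$'s own --- is the crux.
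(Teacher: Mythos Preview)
Your strategy coincides with the paper's: mimic the ``hide $S$'' play inside the full game by matching $S$ internally as much as possible ($\bar M^p=M^p\cup N$) and then argue that the \IndA\ returns at least as many $p$-vertices as before.  The case you treat completely --- $p$ top–ranked --- is exactly the easy case in the paper as well, and your argument for it is correct.

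You correctly isolate the crux as the situation where players of higher lexicographic priority sit above $p$.  The paper does \emph{not} resolve it the way you sketch (rerouting $\tilde M^I$ coordinate-by-coordinate along alternating paths); instead it reduces the problem so that this hard case largely disappears.  The paper peels off the vertices of $V'$ one at a time: first it removes from the hidden set any vertex that either creates an $\hat M^p$-augmenting path in the revealed internal graph or is unmatched by the maximum matching $\bar M^p$ of the hidden subgraph, arguing that revealing such a vertex cannot hurt $p$.  After this peeling, every remaining hidden vertex is covered by $\bar M^p$; playing $\hat M^p\cup\bar M^p$ in the full game then leaves the \IndA\ with \emph{exactly the same} international graph as in the hidden scenario, so the only change is that $p$ has moved up in the size ordering.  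Since the set of maximum matchings is unchanged and the lex rule now filters by fewer higher-priority players before reaching $p$, $p$'s count can only go up.  This is the payoff of the reduction and is the part your proposal is missing.

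That said, the paper's own justification of the peeling step is brief: when it reveals a single vertex $v$ and notes that $\hat M^I\oplus\mathfrak p$ matches strictly more vertices, it exhibits a good maximum matching but does not argue that the lex rule \emph{selects} one at least as good for $p$; it simply asserts that ``by the hypothesis on $\mathcal A$, increasing the number of revealed vertices can only benefit a player.''  So the very monotonicity you flag as the obstacle is also asserted rather than proved in the paper.  Your proposal is therefore not wrong in spirit --- it identifies the same gap --- but the paper's vertex-by-vertex reduction is what lets one confine the genuinely delicate comparison to a single added $p$-vertex, and then to dispose of the remaining case (all hidden vertices internally matched) cleanly via the ordering argument.
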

	\begin{proof}
		Consider player $p$ and her  opponents' strategies $M^{-p}$. We will show that, player $p$ has no incentive to hide a subset of her vertices if $\mathcal{A}$ prioritizes the players with more vertices in the game when deciding among multiple optimal solutions.
		
		By contradiction, suppose that player $p$ has incentive to hide the set of vertices $V' \subseteq V^p$ given some $M^{-p}$. Assuming that no information can be hidden, let $M^p$ be player $p$'s best response and $ M^I_p $ the associated external matching involving the vertices of this player. Likewise, let $\hat{M}^p$ be player $p$'s best response when vertices $V'$ are not available in the game, $\hat{M}^I_p$ the associated external matching involving the vertices of this player and $\bar{M}^p$ the maximum matching in the graph induced by $V'$. Mathematically, we should have
		$$2\vert M^p \vert +   \vert M^I_p \vert < 2\vert \bar{M}^p \vert + 2\vert \hat{M}^p \vert +   \vert \hat{M}^I_p \vert,$$
		otherwise, player $p$ would not have incentive to hide vertices $V'$.
		
		If $\bar{M}^p$ matches all vertices in $V'$, then such matching could have been implemented in the game without any damage  for player $p$'s objective function; recall that, by the hypothesis on $\mathcal{A}$, increasing  the number of revealed vertices can can only benefit a player. Otherwise, there is some vertex $v \in V'$ that is not matched. Note that revealing such vertex must decrease player $p$'s objective function. If adding $v$ to the induced graph $G^p \left[ (V^p - V') \cup \lbrace v \rbrace \right]$, results in a $\hat{M}^p$-augmenting path, then player $p$ has no incentive to hide $v$. Thus, since $v$ is not matched internally (either with hidden or non-hidden vertices), player $p$ is indifferent to reporting it to the game. In fact, supposing that player $p$ hides $V'- \lbrace v \rbrace$ and plays $\hat{M}^p$, the external matching selected by $\mathcal{A}$ can only mantain or improve player $p$'s utility. If there is a $\hat{M}^I_p$-augmenting path $\mathfrak{p}$, starting in the unmatched vertex $v$,  $\hat{M}^I_p \oplus \mathfrak{p}$, only adds more matched vertices and hence, player $p$'s objective increases by 1 unit. Thus, there is no benefit on hiding $v$ and $v$ can be removed from $V'$. By iteratively eliminating such vertices from $V'$, either it becomes empty or we get into the first case (all vertices in $V'$ are matched).
		
		Since we selected an arbitrary player and an arbitrary strategy for her opponents, the result follows.
	\end{proof}


Furthermore, the sufficient condition of Lemma~\ref{lem:full} can be efficiently implemented:

\begin{theorem}
	The computation of a maximum matching prioritizing the players with more vertices revealed can be done in polynomial time.
	\label{THM:FULLINFORMATION}
\end{theorem}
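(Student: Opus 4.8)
The plan is to reduce the task to a single maximum-weight matching computation in $G=(V,E)$ with integer edge weights of polynomially bounded bit-length, for which Edmonds' blossom algorithm runs in polynomial time.

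First I would fix the priority order once and for all: reindex the players as $p_1,p_2,\ldots,p_n$ with $|V^{p_1}|\ge|V^{p_2}|\ge\cdots\ge|V^{p_n}|$, breaking ties arbitrarily. For a matching $M$ of $G$ write $m_k(M)$ for the number of vertices of $V^{p_k}$ covered by $M$; note that when $M$ plays the role of the {\IndA} matching this is exactly $|M^I_{p_k}|$, since the edges of a matching are vertex-disjoint. The claim to be proved is then that one can compute, in polynomial time, a maximum-cardinality matching $M$ of $G$ that lexicographically maximizes the vector $\bigl(m_1(M),m_2(M),\ldots,m_n(M)\bigr)$ over all maximum-cardinality matchings.

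Next I would set $B:=|V|+2$ and $\Lambda:=B^{\,n}$, and assign to every edge $e=\{u,v\}$ with $u\in V^{p_i}$ and $v\in V^{p_j}$ the weight
\[
 w(e)\;=\;\Lambda\;+\;B^{\,n-i}\;+\;B^{\,n-j},
\]
so that $w(M)=|M|\,\Lambda+\sum_{k=1}^{n}m_k(M)\,B^{\,n-k}$ for every matching $M$. Two elementary observations finish the reduction. First, since $0\le m_k(M)\le|V|<B-1$ for all $k$, the positional encoding is faithful to the lexicographic order: if $(a_k)$ lexicographically dominates $(b_k)$, then at the first index $j$ where they differ $a_j\ge b_j+1$, while the tail satisfies $\bigl|\sum_{k>j}(a_k-b_k)B^{\,n-k}\bigr|\le|V|\sum_{k>j}B^{\,n-k}<B^{\,n-j}$, so $\sum_k a_kB^{\,n-k}>\sum_k b_kB^{\,n-k}$; moreover $\sum_k m_k(M)B^{\,n-k}<B^{\,n}=\Lambda$. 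Second, this last bound makes the cardinality term dominate: if $|M_1|>|M_2|$ then $w(M_2)<(|M_2|+1)\Lambda\le|M_1|\Lambda\le w(M_1)$. Consequently a maximum-weight matching has maximum cardinality (it cannot have more edges than the maximum, and if it had fewer it would be beaten by any maximum-cardinality matching), and among all maximum-cardinality matchings it maximizes $\sum_k m_k(\cdot)\,B^{\,n-k}$, hence is precisely the lexicographic optimum for the chosen priority order.

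It then remains to bound the running time: the weights are nonnegative integers below $2B^{\,n}=2(|V|+2)^{n}$, hence of bit-length $O(n\log|V|)$, polynomial in the size of the input, and Edmonds' maximum-weight matching algorithm performs a number of arithmetic operations polynomial in $|V|$ on numbers that remain within this bit-length, so the whole procedure is polynomial. There is no genuine obstacle here; the one point that requires attention is the simultaneous calibration of $B$ and $\Lambda$ — they must be large enough that the single scalar objective $w$ reproduces the "cardinality, then lexicographic-by-priority" ordering exactly, yet small enough that the weights stay of polynomial bit-length — but once the trivial bounds $m_k(M)\le|V|$ are invoked, the estimates above are routine. (If one wishes to avoid large weights altogether, the same matching can instead be produced by $n+1$ successive maximum-weight matching computations with only $O(\log|V|)$-bit weights, fixing maximum cardinality first and then resolving one player at a time in priority order; this requires no new idea.)
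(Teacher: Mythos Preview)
Your proof is correct. Both you and the paper reduce to a single maximum-weight matching computation, but the weight schemes differ.

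The paper assigns to an edge between $V^i$ and $V^j$ the weight $(|V^1|+|V^2|)\,|V|+|V^i|+|V^j|$, i.e., a large additive constant to force maximum cardinality plus the small secondary term $|V^i|+|V^j|$. To show that this secondary term actually achieves the lexicographic priority, the paper invokes the structure of matchings: for two maximum matchings $M$ and $\bar M$, the symmetric difference $M\oplus\bar M$ decomposes into even cycles and even paths, and each even path swaps coverage of exactly one vertex for another, so a lexicographic improvement yields a single path along which the weight strictly increases. Your approach instead encodes the lexicographic vector positionally in base $B=|V|+2$, which needs no matching-specific structure at all --- only the trivial bound $m_k(M)\le|V|<B-1$ --- and handles ties in $|V^{p_i}|$ cleanly via the upfront reindexing. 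The price is larger weights: $O(n\log|V|)$ bits versus the paper's $O(\log|V|)$ bits, but both are polynomial, so the conclusion is the same. Your closing remark about the iterative $(n+1)$-round alternative is also a valid (and standard) way to avoid the large weights entirely.
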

\proof
See Appendix~\ref{app:FullInformation}.

In addition to the design of a game with good outcomes from the social welfare point of view and from the players' perspective in a cross-boarder KEP, the result of Theorem~\ref{THM:SWE_EXISTENCE} has major consequences in the understanding of {\proKEP}. It does not matter how we partition a pool of a {\proKEP} into different groups (players): for each partition, there is always a Nash equilibrium that is also a social optimum. For instance, in a {\proKEP}, players can be thought as agents owning incompatible pairs of distinct regions, or distinct incomes, or distinct minorities, etc., and according to Theorem~\ref{THM:SWE_EXISTENCE}, for each partition of the players' pairs,  there will always exist an equilibrium that is also a social optimum. In a certain sense,  one could interpret our result as  a  theoretical explanation of why for regional or national {\proKEP}s (\eg, in UK and Netherlands where exchanges of size 2 are prioritized and they represent a significant fraction of the selected exchanges~\cite{FirstHandbook}) the solution of a single decision-maker optimization problem is enough. Nevertheless, the computational results of Section~\ref{sec:computational} will highlight that these maximum matchings must be carefully computed because their number can be large and only a very small fraction of them can be Nash equilibria. Thus, under strategical behavior, for the determination of a {\SWE}, it is essential the use of our algorithmic tools that allow to \emph{(i)} compute a {\SWE} in polynomial time (Algorithm~\ref{Alg:SWE_matching}), \emph{(ii)} verify Nash equilibria (Algorithm~\ref{Alg:Best_weighted_Optimistic_response}, mentioned later in Section~\ref{sec:computational})  and \emph{(iii)} sample among maximum matchings (Appendix~\ref{app:uniform_gen}, mentioned later in Section~\ref{sec:computational}).


\section{Information on transplant's quality: what changes? The \proKEGwei\ case}
\label{sec:qualitativeInfor}

Predicting the graft survival of a patient matched with a donor does not depend only on whether they are compatible, but also on their health characteristics (age, life habits, etc.). This means that, for a set of compatible donors, a patient associates a preference according to the predicted graft quality. Thus, when additional information on the quality of   transplantations is added to the kidney exchange game, it is expected that the players will weight that information on their utility functions. For instance, a player optimization problem can become a maximization of a weighted matching, instead of the cardinality one.

Let us emphasize the importance of having measures on the quality of a transplant. \citet{Ahmed2008} build 5-year graft survival predictions for living-donors with a Cox regression-based monogram and an artificial neural network model. Their motivation is to find the best prediction methodology since that would allow the choice of the best possible kidney donor and the optimum immunosuppressive therapy for a given patient.  The statistical study in \cite{nemati_does_2014} shows that living-donor recipients have better graft survival rates in comparison with deceased-donation. Nevertheless, there is literature supporting the fact that predicting these rates for living-donation can significantly improve the hospitals allocation strategies, as even among living donors there can be significant differences. \citet{krikov2007} develop a tree-based model to predict the probability of kidney graft survival at 1, 3, 5, 7, and 10 years, either with deceased and living donors. It is claimed that such predictions allow to identify the factors affecting the survival  that  may help in improving treatment strategies. \citet{amiajnl-2010} estimate glomerular filtration rate (eGFR) of the recipient 1 year after the transplant. It is argued that since eGFR is an interpretable real-valued quantity, it might be more helpful than a binary success/failure prediction.  \citet{Massie2016} develop an index to be used as a metric to compare multiple living donors. More recently, \citet{Luck2017} propose a deep neural network to predict graft survibability given a specific donor and patient; however, their experiments focus on deceased donors.

One way to embed graft quality information on KEPs is through the association of weights to the edges of the compatability graph. In fact, some KEPs already consider edge weights but mainly with the goal of representing exchange prioritizations. Hence, one would be interested in finding a maximum-weighted matching, which can be computed in polynomial time \citep{Papadimitriou:1982}. Let $w^p_e$ for $e \in E$ encode the information on graft quality associated with that matching for player $p$. At first glance, because in the kidney exchange game the objective functions~\eqref{Obj:cardinality}  of each player $p$ is replaced (generalized) by
\begin{equation}
	\sum_{e \in M^p} w^p_e + \sum_{e \in M^I_p} w^p_e,
	\label{Obj:weight}
\end{equation}
it seems that the game would continue to be ``\textit{easy to solve}''; we call the game with general utility functions~\eqref{Obj:weight} and  the {\IndA} objective with weights $w^I_{vu} = w_{vu}^i+w_{vu}^j$, for each edge $(v,u) \in E^I$ and $v \in V^i$ and $u \in V^j$, the weighted game and denote it by {\proKEGwei}. However, when a measure of graft quality is considered, we prove  that it becomes \textit{hard} to compute a player best response   and thus, to verify if a matching is an equilibrium. The \textit{social welfare} value for  a {\proKEGwei} where players play $(M^1,M^2, \ldots, M^n)$ is generalized to the total weight associated with patients receiving a kidney, where international and internal exchanges are evaluated according to {\IndA} and the associated player, respectively.

Player $p$ best response to a matching $M^{-p}$ corresponds to finding the internal matching $M^p$ of $G^p$ such that her total weight is maximized. For the sake of simplicity, in our investigation of player $p$'s best response, we assume that the {\IndA} continues to maximize the cardinality matching in $E^I$ requiring that a $\mathcal{A}$ is specified; note that this is always possible through a proper selection of the weights for the remaining players.
%
We will simplify even further and take inspiration from bilevel programming (see \eg ~\cite{Colson2007}),\footnote{Bilevel programming models the optimization problem of a player, called the leader, subject to \emph{a posteriori} reaction of another player, called the follower. For the leader's optimization to be well defined, in the literature, it is assumed that the follower behaves either pessimistically or optimistically. Note the direct connection between bilevel programming and Stackelberg games.} and assume that the $\mathcal{A}$ either behaves \emph{pessimistically}, \ie, the {\IndA} selects the matching of maximum cardinality that minimizes player $p$'s utility, or \emph{optimistically}, \ie, the {\IndA} selects her matching of maximum cardinality that maximizes player $p$'s utility. The pessimistic (respectively, optimistic) decision version, denoted by {\BestReponsepes} (respectively, {\BestReponseopt}), asks whether given a positive integer $P$, there is a matching $M^p \subseteq E^p$ such that player $p$ has a profit of at least $P$.
\begin{theorem}
	{\BestReponsepes} and {\BestReponseopt}  are NP-complete.
	\label{THM:COMPLEXITY}
\end{theorem}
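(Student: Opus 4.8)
The plan is to prove membership in NP first, then hardness by reduction from a classical NP-hard problem.

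\medskip

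\textbf{Membership in NP.} For {\BestReponsepes} (and analogously {\BestReponseopt}), a certificate is the internal matching $M^p \subseteq E^p$ itself. Given $M^p$, the {\IndA}'s behaviour is determined: one computes the residual international graph $(V, E^I(M^1,\dots,M^n))$, and the pessimistic (resp.\ optimistic) {\IndA} picks a maximum-cardinality matching in this graph that minimizes (resp.\ maximizes) player $p$'s weighted utility $\sum_{e \in M^I_p} w^p_e$. Deciding ``does there exist a maximum-cardinality matching of the residual graph whose restriction to $V^p$ has weight $\le$ (resp.\ $\ge$) some threshold'' is itself polynomial: since $L=2$ the residual structure is an ordinary graph, and one can optimize a secondary linear objective over the face of maximum matchings in polynomial time (e.g.\ fix the maximum cardinality $k$ as a constraint and solve a weighted matching LP, which is integral). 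Hence the verifier runs in polynomial time and the problems are in NP.

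\medskip

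\textbf{Hardness.} I would reduce from a matching-flavoured NP-complete problem so that the gadgetry stays light; the natural candidate is a variant where maximum matchings must avoid or include certain vertices, e.g.\ \threeDIM\ or, more directly, a restricted version of {\FVS} or an exact-cover style problem — but the cleanest route is probably to exploit the pessimistic/optimistic tie-breaking itself. The idea: build player $p$'s internal graph $G^p$ and the international edges so that the choice of $M^p$ encodes a truth assignment (or a selection of a set-cover), and the {\IndA}'s adversarial (pessimistic) completion forces player $p$'s realized weighted profit to meet the threshold $P$ only if the encoded assignment satisfies all clauses. Concretely, for each variable create a small internal gadget in $V^p$ with two candidate internal edges (``true''/``false'') of equal cardinality contribution but which expose different vertices to $E^I$; for each clause create a vertex in some opponent's pool connected to the literal-vertices, with a heavy weight $w^p$ on the corresponding international edge, so that player $p$ collects the clause's weight iff at least one satisfying literal-vertex was left free. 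The maximum-cardinality requirement on the {\IndA} side, together with dummy padding vertices, is used to prevent the {\IndA} from simply dropping the clause edges; pessimism is used to ensure the {\IndA} will drop a clause edge whenever it legally can, i.e.\ whenever the clause is unsatisfied. For {\BestReponseopt}, one flips the roles: negate the relevant weights or restructure so that optimistic completion by the {\IndA} is forced to expose an unsatisfied-clause penalty unless player $p$'s choice blocks it. One must verify that the reduction is polynomial-size, that the cardinality of the {\IndA} matching is invariant across all ``intended'' strategies of player $p$ (so maximality does not accidentally distinguish them), and that the weights for the opponents can be set to make $\mathcal{A}$ well-defined as the cardinality-maximizing {\IndA}.

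\medskip

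\textbf{Main obstacle.} The delicate part is controlling the {\IndA}'s maximum-cardinality constraint: I need every ``honest'' internal strategy $M^p$ to leave the residual international graph with the \emph{same} maximum matching size, so that the only thing distinguishing strategies is the weighted profit player $p$ extracts through the tie-break — otherwise player $p$ could manipulate cardinality to her advantage in unintended ways and the equivalence with the source problem breaks. This typically requires careful padding gadgets (pairs of dummy international vertices that are always matchable) and a parity/counting argument showing the residual maximum matching size is a fixed constant independent of which literals are exposed. Getting the pessimistic-versus-optimistic distinction to both yield NP-completeness (rather than one of them collapsing to something easy) is the second subtlety, and I expect it is handled by mirroring the construction with sign-flipped weights.
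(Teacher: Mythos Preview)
Your NP-membership argument is correct and matches the paper's: the certificate is $M^p$, and verifying it reduces to optimizing a linear objective over the face of maximum matchings, which is polynomial (the paper phrases the pessimistic case as a $k$-cardinality assignment problem and the optimistic case as a single weighted matching with shifted weights, but the idea is the same).

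The hardness part, however, remains a plan rather than a proof, and the plan has two concrete weak points. First, the SAT-style gadget you describe is not worked out, and the obstacle you yourself flag --- keeping the {\IndA}'s maximum-cardinality value invariant across all of player $p$'s internal choices --- is precisely where such constructions tend to break; you gesture at ``padding'' but do not show how to arrange it so that an unsatisfied clause does not simply lower the residual maximum cardinality (which would let player $p$ detect it via a different mechanism than the intended weight signal). Second, handling {\BestReponseopt} by ``mirroring with sign-flipped weights'' is optimistic in the wrong sense: the pessimistic and optimistic tie-breaks are not dual under a sign change once the hard cardinality-maximization constraint is in place, and the paper in fact uses two \emph{different} source problems rather than a single mirrored construction.

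For comparison, the paper reduces {\FVS} to {\BestReponsepes}: player $A$'s internal bipartite graph lets her ``delete'' up to $l$ vertices of the digraph, identity edges $(v_i,v'_i)$ carry weight $n$, and arc edges carry weight $1$; the pessimistic {\IndA} can avoid the heavy identity edges precisely when a directed cycle survives among the undeleted vertices, so profit $\ge 2l+n(n-l)$ iff a feedback vertex set of size $\le l$ exists. For {\BestReponseopt} the paper reduces from {\threeDIM}: each triple $t$ gets a gadget where the optimistic {\IndA} can pick the heavy edge $(t,t)$ of weight $2|T|$ only if player $A$ has internally matched all three legs $t^x,t^y,t^z$, which is simultaneously possible for $n$ triples iff a 3-dimensional matching exists. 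In both reductions the cardinality of the {\IndA}'s maximum matching is controlled structurally (it equals $n-|V'|$ in the first and is forced per-gadget in the second), which is exactly the piece your sketch leaves open.
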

\proof See Appendix~\ref{app:NP_complete}.

The result of Theorem~\ref{THM:COMPLEXITY} implies that the optimization problem of each player can become computationally intractable (assuming $P \neq NP$), raising the question on whether Nash equilibria can be attained. Furthermore, it might explain why some multi-agent kidney exchanges lead to so little transplantations as reported in~\cite{Ashlagi2017}: the players objective function is not only the number of patients receiving a kidney  but, potentially, a weighted benefit of their patients receiving a kidney, which can be \emph{hard} to optimize. We remark that in~\cite{Ashlagi2017}, the analyzed game is centralized and thus, the players optimize their outcome by deciding which vertices to reveal. This is not our game setup, although we can see pairs of vertices matched internally as vertices that could be hidden under the centralized mechanism.\footnote{Recently, \citet{Smeulder2020} showed the computational complexity of the decision problem associated with a player maximizing the number of her patients receiving a kidney by selecting a subset of vertices to hide under a centralized mechanism. In particular, they proved that this optimization can be reformulated as the best response problem in {\proKEGcard} and hence, it can be solved in polynomial time.}

In this context, it is crucial to study how the introduction of edge weights impacts the incentives for misrepresentation of individual pools.

\begin{lemma}
	For {\proKEGwei}, even in the 2-player case, a player can have incentive to hide vertices.
\end{lemma}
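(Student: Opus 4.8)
The plan is to prove the statement by exhibiting a single two-player {\proKEGwei} instance, together with a deterministic cardinality-maximising {\IndA} $\mathcal{A}$, in which player~1's utility is strictly larger when she omits one vertex from her reported pool than when she reports it truthfully. Since the claim is purely existential, one such instance suffices, and it can be made tiny (four vertices, no internal edges).

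First I would fix the instance. Let $N=\{1,2\}$, let player~1 own $V^1=\{b,v\}$ with $E^1=\emptyset$, let player~2 own $V^2=\{c,d\}$ with $E^2=\emptyset$, and let the international edges be $E^I=\{(b,c),(v,c),(b,d)\}$, with player~1's weights satisfying $w^1_{bc}>w^1_{vc}+w^1_{bd}$ (concretely, $w^1_{bc}=3$ and $w^1_{vc}=w^1_{bd}=1$); player~2's weights and the aggregate weights $w^I$ are immaterial because $\mathcal{A}$ maximises cardinality. I would take $\mathcal{A}$ to be the optimistic rule, i.e. among maximum-cardinality international matchings it returns one of largest player-1 weight (the setting of {\BestReponseopt}); this is the rule most favourable to player~1, which only strengthens the conclusion.

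Next I would carry out the two best-response computations. Since $E^1=\emptyset$, player~1's only internal matching is $\emptyset$ in both scenarios, so the international graph coincides with the whole reported graph. When player~1 reports truthfully, the unique maximum matching of $(\{b,v,c,d\},E^I)$ is $\{(v,c),(b,d)\}$ — matching both of player~1's vertices forces $v$--$c$, hence $b$--$d$ — giving her payoff $w^1_{vc}+w^1_{bd}$. When player~1 hides $v$, the international graph is $(\{b,c,d\},\{(b,c),(b,d)\})$, whose maximum matchings are $\{(b,c)\}$ and $\{(b,d)\}$; the optimistic $\mathcal{A}$ selects $\{(b,c)\}$, giving payoff $w^1_{bc}>w^1_{vc}+w^1_{bd}$. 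Hence hiding $v$ strictly increases player~1's utility, which proves the lemma. I would close by noting that this is exactly where the argument behind Lemma~\ref{lem:full} breaks down: in {\proKEGcard} a reported-but-unmatched vertex can only help its owner, whereas under weights a reported vertex can be commandeered by the cardinality-driven {\IndA} into a strictly larger matching that displaces a high-quality edge.

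There is no real technical obstacle here; the only work is choosing the instance so that three things hold simultaneously: (i) reporting $v$ strictly raises the maximum international cardinality; (ii) the unique such larger matching is forced to use the low-weight edges incident to $v$; and (iii) it is thereby forced to drop the high-weight edge $(b,c)$. The one point needing care is the tie-break in the ``hiding'' subgame, which is why $\mathcal{A}$ is taken optimistic — the ``truthful'' subgame already has a unique maximum matching, so the payoff there is $\mathcal{A}$-independent. Finally, the weighted structure is used essentially: by Lemma~\ref{lem:full} no all-equal-weight instance can exhibit this behaviour, so the inequality $w^1_{bc}>w^1_{vc}+w^1_{bd}$ is the crux of the construction.
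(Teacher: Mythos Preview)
Your argument is essentially correct and, like the paper, proves the existential claim via a four-vertex instance with no internal edges. There is, however, a model mismatch you should repair: in {\proKEGwei} as defined in the paper, the {\IndA} does \emph{not} maximise cardinality but the aggregate weight $\sum_e w^I_e$ with $w^I_{uv}=w^i_{uv}+w^j_{uv}$. Your sentence ``player~2's weights and the aggregate weights $w^I$ are immaterial because $\mathcal{A}$ maximises cardinality'' therefore does not match the game you are asked to analyse. The fix is easy --- choose player~2's weights (e.g.\ $w^2_e\equiv 2$) so that the weight-maximising {\IndA} selects $\{(v,c),(b,d)\}$ when $v$ is present and $\{(b,c)\}$ when it is not --- but it must be stated, and once you do so the ``optimistic'' assumption becomes unnecessary.

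The paper proves the lemma with a different four-vertex instance: a path $1$--$2$--$3$--$4$, player~1 owning $\{1,3\}$ and player~2 owning $\{2,4\}$, with $w^1_{23}=10$ and $w^1_{12}=w^1_{34}=1$, and player~2's weights chosen so that $w^I_{12}+w^I_{34}>w^I_{23}$. There the weighted {\IndA} has a \emph{unique} optimum in both the truthful and the hiding scenario, so no tie-breaking rule enters at all; the conclusion holds for every deterministic $\mathcal{A}$ consistent with the {\proKEGwei} definition. Your construction, by contrast, depends on the {\IndA} resolving the size-$1$ tie in player~1's favour after hiding (a pessimistic {\IndA} would return $\{(b,d)\}$, giving utility $1<2$). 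Both proofs suffice for the lemma as stated, but the paper's buys robustness to the choice of $\mathcal{A}$ at no extra cost.
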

\begin{proof}
	Consider the example of Figure~\ref{WKEG:hidevertices} where players have no internal matching available and the {\IndA}  unique optimal solution\footnote{Recall that according to our description of {\proKEGwei}, the {\IndA} computes a maximum-weighted matching. On the other hand, for the computational complexity classification of a player best response, we can assume that {\IndA} used unitary weights since it guarantees that even for such subset of instances the problem is NP-hard.} is the matching $\lbrace (1,2),(3,4) \rbrace$ with value 12, and utility 2 for player 1 and 10 for player 2. Note that player 1 has incentive to hide vertex 1, so that the  {\IndA} optimal solution becomes matching $\lbrace (2,3) \rbrace$ with utility 10 for player 1.
	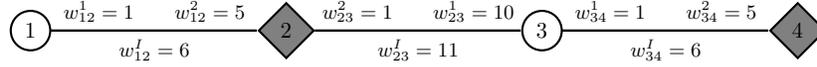
\begin{figure}
		\centering
		\begin{tikzpicture}[-,>=stealth',shorten >=0.3pt,auto,node distance=4cm,
			thick,countryA node/.style={circle,draw},countryB node/.style={diamond,draw,fill=gray}, scale=.85, transform shape]
			\tikzstyle{matched} = [draw,line width=3pt,-]
			
			\node[countryA node] (1) {$1$};
			\node[countryB node] (2) [right of=1] {$2$};
			\node[countryA node] (3) [right of=2] {$3$};
			\node[countryB node] (4) [right of=3] {$4$};
			
			\path[-,every node/.style={font=\sffamily\small}]
			(1) edge node  {$w_{12}^1=1 \qquad w_{12}^2=5$} (2)
			(1) edge node[below]  {$w_{12}^I=6$} (2)
			(2) edge node  {$w_{23}^2=1 \qquad w_{23}^1=10$} (3)
			(2) edge node[below]  {$w_{23}^I=11$} (3)
			(3) edge node  {$w_{34}^1=1 \qquad w_{34}^2=5$} (4)
			(3) edge node[below]  {$w_{34}^I=6$} (4);
		\end{tikzpicture}

		\caption{Player 1 owns the white circles and player 2 owns the gray diamonds.}
		\label{WKEG:hidevertices}
	\end{figure}
\end{proof}

\begin{lemma} 
	For  {\proKEGwei}, assuming that players cannot hide vertices, a player can have incentive to misreport her evaluation of edges' weights.
\end{lemma}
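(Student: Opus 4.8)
The plan is to exhibit a small explicit $2$-player instance of {\proKEGwei} in which the players cannot hide vertices, yet player $1$ strictly benefits by submitting a false weight vector $\hat w^1 \neq w^1$ rather than her true one. The construction mirrors the spirit of the previous lemma's example: I want a situation where, under player $1$'s \emph{true} weights, the unique {\IndA} optimum (which maximizes $w^I = w^1+w^2$ on the international graph) selects a set of international edges that is suboptimal \emph{from player $1$'s own private valuation}, because the high-$w^I$ edges are driven mostly by player $2$'s large weights. By reporting inflated weights on the edges she actually prefers (and/or deflated weights elsewhere), player $1$ shifts the {\IndA}'s maximizer toward the international matching she truly values more, strictly increasing $\sum_{e \in M^I_1} w^1_e$.

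Concretely, I would take a path on four vertices $1,2,3,4$ with $1,3 \in V^1$ and $2,4 \in V^2$, all edges international, no internal edges, so each player's only ``strategy'' in the sense of Section~\ref{sec:background} is the empty internal matching; the degree of freedom we are now probing is the \emph{reported} weight vector that feeds into the {\IndA}'s objective. Set player $2$'s weights so that the edge $(1,2)$ and $(3,4)$ together carry large $w^2$-mass while $(2,3)$ carries little; set player $1$'s \emph{true} weights oppositely, so that $w^1_{23}$ is large and $w^1_{12}, w^1_{34}$ are small. With truthful reporting, $w^I$ favors $\{(1,2),(3,4)\}$ (two edges, high combined weight), giving player $1$ only $w^1_{12}+w^1_{34}$, which is small. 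If instead player $1$ reports $\hat w^1_{23}$ huge and $\hat w^1_{12}=\hat w^1_{34}=0$, the {\IndA} objective now makes $\{(2,3)\}$ the unique maximizer, and player $1$'s realized utility is the true $w^1_{23}$, which is strictly larger. I would pick the numbers so the inequality is strict and the {\IndA} optimum is unique in each case (e.g., along the lines $w^1_{23}=10$, $w^1_{12}=w^1_{34}=1$, $w^2_{12}=w^2_{34}=5$, $w^2_{23}=1$, reused from Figure~\ref{WKEG:hidevertices}, but now the manipulation is through reported weights rather than vertex hiding), and then simply verify by inspection the handful of matchings of a $4$-path.

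The main step is just the verification: list the maximal matchings of the path ($\{(1,2),(3,4)\}$ and $\{(2,3)\}$, plus dominated singletons), compute the {\IndA} objective under truthful and under manipulated reports, confirm uniqueness of the optimizer in both cases, and compare player $1$'s \emph{true}-weight utility across the two outcomes. I do not expect a genuine obstacle here — the only thing to be careful about is (i) ensuring the example is legal under the model, i.e., the misreport is of weights only, with vertices fully revealed, and (ii) making every ``unique optimum'' claim genuinely strict so there is no tie that a favorable tie-break could wash out. A one-line remark can note that this is exactly the phenomenon one expects once best responses depend on a privately-held valuation that is aggregated by a third party, contrasting with the cardinality case where Lemma~\ref{lem:full} secured truthful participation.

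\begin{proof}
Consider the $2$-player instance on the path $1-2-3-4$ with $V^1=\{1,3\}$, $V^2=\{2,4\}$, and all three edges international (no internal edges, so each player's internal matching is necessarily empty). Let player $1$'s true weights be $w^1_{12}=w^1_{34}=1$ and $w^1_{23}=10$, and let player $2$'s weights be $w^2_{12}=w^2_{34}=5$ and $w^2_{23}=1$, so the {\IndA} objective on $E^I$ is $w^I_{12}=w^I_{34}=6$ and $w^I_{23}=11$. The maximal matchings of the path are $\{(1,2),(3,4)\}$ with {\IndA} value $12$ and $\{(2,3)\}$ with {\IndA} value $11$; hence, when player $1$ reports truthfully, the unique {\IndA} optimum is $\{(1,2),(3,4)\}$, which yields player $1$ the utility $w^1_{12}+w^1_{34}=2$.

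Now suppose player $1$ instead reports $\hat w^1_{12}=\hat w^1_{34}=0$ and $\hat w^1_{23}=10$ (vertices remain fully revealed; only the reported edge evaluations change). The {\IndA} objective becomes $\hat w^I_{12}=\hat w^I_{34}=5$ and $\hat w^I_{23}=11$, so $\{(1,2),(3,4)\}$ now has value $10$ while $\{(2,3)\}$ has value $11$; the unique {\IndA} optimum is $\{(2,3)\}$. Player $1$'s realized utility, evaluated with her true weights, is then $w^1_{23}=10>2$. Therefore, even in the $2$-player case and even when hiding vertices is not permitted, player $1$ strictly benefits from misreporting her edge weights, which proves the claim.
\end{proof}
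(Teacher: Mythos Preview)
Your proof is correct and follows essentially the same approach as the paper: both use the identical $4$-vertex path instance of Figure~\ref{WKEG:hidevertices} and show that player~$1$ can misreport so as to flip the {\IndA}'s unique optimum from $\{(1,2),(3,4)\}$ to $\{(2,3)\}$, raising her true utility from $2$ to $10$. The only cosmetic difference is that the paper inflates a single weight (reporting $w^1_{23}=12$) whereas you deflate the competing ones (reporting $\hat w^1_{12}=\hat w^1_{34}=0$); either manipulation achieves the same strict shift in the {\IndA} objective.
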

\begin{proof} 
	If some player $p$ reports a very large $w^p_{vu}$ for $(v,u) \in E^I$ and $v \in V^p$, then {\IndA} will give priority to matching the patients of player $p$. Recall that here we are not analyzing an individual player best response, hence, the {\IndA} assignings weights to edges that correspond to the sum of players' evaluation. For instance, in the example of Figure~\ref{WKEG:hidevertices}, player 1 has incentive to report $w^1_{23}=12$.
\end{proof}

	These results show that for {\proKEGwei}, we cannot conclude that complete information will naturally be achieved. If there was never incentive to hide a vertex, we could assume that the system in place would have its own means to assign values to the international exchanges and thus, misreporting patient information would not  be possible. Alternatively, in order to avoid that players report very large weights, the {\IndA} can assign them a total budget for it or the players should agree on an algorithm that computes the weights. In addition, since already computing the best response is NP-hard, we may assume that players do not have the computational power to find the optimal set of vertices to hide.

While the discussion above can mildly support the assumption of complete information, it still remains to understand the equilibria of {\proKEGwei}. In~\cite{Carvalho2017}, an example of a weighted kidney exchange game with no pure Nash equilibria is provided. This observation together with the theoretical complexity of a player best response motivate the computational analysis of the next section, where we aim at understanding the practical impact of the negative theoretical results above. On the negative side, we will conclude that computing any social optimum (\ie, to solve the problem as a single decision-maker) should unlikely result in an equilibrium. In a non-equilibrium outcome, players do not have incentive to follow it and to contribute with full information. On the positive side, we will \emph{(i)} show that potential Nash equilibria can be effectively verified in practice, and \emph{(ii)} propose novel algorithmic policies required to address multi-agent interactions. Those results allow, in practice, to get a kidney exchange program that engages agents and positively impacts the social welfare.

\section{Computational investigation}
\label{sec:computational}

In this section, we will compute pure Nash equilibria of the kidney exchange game when each player goal is to maximize the number of her patients receiving a kidney, objective function \eqref{Obj:cardinality}, and when the goal is to maximize the total weight associated with her patients receiving a kidney, objective function \eqref{Obj:weight}. To that end, we contribute with a class of realistic instances based on data from the Canadian Kidney Paired Donation program, as well as a generator for this class. The aim of this section is twofold:  evaluate the rate of social welfare equilibria for the cardinality case and  the rate of pure Nash equilibria for the weighted case. These experiments allow to understand if we really need equilibria (or we would most likely stumble in an equilibrium by simply maximizing transplants) and if they increase players' utilities and welfare.

In Section~\ref{subsec:implementation_details}, we clarify how we build our generator of instances and the algorithmic details to verify if a matching is an equilibrium. In Section~\ref{subsec:computational_simulations}, we evaluate the rate of equilibria on the generated instances for the cardinality game, {\proKEGcard}, and for the weighted game, {\proKEGwei}.

The instances used in our experiments,  the code of our implementations and all the obtained results are publicly  available.\footnote{\url{https://github.com/mxmmargarida/KEG}}

Finally, for the computational investigation, we assume complete information. Recall that this is theoretically motivated for {\proKEGcard}, while it is an assumption for {\proKEGwei}, which the computation helps to clarify.

\subsection{Implementation details}\label{subsec:implementation_details}

\paragraph{Test instances} Typically, the kidney exchange literature carries the computational analysis on graphs generated according  to~\citet{Saidman2006}; compatibility graphs are generated based on probabilities of blood type and of patient-donor tissue compatibility that were obtained through data from the United Network for Organ Sharing (UNOS) in the United States. \citet{ashlagi_new_2012} show that, in practice, kidney exchange pools can be much sparser than the pools generated by the Saidman simulator. For this reason, ~\citet{Glorie2014} introduced some modifications on this simulator.  \citet{Constantino2013} produced KEP instances with low, medium and high density graphs; the low density graphs have on average a density similar to the ones by~\citet{Saidman2006}. Since, there are no computational results on game theory approaches  using the aforementioned instances, we generated instances based on data from the Canadian Kidney Paired Donation program (KPD), contributing with a new set of realistic instances, as well as a generator. Our set of players are the Canadian provinces.

The Canadian Blood Services provided us data on the KPD, which enabled us to build the instance generator described in Table~\ref{table:generator}.
The only input parameters of the generator are $\vert V \vert$, the total number of incompatible patient-donor pairs, and a year between 2009 and 2013. The prevalence of highly-sensitized patients, \ie, patients with low probability of being compatible with a random donor, is generated based on~\cite{KPD_report2009_20013} for the years between 2009 and 2013. Determining the degree of sensitization is done through the calculated panel reactive antibody (cPRA).  For instance, according to that report, in 2009 there were approximately 14\% patients on the KPD with cPRA$\geq 97\%$, while in 2013 it increased to 53\%; thus, this input parameter controls the density of the instances.


{\tiny
	\begin{table}[H]\tiny
		\begin{tabularx}{\textwidth}{|X|X|}\hline
			Step 1
			
			Assignment of pairs to provinces &  Given $\vert V \vert$, our generator determines the number of pairs from $V$ belonging to each province according to their distribution in the KPD data; see the pie chart of Figure~\ref{fig:size_dist}.
			\\ \hline
			Step 2
			
			Generation of players vertices/Generation of provinces' pairs &  Given a province and the number of pairs belonging to it (determined in the previous step), our generator assigns to each pair, the patient blood group and cPRA, and the donor blood group. This assignment is based on the patient-donor blood group distribution of that province, while the cPRA distribution is based on its national distribution in the KPD (due to the lack of data for each specific province). Patients are considered highly-sensitized when their cPRA is high (high probability of being crossmatch incompatible). In~\cite{KPD_report2009_20013}, it is reported the distribution of the  cPRAs equal to 0\%, between 1-50\%, 51-94\%, 95-96\% and 97-100\% for each year between 2009 and 2013.  This distribution changed along those 5 years, and it seems to start stabilizing. However, we decided to leave as a parameter of our generator the selection of the year determining the cPRA distribution; the reason for this change is that on the first year of the report, 2009, there were 14\% of the patients with cPRA above 97\%, while in 2013, they become 53\% of the patients in the pool; by having the year as a parameter for the cPRA distribution, we will be able to understand the impact of highly-sensitized patients in the game outcome.  The higher the cPRA  value of a patient, the less likely is she compatible with a random donor. \\ \hline
			Step 3
			
			Generation of edges/Generation of compatabilities among pairs & In order to build the edges representing reciprocal compatibilities between pairs, blood compatibility is verified and, if satisfied, an edge is drawn with probability equal to the multiplication of (1- cPRA) of the patients on that pairs (i.e., the probability of  both patients being compatible with the donors).\\ \hline
		\end{tabularx}
		\caption{Instance generator}
		\label{table:generator}
	\end{table}
}

\begin{figure}
	\centering
	\includegraphics[scale=0.4]{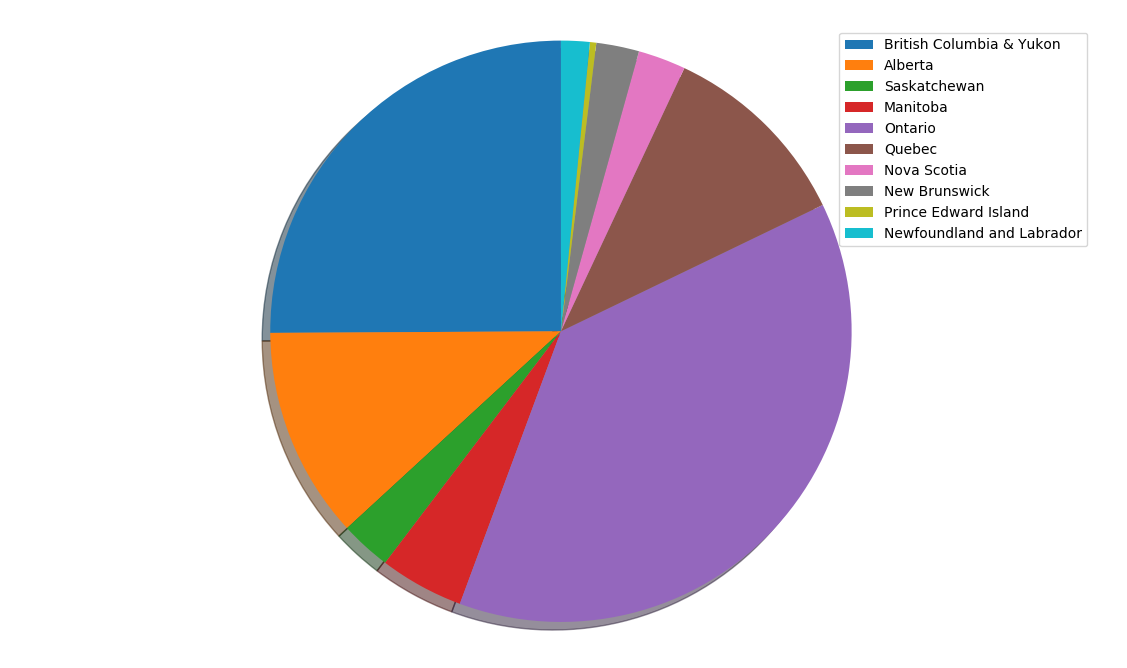}
	\caption{Distribution of provinces size.}
	\label{fig:size_dist}
\end{figure}


For each $\vert V \vert \in \lbrace 30, 40, 50 \rbrace$  and year $\in \lbrace 2009,2013\rbrace$, we  generated the following instances:
\begin{itemize}
	\item $ins \in \lbrace 1, \ldots, 10\rbrace$: 10 instances with all players (ten provinces);
	\item  $ins \in \lbrace 11, \ldots, 20\rbrace$: 10 instances with 3 players (Ontario, British Columbia \& Yukon and Alberta);
	\item  $ins \in \lbrace 21, \ldots, 30\rbrace$: 10 instances with 3 players (Ontario, Quebec and Prince Edward Island).
\end{itemize}

\paragraph{Algorithmic approach} For each instance, we compute \emph{(1)}  the $K$ best  solutions (game outcomes)\footnote{For few instances, there were less than $K$ distinct matchings.} and \emph{(2)} verify for each of them if it is an equilibrium, and, if yes, how close it is from a social optimum.

For {\proKEGcard} it is very likely the existence of multiple social optima, \ie, multiple matchings of maximum cardinality and we know that there must be at least one that is an equilibrium. Thus, for these games, we will compute at most $K=1,000$ random social optimal solutions (matchings of maximum cardinality) in task \emph{(1)}.  In order to  uniformly generate  $1,000$ matchings of maximum cardinality, we use the dynamic programming method described in Appendix~\ref{app:uniform_gen}.
On the other hand, for {\proKEGwei}, the mutiplicity of maximum-weighted matchings is less likely, no {\SWE} is guaranteed to exist and thus, we use the framework by \citet{Lawler1972} to perform task  \emph{(1)} with $K=1,000$. In other words, for {\proKEGwei} we cannot restrict our attention to maximum-weighted matchings.

Let us now concentrate on task  \emph{(2)}. Verifying if a matching $M$ is a Nash equilibrium implies verifying that any player $p$'s best response to $M^{-p}$ does not lead to an increase in the player's utility in comparison with $M \cap E^p$. Therefore, to execute task \emph{(2)} we will compute each player's best response; in this context, we concentrate on the player's best responses in  {\proKEGwei}, since it generalizes the best response computations under {\proKEGcard} ({\proKEGcard} is the special case of {\proKEGwei}  with  weights equal to 1).  Unless $P = NP$, a very unlikely event, Theorem~\ref{THM:COMPLEXITY} implies that there is no polynomial time algorithm to compute a player best response and thus, to verify if a matching is an equilibrium. Nevertheless, we were able to build an exact algorithm for this purpose that in practice, \ie, for the instances created by our generator, runs in seconds. Furthermore, in order to compute best responses, $\mathcal{A}$ must be known. In this way, for each maximum (weighted) matching, if there is a $\mathcal{A}$ making that matching an equilibrium, then it will be that one used in the best response computation; note that this will bias the analysis in the direction of increasing the likelihood of having equilibria.\footnote{See~\ref{app:Best_responses} for details on the implementation.} 

Our algorithmic implementations were in Python 3.7.1, except the algorithm of Appendix~\ref{app:uniform_gen} that was coded in Julia 1.1.0; the latter was the most time consuming part of our experiments and thus, it benefited from the use of a compiled language; nevertheless, we enforced a time limit of 2 hours for it. In case the time limit is reached, the corresponding entry in the tables indicates ``tl''. All the optimization problems have been solved with IBM CPLEX 12.9.0. The
experiments were conducted on an
Intel Xeon E5-2637 processor clocked at 3.50GHz
and equipped with 8 GB RAM, using a single core.

We close the section by noting that we believe that the ability of practically computing best responses for {\proKEGwei} has a potential wider applicability for game theory research on {\proKEP}, thus making the algorithm we designed and its associated (publicly available) implementation relevant contributions \emph{per se}.\footnote{In this concern, although improvements on the running times of the computational implementations are possible, we were happy with the performance on our dataset and we left them to future research.}

\subsection{Computational simulations}\label{subsec:computational_simulations}


\paragraph{{\proKEGcard}}

In our result Tables~\ref{Table:Provinces_game_small_ins}, \ref{Table:Provinces_game_small_ins_2}, \ref{Table:Provinces_game_small_ins_3}, \ref{Table:Provinces_game_small_ins_4}, \ref{Table:Provinces_game_small_ins_5}, \ref{Table:Provinces_game_small_ins_6}, $\vert M \vert_{max}$  denotes the number of edges of the maximum cardinality matching, $\# \vert M \vert_{max}$ denotes the number of distinct matchings of size $\vert M \vert_{max}$, \% \#-NE is the percentage of equilibria in the sample of $1,000$ matchings of maximum cardinality, $ IMP_{max}$  denotes the maximum increase on utility obtained by a province (or set of provinces) in an equilibrium in comparison with the maximum utility  when restricted its individual pool,  $Players_{max} $  denotes the set of provinces reaching that maximum increase,  $ IMP_{min}$ denotes the minimum increase on utility obtained by a province (or set of provinces) in an equilibrium in comparison with the maximum utility when restricted to its individual pool,  $Players_{min} $  denotes the set of provinces reaching that minimum increase,  {\IndA}$_{high}$  and  {\IndA}$_{low}$ are the maximum and minimum number of international transplants in a computed equilibrium, respectively, $time_{gen}$ is the  time in seconds to uniformally generate $1,000$ matchings of maximum cardinality and $time_{NE}$ is the average time to verify if a sampled matching is an equilibrium.

Tables ~\ref{Table:Provinces_game_small_ins}, \ref{Table:Provinces_game_small_ins_2} correspond to {\proKEGcard} with all the 10 provinces as players. As expected the instances associated with 2009 have matchings of higher cardinality and significantly higher number of maximum matchings  in comparison with 2013. Nevertheless, the density of the graphs does not seem to reflect any trend on the prevalence of {\SWE}: for the instances with 30 vertices, 2013 has a  very high percentage of equilibria, while for the instances of size 40, there is no different trend between 2009 and 2013 on the percentage of equilibria; for the instances of size 50 such comparison cannot be made, as the algorithms on 2009 instances exceeded the time limit to sample equilibria. Under an equilibrium, by definition, no player is worse than if she did not participate in the game, thus the column $IMP_{min}$ is always non-negative. In fact, in our computational results it takes always the value 0; this means that there were {\SWE} under which there was at least a player that did not improve her utility. For example, Prince Edward Island appears always in this column which can be explained by the fact that this is the smallest player and thus, less likely to contribute to the exchanges. In opposition, Ontario, the largest player, is the one that appears the less in the column  $Players_{min}$. Empirically this indicates that on {\SWE} the players that contribute the most to the game are compensated. Let us now concentrate on the column $IMP_{max}$.  In general, the maximum increases on utility are strictly positive. In column $Players_{max}$, except for instances with very few matchings of maximum cardinality (and thus, {\SWE}), the smallest provinces, Newfoundland and Labrador, Prince Edward Island, New Brunswick and Nova Scotia, never get the highest increase under a {\SWE}. Again, this seems to enforce that {\SWE} are the ``fair'' game outcome to concentrate as they benefit players according to their degree of participation. Finally, note that the average times in $time_{NE}$ show that this step can be done efficiently with our algorithm while the bottleneck is the sampling of matchings of maximum cardinality. The results of these tables support the importance of using the concept of equilibria to reflect strategical behaviors on the Kidney Exchange outcome. As such, they provide a significant, practical step in the understanding of the relevant questions of {\proKEP}.  Although, for the majority of instances the percentage of {\SWE} computed was higher than \%50 (recall that we are using $\mathcal{A}$ such that the likelihood of a matchings being an equilibrium is maximized), it is not guaranteed that simply using a maximum matching one will have an equilibrium.


\begin{landscape}
	\setlength{\tabcolsep}{6pt}
	\begin{table}[t]
		\hspace*{-8cm}
		\tiny
		\centering
		\begin{tabularx}{\textwidth}{rrrr|rrrrrrrrr}
			year & $\vert V \vert$ & ins & $\vert E \vert$ &
			$\vert M \vert_{max}$ & $\# \vert M \vert_{max}$   & \% \#-NE & $ IMP_{max}$  &  $Players_{max} $ &   $ IMP_{min}$ &  $Players_{min} $&
			$time_{gen}$ & $time_{NE}$   \\
			\cline{1-13}
			2009  &
			30 & 1 & 32 & 8 & 37 & 81.90 & 3 & MB QC & 0 & AB SK NS NB PE NL & 2.68 & 0.07\\
			& & 2 & 62 & 9 & 1980 & 58.20 & 2 & BCYT  & 0 & BCYT  AB SK MB QC NS NB PE NL & 32.77 & 0.37\\
			& & 3 & 30 & 8 & 7 & 100.00 & 4 & BCYT  & 0 & NS NB PE & 1.60 & 0.06\\
			& & 4 & 40 & 7 & 216 & 3.40 & 2 & AB & 0 & SK MB ON QC NS NB PE NL & 3.05 & 0.03\\
			& & 5 & 40 & 7 & 1140 & 48.20 & 4 & AB & 0 & SK MB NS NB PE NL & 3.74 & 0.07\\
			& & 6 & 20 & 5 & 28 & 35.10 & 2 & MB ON & 0 & BCYT  AB SK QC NS NB PE NL & 0.37 & 0.03\\
			& & 7 & 31 & 6 & 395 & 42.00 & 2 & BCYT  AB SK ON & 0 & MB QC NS NB PE NL & 1.57 & 0.06\\
			& & 8 & 29 & 8 & 2 & 50.10 & 2 & BCYT  MB ON QC & 0 & NS NB PE NL & 0.95 & 0.04\\
			& & 9 & 51 & 10 & 118 & 78.90 & 3 & BCYT  & 0 & SK MB ON NS NB PE NL & 33.22 & 0.06\\
			& & 10 & 16 & 3 & 75 & 95.30 & 2 & BCYT  ON & 0 & AB SK MB ON NS NB PE NL & 0.19 & 0.05\\
			2013 &
			30 & 1 & 1 & 1 & 1 & 100.00 & 1 & SK ON & 0 & BCYT  AB MB QC NS NB PE NL & 0.21 & 0.03\\
			& & 2 & 8 & 4 & 10 & 100.00 & 3 & BCYT  ON & 0 & MB NS NB PE NL & 0.17 & 0.04\\
			& & 3 & 9 & 3 & 4 & 100.00 & 1 & BCYT  AB MB ON QC NS NB & 0 & BCYT  SK ON QC NS PE NL & 0.31 & 0.04\\
			& & 4 & 19 & 5 & 4 & 100.00 & 2 & ON & 0 & SK MB QC NS NB PE & 0.20 & 0.04\\
			& & 5 & 25 & 5 & 135 & 18.50 & 3 & QC & 0 & BCYT  SK MB ON NS NB PE NL & 0.35 & 0.03\\
			& & 6 & 3 & 1 & 3 & 100.00 & 1 & BCYT  MB ON & 0 & AB SK MB ON QC NS NB PE NL & 0.07 & 0.04\\
			& & 7 & 8 & 3 & 6 & 83.90 & 2 & ON & 0 & AB SK MB NS NB PE NL & 0.07 & 0.03\\
			& & 8 & 4 & 2 & 3 & 33.20 & 1 & BCYT  ON & 0 & AB SK MB QC NS NB PE NL & 0.08 & 0.02\\
			& & 9 & 1 & 1 & 1 & 100.00 & 0 & BCYT  AB SK MB ON QC NS NB PE NL & 0 & BCYT  AB SK MB ON QC NS NB PE NL & 0.28 & 0.03\\
			& & 10 & 3 & 2 & 2 & 100.00 & 1 & BCYT  ON & 0 & AB SK MB QC NS NB PE NL & 0.07 & 0.04\\
			2009 &
			40 & 1 & 63 & 10 & 18720 & 47.60 & 6 & BCYT  & 0 & NS NB PE NL & 258.07 & 0.15\\
			& & 2 & 12 & 6 & 2 & 100.00 & 3 & ON & 0 & SK MB QC NS PE NL & 0.23 & 0.04\\
			& & 3 & 31 & 6 & 991 & 22.50 & 3 & BCYT  QC & 0 & AB SK MB ON QC NS PE NL & 1.94 & 0.05\\
			& & 4 & 61 & 10 & 1512 & 77.40 & 4 & BCYT  & 0 & AB SK ON PE NL & 279.68 & 2.06\\
			& & 5 & 56 & 13 & 1736 & 68.90 & 6 & ON & 0 & SK QC NS PE NL & 613.21 & 0.11\\
			& & 6 & 24 & 4 & 111 & 10.20 & 2 & BCYT  ON & 0 & AB SK MB ON QC NS NB PE NL & 0.24 & 0.02\\
			& & 7 & 76 & 10 & 177270 & 29.20 & 4 & BCYT  AB & 0 & SK MB ON QC NS NB PE NL & 342.42 & 0.20\\
			& & 8 & 11 & 3 & 14 & 100.00 & 2 & AB ON & 0 & BCYT  SK MB ON NS NB PE NL & 0.17 & 0.04\\
			& & 9 & 74 & 10 & 81472 & 26.00 & 3 & AB ON & 0 & BCYT  SK MB QC NS NB PE NL & 1032.54 & 0.33\\
			& & 10 & 50 & 10 & 1470 & 1.00 & 2 & BCYT  ON & 0 & AB SK MB QC NS NB PE NL & 70.88 & 0.03\\
			2013  &
			40 & 1 & 11 & 5 & 5 & 39.90 & 1 & MB ON & 0 & BCYT  AB SK QC NS NB PE NL & 0.09 & 0.03\\
			& & 2 & 6 & 3 & 3 & 69.50 & 2 & BCYT  & 0 & AB MB QC NS NB PE NL & 0.09 & 0.04\\
			& & 3 & 35 & 6 & 208 & 67.70 & 3 & ON & 0 & BCYT  AB SK PE NL & 1.32 & 0.15\\
			& & 4 & 0 & 0 & 1  & 100.00  & 0 &  BCYT  AB SK MB ON QC NS NB PE NL & 0  &  BCYT  AB SK MB ON QC NS NB PE NL  & 0.00  & 0.00\\
			& & 5 & 20 & 4 & 10 & 30.50 & 2 & BCYT  & 0 & SK ON QC NS NB PE NL & 0.13 & 0.03\\
			& & 6 & 34 & 6 & 356 & 78.80 & 3 & BCYT  ON & 0 & AB SK QC NS NB PE NL & 1.15 & 0.07\\
			& & 7 & 3 & 1 & 3 & 35.30 & 0 & BCYT  AB SK MB ON QC NS NB PE NL & 0 & BCYT  AB SK MB ON QC NS NB PE NL & 0.16 & 0.03\\
			& & 8 & 35 & 7 & 1108 & 18.00 & 2 & BCYT  ON & 0 & AB MB QC NB PE NL & 1.63 & 0.05\\
			& & 9 & 2 & 1 & 2 & 47.10 & 0 & BCYT  AB SK MB ON QC NS NB PE NL & 0 & BCYT  AB SK MB ON QC NS NB PE NL & 0.43 & 0.02\\
			& & 10 & 31 & 8 & 13 & 100.00 & 4 & ON & 0 & SK NS NB PE NL & 1.16 & 0.07\\
		\end{tabularx}
		\caption{Results for the  \#-KEG with the Canadian KPD instances.}
		\label{Table:Provinces_game_small_ins}
	\end{table}
	
\end{landscape}

\begin{landscape}
	\setlength{\tabcolsep}{6pt}
	\begin{table}[t]
		\hspace*{-5cm}
		\tiny
		\centering
		\begin{tabularx}{\textwidth}{rrrr|rrrrrrrrr}
			year & $\vert V \vert$ & ins & $\vert E \vert$ &
			$\vert M \vert_{max}$ & $\# \vert M \vert_{max}$   & \% \#-NE & $ IMP_{max}$  &  $Players_{max} $ &   $ IMP_{min}$ &  $Players_{min} $&
			$time_{gen}$ & $time_{NE}$   \\
			\cline{1-13}
			2009  &
			50 & 1 & 40 & 6 & 112 & 72.90 & 4 & BCYT  & 0 & SK NS NB PE NL & 1.83 & 0.10\\
			& & 2 & 128 & 15 &   &  &  & &  &  & tl & \\
			&  & 3 & 90 & 14 &   &  &  & &  &  & tl & \\
			& & 4 & 136 & 15 &   &  &  & &  &  & tl & \\
			& & 5 & 116 & 16 &   &  &  & &  &  & tl & \\
			& & 6 & 85 & 17 &                 &           &  &          &     &                                                & tl & \\
			&  & 7 & 99 & 10 & 303240 & 4.60 & 2 & ON & 0 & SK MB QC NS NB PE NL &  4870.62 & 21.72\\
			& &  8 & 99 & 16 &   &  &  & &  &  & tl & \\
			& &  9 & 74 & 12 & 872 & 66.70 & 4 & AB QC & 0 & MB NB PE NL & 1546.51 & 0.29\\
			& & 10 & 89 & 14 &   &  &  & &  &  & tl & \\
			2013  &
			50 & 1 & 8 & 3 & 2 & 100.00 & 1 & AB SK & 0 & BCYT  MB ON QC NS NB PE NL & 0.12 & 0.04\\
			&  & 2 & 23 & 5 & 30 & 100.00 & 2 & BCYT  QC & 0 & AB SK MB NS NB PE NL & 0.30 & 0.06\\
			&  & 3 & 24 & 4 & 287 & 26.40 & 2 & BCYT  & 0 & AB SK MB ON QC NS NB PE NL & 0.32 & 0.05\\
			&  & 4 & 29 & 7 & 38 & 73.40 & 3 & ON & 0 & AB SK QC NB PE NL & 1.31 & 0.05\\
			&  & 5 & 23 & 6 & 52 & 57.30 & 2 & BCYT  QC & 0 & AB SK MB ON NS PE NL & 0.39 & 0.05\\
			&  & 6 & 14 & 6 & 24 & 26.60 & 3 & AB & 0 & BCYT  SK MB NS NB PE NL & 0.16 & 0.04\\
			&  & 7 & 42 & 9 & 16 & 24.70 & 2 & BCYT  ON QC & 0 & MB NS NB PE NL & 9.36 & 0.04\\
			&  & 8 & 19 & 5 & 6 & 50.30 & 1 & BCYT  MB ON QC NS NB & 0 & AB SK PE NL & 0.16 & 0.03\\
			&  & 9 & 1 & 1 & 1 & 100.00 & 1 & BCYT  ON & 0 & AB SK MB QC NS NB PE NL & 0.22 & 0.04\\
			&  & 10 & 51 & 9 & 6736 & 79.00 & 6 & ON & 0 & SK QC NS NB PE NL & 64.80 & 0.20\\
		\end{tabularx}
		\caption{Continuation of Table~\ref{Table:Provinces_game_small_ins}. Results for the  \#-KEG with the Canadian KPD instances.}
		\label{Table:Provinces_game_small_ins_2}
	\end{table}
	
\end{landscape}

Tables~\ref{Table:Provinces_game_small_ins_3}, \ref{Table:Provinces_game_small_ins_4}, correspond to {\proKEGcard} with 3 players: Ontario, British Columbia \& Yukon and Alberta. The results in the tables indicate that when there are few matchings of maximum cardinality then the percentage of equilibria is very high, while in the reverse case the percentage of equilibria is small. For the cases of size 40 and 50, for year 2013, the prevalence of equilibria decreases in comparison with size 30 for the same year. Therefore, we can expect that in real instances, which typically have more than 100 pairs,  there is a low number of equilibria, probably explained by the fact that players also have large sets of feasible strategies. Thus, algorithms targeted to directly and efficiently compute social welfare equilibria become crucial to be able to compute all of them and, conversely, select the {\SWE} that optimizes a certain criterion. In this 3-player setting, columns $Players_{max}$ and $Players_{min}$, do not seem to present any trend, \ie, all three players seem to appear similar number of times in both of them, which might be explained by the fact that this is a more balanced setting, where players have similar contributions to the kidney exchange program.

Tables~\ref{Table:Provinces_game_small_ins_5}, \ref{Table:Provinces_game_small_ins_6} correspond to {\proKEGcard} with 3 players: Ontario, Quebec and Prince Edward Island. In these results, it is evident that Ontario and Quebec generally can take the most off a {\SWE} (see column $Players_{max}$), \ie, under an equilibrium, Ontario and Quebec are the players increasing the most their utilities in comparison to the ones achieved when they are by themselves.  On the other hand, for  Prince Edward Island there is always an equilibrium under which this player gets no advantage from participating in the market, except for instance $\vert V\vert=40$, year 2009, $ins=24$.

Among the results presented on this section, the number of players and their respective size does not seem to influence the percentage of equilibria.

\begin{landscape}
	\setlength{\tabcolsep}{6pt}
	\begin{table}[t]
		\hspace*{-5cm}
		\tiny
		\centering
		\begin{tabularx}{\textwidth}{rrrr|rrrrrrrrrrr}
			year & $\vert V \vert$ & ins & $\vert E \vert$ &
			$\vert M \vert_{max}$ & $\# \vert M \vert_{max}$   & \% \#-NE & $ IMP_{max}$  &  $Players_{max} $ &   $ IMP_{min}$ &  $Players_{min} $&
			{\IndA}$_{high}$ & {\IndA}$_{low}$ &
			$time_{gen}$ & $time_{NE}$   \\
			\cline{1-15}
			2009  &
			30 & 11 & 31 & 6 & 78 & 95.50 & 2 & AB ON & 1 & BCYT  AB ON & 10 & 8 & 1.32 & 0.09\\
			&  & 12 & 65 & 11 & 3 & 100.00 & 3 & BCYT  & 1 & ON & 16 & 16 & 114.31 & 0.07\\
			& & 13 & 21 & 6 & 65 & 4.30 & 1 & BCYT  ON & 0 & AB & 6 & 2 & 0.37 & 0.02\\
			& & 14 & 31 & 6 & 333 & 28.00 & 3 & BCYT  & 1 & ON & 8 & 6 & 1.30 & 0.04\\
			& &  15 & 28 & 5 & 127 & 86.80 & 2 & AB ON & 0 & BCYT  & 6 & 4 & 0.51 & 0.09\\
			& &  16 & 38 & 6 & 2001 & 32.40 & 3 & BCYT  AB & 0 & ON & 10 & 4 & 4.31 & 0.07\\
			& & 17 & 13 & 3 & 13 & 21.90 & 0 & BCYT  AB ON & 0 & BCYT  AB ON & 0 & 0 & 0.12 & 0.03\\
			& &  18 & 24 & 7 & 36 & 29.00 & 3 & BCYT  & 1 & AB & 12 & 8 & 0.47 & 0.04\\
			&  & 19 & 13 & 3 & 30 & 13.00 & 1 & BCYT  ON & 0 & AB & 2 & 2 & 0.19 & 0.03\\
			&  & 20 & 30 & 6 & 80 & 40.70 & 2 & BCYT  AB ON & 1 & BCYT  AB ON & 8 & 4 & 0.60 & 0.04\\
			2013 &
			30 & 11 & 6 & 3 & 2 & 100.00 & 2 & AB & 1 & BCYT  ON & 4 & 4 & 0.10 & 0.04\\
			&  & 12 & 2 & 1 & 2 & 100.00 & 0 & BCYT  AB ON & 0 & BCYT  AB ON & 0 & 0 & 0.33 & 0.03\\
			&  & 13 & 10 & 4 & 4 & 100.00 & 2 & AB ON & 0 & BCYT  & 8 & 4 & 0.23 & 0.04\\
			&  & 14 & 4 & 2 & 2 & 100.00 & 2 & ON & 1 & BCYT  AB & 4 & 4 & 0.11 & 0.03\\
			&  & 15 & 12 & 3 & 17 & 82.20 & 3 & AB & 0 & BCYT  & 6 & 4 & 0.18 & 0.04\\
			&  & 16 & 0 & 0 & 1  & 100.00  & 0  &  BCYT  AB ON & 0  &  BCYT  AB ON & 0 & 0 & 0.00 & 0.00 \\
			&  & 17 & 13 & 4 & 22 & 19.90 & 1 & BCYT  ON & 0 & AB & 2 & 2 & 0.13 & 0.02\\
			&  & 18 & 8 & 4 & 1 & 100.00 & 2 & ON & 0 & BCYT  AB & 4 & 4 & 0.13 & 0.04\\
			&  & 19 & 10 & 3 & 17 & 93.30 & 3 & BCYT  & 0 & AB ON & 6 & 4 & 0.09 & 0.04\\
			&  & 20 & 1 & 1 & 1 & 100.00 & 1 & BCYT  AB & 0 & ON & 2 & 2 & 0.60 & 0.03\\
			2009  &
			40 & 11 & 24 & 8 & 4 & 100.00 & 3 & BCYT  ON & 2 & AB & 10 & 10 & 0.59 & 0.05\\
			& &  12 & 13 & 6 & 12 & 17.00 & 2 & ON & 1 & BCYT  AB & 4 & 4 & 0.11 & 0.03\\
			&  & 13 & 59 & 10 & 34964 & 5.30 & 3 & BCYT  ON & 1 & AB & 14 & 6 & 406.44 & 0.10\\
			&  & 14 & 39 & 7 & 342 & 3.80 & 2 & BCYT  ON & 0 & AB & 4 & 4 & 3.83 & 0.04\\
			&  & 15 & 26 & 4 & 153 & 7.00 & 1 & BCYT  ON & 0 & AB & 2 & 2 & 0.34 & 0.02\\
			&  & 16 & 75 & 11 & 1334 & 2.40 & 3 & ON & 1 & AB & 8 & 8 & 1093.58 & 0.08\\
			&  & 17 & 33 & 8 & 218 & 51.20 & 3 & BCYT  AB & 1 & BCYT  ON & 12 & 8 & 2.53 & 0.07\\
			&  & 18 & 86 & 12 & 744 & 9.60 & 3 & AB & 1 & BCYT  & 16 & 8 & 1427.87 & 0.08\\
			&  & 19 & 52 & 10 & 128 & 30.10 & 6 & ON & 1 & BCYT  & 18 & 12 & 23.33 & 0.08\\
			&  & 20 & 92 & 12 & 4860 & 4.50 & 2 & BCYT  AB ON & 2 & BCYT  AB ON & 12 & 8 & 2293.17 & 0.12\\
			2013  &
			40 & 11 & 9 & 4 & 10 & 71.10 & 3 & BCYT  ON & 1 & AB & 8 & 6 & 0.07 & 0.04\\
			&  & 12 & 9 & 4 & 6 & 33.50 & 3 & BCYT  & 1 & AB & 6 & 6 & 0.07 & 0.02\\
			& & 13 & 15 & 3 & 54 & 3.80 & 0 & BCYT  AB ON & 0 & BCYT  AB ON & 0 & 0 & 0.08 & 0.01\\
			& & 14 & 22 & 5 & 16 & 51.00 & 2 & BCYT  ON & 0 & AB & 4 & 4 & 0.16 & 0.03\\
			& & 15 & 9 & 3 & 5 & 100.00 & 3 & AB ON & 1 & BCYT  & 6 & 6 & 0.06 & 0.04\\
			& &  16 & 32 & 4 & 735 & 4.70 & 0 & BCYT  AB ON & 0 & BCYT  AB ON & 4 & 0 & 0.48 & 0.02\\
			& & 17 & 6 & 2 & 9 & 36.10 & 1 & BCYT  ON & 0 & AB & 2 & 2 & 0.06 & 0.03\\
			& & 18 & 46 & 9 & 68 & 44.60 & 6 & BCYT  & 1 & AB & 16 & 14 & 10.19 & 0.05\\
			& & 19 & 5 & 2 & 6 & 16.70 & 0 & BCYT  AB ON & 0 & BCYT  AB ON & 0 & 0 & 0.05 & 0.03\\
			& &  20 & 13 & 5 & 36 & 64.90 & 2 & BCYT  AB ON & 0 & BCYT  & 6 & 4 & 0.09 & 0.04\\
		\end{tabularx}
		\caption{Results for the  \#-KEG with the Canadian KPD instances with the 3 largest players: Ontario, British Columbia \& Yukon and Alberta.}
		\label{Table:Provinces_game_small_ins_3}
	\end{table}
\end{landscape}
\begin{landscape}
	\setlength{\tabcolsep}{6pt}
	\begin{table}[t]
		\hspace*{-5cm}
		\tiny
		\centering
		\begin{tabularx}{\textwidth}{rrrr|rrrrrrrrrrr}
			year & $\vert V \vert$ & ins & $\vert E \vert$ &
			$\vert M \vert_{max}$ & $\# \vert M \vert_{max}$   & \% \#-NE & $ IMP_{max}$  &  $Players_{max} $ &   $ IMP_{min}$ &  $Players_{min} $&
			{\IndA}$_{high}$ & {\IndA}$_{low}$ &
			$time_{gen}$ & $time_{NE}$   \\
			\cline{1-15}
			2009  &
			50 & 11 & 47 & 9 & 117 & 24.00 & 4 & BCYT  & 1 & AB & 14 & 10 & 17.50 & 0.07\\
			& & 12 & 168 & 15 &   &  &  & &  & & & & tl & \\
			& & 13 & 103 & 12 &   &  &  & &  & & & & tl & \\
			& & 14 & 88 & 14 &   &  &  & &  & & & & tl & \\
			& & 15 & 48 & 8 & 758 & 11.50 & 4 & BCYT  & 1 & AB & 8 & 8 & 22.60 & 0.06\\
			& & 16 & 52 & 12 & 1188 & 10.50 & 5 & AB ON & 3 & BCYT  & 14 & 12 & 218.37 & 0.05\\
			& & 17 & 56 & 11 & 384 & 46.30 & 4 & ON & 1 & BCYT  AB & 18 & 10 & 200.67 & 0.18\\
			& & 18 & 76 & 11 & 145292 & 11.10 & 2 & ON & 0 & BCYT  AB & 12 & 4 & 2290.89 & 0.34\\
			& & 19 & 77 & 11 & 18260 & 28.10 & 4 & ON & 2 & BCYT  AB & 16 & 8 & 3292.81 & 0.20\\
			&& 20 & 63 & 11 & 225 & 37.00 & 4 & ON & 1 & BCYT  AB & 16 & 8 & 328.63 & 0.09\\
			2013    &
			50 & 11 & 22 & 6 & 46 & 8.20 & 0 & BCYT  AB ON & 0 & BCYT  AB ON & 0 & 0 & 0.55 & 0.03\\
			& & 12 & 22 & 6 & 78 & 57.90 & 4 & AB ON & 1 & BCYT  & 10 & 8 & 0.33 & 0.04\\
			& & 13 & 23 & 6 & 35 & 49.10 & 4 & ON & 1 & BCYT  AB & 10 & 6 & 0.29 & 0.05\\
			& & 14 & 30 & 6 & 12 & 100.00 & 2 & AB ON & 0 & BCYT  & 10 & 8 & 0.56 & 0.09\\
			& & 15 & 20 & 6 & 26 & 41.40 & 2 & BCYT  AB ON & 1 & BCYT  AB ON & 6 & 4 & 0.26 & 0.04\\
			& & 16 & 12 & 4 & 5 & 79.40 & 2 & BCYT  & 0 & AB ON & 6 & 4 & 0.09 & 0.04\\
			& & 17 & 52 & 7 & 445 & 47.70 & 3 & ON & 0 & AB & 12 & 6 & 11.13 & 0.28\\
			& & 18 & 14 & 4 & 15 & 34.60 & 2 & BCYT  ON & 0 & AB & 4 & 4 & 0.14 & 0.02\\
			& & 19 & 25 & 7 & 4 & 100.00 & 2 & AB & 1 & BCYT  ON & 10 & 8 & 0.41 & 0.05\\
			& & 20 & 10 & 3 & 4 & 49.90 & 0 & BCYT  AB ON & 0 & BCYT  AB ON & 0 & 0 & 0.07 & 0.03\\
		\end{tabularx}
		\caption{Continuation of Table~\ref{Table:Provinces_game_small_ins_3}. Results for the  \#-KEG with the Canadian KPD instances with  the 3 largest players: Ontario, British Columbia \& Yukon and Alberta.}
		\label{Table:Provinces_game_small_ins_4}
	\end{table}
\end{landscape}


\begin{landscape}
	\setlength{\tabcolsep}{6pt}
	\begin{table}[t]
		\hspace*{-5cm}
		\tiny
		\centering
		\begin{tabularx}{\textwidth}{rrrr|rrrrrrrrrrr}
			year & $\vert V \vert$ & ins & $\vert E \vert$ &
			$\vert M \vert_{max}$ & $\# \vert M \vert_{max}$   & \% \#-NE & $ IMP_{max}$  &  $Players_{max} $ &   $ IMP_{min}$ &  $Players_{min} $&
			{\IndA}$_{high}$ & {\IndA}$_{low}$ &
			$time_{gen}$ & $time_{NE}$   \\
			\cline{1-15}
			2009  &
			30 & 21 & 32 & 8 & 37 & 85.60 & 3 & QC & 0 & PE & 14 & 10 & 2.41 & 0.07\\
			& & 22 & 62 & 9 & 1980 & 59.00 & 1 & ON QC & 0 & QC PE & 16 & 4 & 32.54 & 0.37\\
			& & 23 & 30 & 8 & 7 & 100.00 & 3 & QC & 0 & PE & 14 & 14 & 1.50 & 0.06\\
			& & 24 & 40 & 7 & 216 & 2.90 & 1 & QC & 0 & ON QC PE & 6 & 6 & 3.11 & 0.03\\
			& & 25 & 40 & 7 & 1140 & 47.70 & 2 & ON QC & 0 & PE & 14 & 8 & 3.73 & 0.07\\
			& & 26 & 20 & 5 & 28 & 33.80 & 2 & ON & 0 & QC PE & 6 & 4 & 0.29 & 0.04\\
			& & 27 & 31 & 6 & 395 & 39.70 & 2 & ON & 0 & QC PE & 12 & 6 & 1.38 & 0.06\\
			& & 28 & 29 & 8 & 2 & 51.70 & 2 & ON QC & 0 & PE & 10 & 10 & 0.82 & 0.04\\
			& & 29 & 51 & 10 & 118 & 77.80 & 2 & QC & 0 & ON PE & 18 & 14 & 33.40 & 0.06\\
			& & 30 & 16 & 3 & 75 & 94.60 & 2 & ON & 0 & ON PE & 6 & 4 & 0.08 & 0.05\\
			2013  &
			30 & 21 & 1 & 1 & 1 & 100.00 & 1 & ON & 0 & QC PE & 2 & 2 & 0.17 & 0.03\\
			& &22 & 8 & 4 & 10 & 100.00 & 3 & ON & 0 & PE & 8 & 8 & 0.10 & 0.04\\
			& &23 & 9 & 3 & 4 & 100.00 & 1 & ON QC & 0 & ON QC PE & 4 & 4 & 0.08 & 0.04\\
			& &24 & 19 & 5 & 4 & 100.00 & 2 & ON & 0 & QC PE & 8 & 8 & 0.38 & 0.04\\
			& &25 & 25 & 5 & 135 & 19.00 & 3 & QC & 0 & ON PE & 6 & 6 & 0.37 & 0.03\\
			& &26 & 3 & 1 & 3 & 100.00 & 1 & ON & 0 & ON QC PE & 2 & 2 & 0.06 & 0.04\\
			& &27 & 8 & 3 & 6 & 83.40 & 2 & ON & 0 & PE & 4 & 4 & 0.15 & 0.03\\
			& &28 & 4 & 2 & 3 & 31.40 & 1 & ON & 0 & QC PE & 2 & 2 & 0.15 & 0.02\\
			& &29 & 1 & 1 & 1 & 100.00 & 0 & ON QC PE & 0 & ON QC PE & 0 & 0 & 0.17 & 0.03\\
			& &30 & 3 & 2 & 2 & 100.00 & 1 & ON & 0 & QC PE & 2 & 2 & 0.16 & 0.03\\
			2009  &
			40 & 21 & 32 & 6 & 592 & 7.50 & 1 & ON & 0 & QC PE & 6 & 2 & 1.67 & 0.04\\
			& & 22 & 79 & 12 & 3696 & 19.60 & 2 & QC & 0 & PE & 18 & 14 & 4917.50 & 0.08\\
			& & 23 & 8 & 2 & 15 & 28.80 & 0 & ON QC PE & 0 & ON QC PE & 0 & 0 & 0.10 & 0.03\\
			& & 24 & 40 & 9 & 200 & 84.40 & 4 & ON & 1 & PE & 14 & 10 & 7.75 & 0.13\\
			& & 25 & 67 & 9 & 9678 & 62.60 & 4 & QC & 0 & PE & 18 & 14 & 503.59 & 0.27\\
			& & 26 & 33 & 6 & 611 & 19.00 & 2 & ON QC & 0 & QC PE & 8 & 4 & 2.01 & 0.04\\
			& & 27 & 55 & 9 & 5792 & 40.70 & 5 & ON & 0 & PE & 16 & 12 & 102.98 & 0.08\\
			& & 28 & 48 & 10 & 232 & 74.60 & 4 & ON & 0 & PE & 14 & 10 & 108.22 & 0.11\\
			& & 29 & 61 & 10 & 84 & 100.00 & 5 & QC & 0 & ON PE & 18 & 18 & 109.50 & 0.18\\
			& & 30 & 37 & 9 & 421 & 88.40 & 3 & QC & 0 & PE & 18 & 14 & 6.68 & 0.08\\
			2013 &
			40 & 21 & 27 & 6 & 72 & 100.00 & 5 & ON & 0 & PE & 12 & 12 & 0.55 & 0.06\\
			& & 22 & 18 & 5 & 36 & 100.00 & 3 & ON & 0 & QC PE & 6 & 6 & 0.22 & 0.08\\
			& & 23 & 15 & 5 & 10 & 100.00 & 3 & ON & 0 & QC PE & 10 & 8 & 0.11 & 0.05\\
			& & 24 & 7 & 2 & 5 & 100.00 & 1 & ON PE & 0 & ON QC PE & 2 & 2 & 0.11 & 0.04\\
			& & 25 & 16 & 4 & 6 & 100.00 & 3 & ON & 0 & PE & 8 & 8 & 0.15 & 0.05\\
			& & 26 & 6 & 3 & 4 & 100.00 & 2 & ON & 0 & QC PE & 6 & 6 & 0.17 & 0.04\\
			& & 27 & 50 & 7 & 4046 & 7.00 & 3 & QC & 0 & ON PE & 12 & 4 & 14.13 & 0.04\\
			& & 28 & 34 & 8 & 54 & 27.30 & 2 & ON QC & 0 & QC PE & 12 & 10 & 2.96 & 0.06\\
			& & 29 & 14 & 3 & 28 & 6.80 & 1 & ON QC & 0 & QC PE & 2 & 2 & 0.17 & 0.03\\
			& & 30 & 7 & 4 & 3 & 100.00 & 3 & ON & 0 & PE & 8 & 8 & 0.12 & 0.04\\
		\end{tabularx}
		\caption{Results for the  \#-KEG with the Canadian KPD instances with the largest player, a medium player and the smallest player: Ontario, Quebec and Prince Edward Island.}
		\label{Table:Provinces_game_small_ins_5}
	\end{table}
	
\end{landscape}

\begin{landscape}
	\setlength{\tabcolsep}{6pt}
	\begin{table}[t]
		\hspace*{-5cm}
		\tiny
		\centering
		\begin{tabularx}{\textwidth}{rrrr|rrrrrrrrrrr}
			year & $\vert V \vert$ & ins & $\vert E \vert$ &
			$\vert M \vert_{max}$ & $\# \vert M \vert_{max}$   & \% \#-NE & $ IMP_{max}$  &  $Players_{max} $ &   $ IMP_{min}$ &  $Players_{min} $&
			{\IndA}$_{high}$ & {\IndA}$_{low}$ &
			$time_{gen}$ & $time_{NE}$   \\
			\cline{1-15}
			2009  &
			50 & 21 & 101 & 13 &  &  &  & &  & & & & tl & \\
			& & 22 & 85 & 11 &   &  &  & &  & & & & tl & \\
			&  & 23 & 82 & 10 & 128484 & 2.30 & 2 & ON & 0 & QC PE & 14 & 8 & 6105.72 & 1.97\\
			& & 24 & 100 & 13 &   &  &  & &  & & & & tl & \\
			& & 25 & 55 & 11 & 980 & 27.20 & 3 & ON & 0 & QC PE & 16 & 10 & 106.15 & 0.11\\
			& & 26 & 29 & 8 & 192 & 4.40 & 2 & ON & 0 & PE & 10 & 6 & 4.21 & 0.02\\
			& & 27 & 94 & 11 &   &  &  & &  & & & & tl & \\
			& & 28 & 47 & 9 & 786 & 34.40 & 5 & ON & 0 & QC PE & 14 & 10 & 38.16 & 0.09\\
			& & 29 & 53 & 9 & 6432 & 39.70 & 2 & ON & 0 & QC PE & 16 & 10 & 52.67 & 0.05\\
			& & 30 & 73 & 11 & 1368 & 6.30 & 2 & ON & 0 & PE & 12 & 10 & 899.55 & 0.07\\
			2013  &
			50 & 21 & 9 & 4 & 4 & 24.10 & 1 & ON QC & 0 & PE & 4 & 4 & 0.13 & 0.03\\
			& & 22 & 15 & 5 & 23 & 52.60 & 2 & ON & 0 & QC PE & 4 & 2 & 0.41 & 0.03\\
			& & 23 & 23 & 6 & 33 & 79.60 & 2 & ON QC & 0 & PE & 8 & 6 & 0.55 & 0.06\\
			& & 24 & 41 & 10 & 36 & 51.20 & 3 & QC & 0 & PE & 18 & 16 & 11.99 & 0.06\\
			& & 25 & 14 & 5 & 27 & 69.10 & 2 & ON & 0 & ON QC PE & 6 & 4 & 0.16 & 0.04\\
			& & 26 & 37 & 6 & 163 & 8.90 & 2 & ON & 0 & QC PE & 4 & 4 & 1.74 & 0.05\\
			& & 27 & 25 & 5 & 113 & 19.50 & 3 & ON & 0 & PE & 8 & 6 & 0.40 & 0.03\\
			& & 28 & 38 & 8 & 39 & 86.10 & 4 & QC & 0 & PE & 14 & 12 & 2.60 & 0.09\\
			& & 29 & 21 & 5 & 48 & 100.00 & 4 & QC & 0 & PE & 10 & 10 & 0.24 & 0.06\\
			& & 30 & 35 & 6 & 1861 & 39.90 & 3 & ON QC & 0 & ON QC PE & 12 & 6 & 3.32 & 0.06\\
		\end{tabularx}
		\caption{Continuation of Talbe~\ref{Table:Provinces_game_small_ins_5}. Results for the  \#-KEG with the Canadian KPD instances with the largest player, a medium player and the smallest player: Ontario, Quebec and Prince Edward Island.}
		\label{Table:Provinces_game_small_ins_6}
	\end{table}
	
\end{landscape}

\paragraph{{\proKEGwei}}

{\proKEGwei} is not guaranteed to have a social welfare equilibrium, \ie, a Nash equilibrium that is a matching of maximum weight. Thus, we concentrate on determining $\epsilon$-{\SWE}, \ie, the pure Nash equilibrium with highest social welfare value with $\epsilon$ being the ratio between this equilibrium welfare and the social optimum.
\begin{figure} \centering
	\includegraphics[scale=0.45]{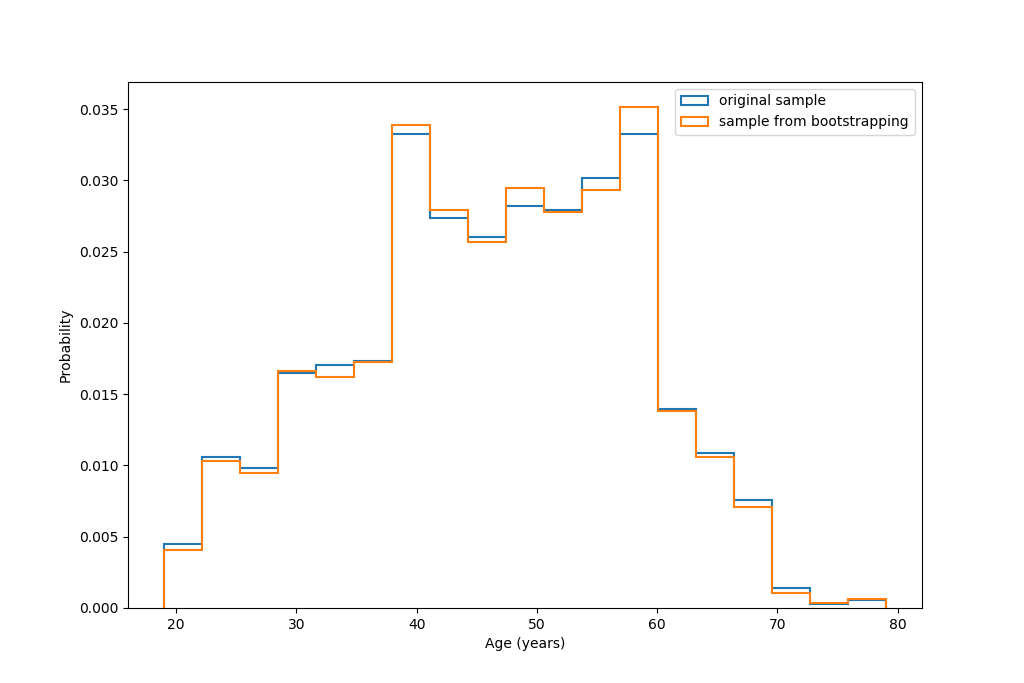}
	\caption{Histogram of ages on the Canadian Kidney Paired Donation Program}
	\label{fig:AgeDist}
\end{figure}
In the following results, the goal of the {\IndA} is to compute a maximum-weighted international matching, as referred to in the beginning of this section. Due to the lack of data on donors' characteristics  fundamental to compute quantities such as Risk Index for Living Donor Kidney Transplantation or to use machine learning models to predict graft survival, we generated the weights for this game according to the donors' age distribution on the Canadian Kidney Paired Donation Program; see Figure~\ref{fig:AgeDist}.  From the original data, we generated random ages with replacement (bootstrapping).
Indeed,  the literature on kidney transplantion has empirically demonstrated that donors' age is a significant predictor of graft survivability \citep{Terasaki1997,Berger2887}. In order to generate weights, we sample uniformly with replacement from the data points $x$ with donors' age; more precisely, let $x_u$ be an element of our sampling, then the weights are generated as follows: $$w_{uv}=\frac{-x_u+\max(x)}{\max(x)}, \quad \forall u \in N(v),$$
where $N(v)$ denotes the neighborhood of vertex $v$.
In this way, a younger donor has associated higher weights. Note that the formula above generates normalized weights, \ie,  between 0 and 1.

Tables~\ref{Table:Provinces_game_weigh_1_10}, \ref{Table:Provinces_game_weight_1_10_V50}, \ref{Table:Provinces_game_weight_3}, \ref{Table:Provinces_game_weight_4}, \ref{Table:Provinces_game_weight_5}, \ref{Table:Provinces_game_weight_6}, summarize our computational results concerning {\proKEGwei}. In these tables, we do not report the value of the maximum-weighted matching since it is not easy to understand its meaning as in the cardinality case. Therefore, the tables provide relative results. For instance,
$K$ are the best $K$ maximum-weighted matchings (which will be at most $1,000$), \% W-NE is the percentage of equilibria among the best $K$ maximum-weighted matchings,  $\epsilon$ is the ratio between the first computed weighted matching that is an equilibrium (iter-$th$ matching)  and the maximum-weighted matching; when an equilibrium is not found, the table reports the maximum value for $\epsilon$; note that $\epsilon$ close to 1 means that the first equilibrium computed is close to be a social optimum; iter is the iteration in Lawler algorithm \cite{Lawler1972}, in which the first equilibrium is determined;  $Players_{max}$ and $Players_{min}$ are defined as before; $IMP_{max}$ and $IMP_{min}$ are the maximum and minimum utility increases under equilibria divided by the total weight of the first equilibrium determined (so that this value is always smaller than 1); finally, $IA_{high}$ and $IA_{low}$ are the maximum and minimum fraction of international edges among the equilibria computed, respectively.


In Tables~\ref{Table:Provinces_game_weigh_1_10}, \ref{Table:Provinces_game_weight_1_10_V50}, all 10 provinces participate in the game.  Instances associated with year 2013 seem to be more likely to have a {\SWE}, which can be explained by the fact that they correspond to sparse graphs where players have less alternative strategies to deviate to. In these results, there is a significant number of instances for  which in the $K$ best weighted matchings there was no equilibria; in fact, \eg, for instance $\vert V \vert =30$, year 2009, $ins=2$, there are no pure equilibria among the first 1000 maximum-weighted matchings. Columns $Player_{max}$ and $Player_{min}$ seem to present the trend saw on the {\proKEGcard} results: the four smallest provinces typically appear in $Player_{min}$ while the largest appear in $Player_{max}$ . On the other hand, while in  {\proKEGcard} there was usually more than one province achieving the maximum improvement, here, in general,  there is  a single province attaining the maximum advantage.

Tables~\ref{Table:Provinces_game_weight_3} and \ref{Table:Provinces_game_weight_4}  report our computational results concerning {\proKEGwei} with the three largest provinces. In a majority of instances, the percentage of equilibria is small, in particular for all instances of size $\vert V \vert =50$. This might be explained by the fact that all players own a significant percentage of the pairs in the game and thus, they have more feasible strategies, which reflects to incentives for deviating from a $k$-th best weighted matching. It is interesting to note, that among all players, the largest one, Ontario, appears few times in the column $Players_{min}$ meaning that these $K$ best weighted matchings seem to fairly compensate the contribution of this player. Finally, observe that in most of the instances,  {\IndA}  establishes a significant benefit for  international exchanges.

Tables~\ref{Table:Provinces_game_weight_5}  and \ref{Table:Provinces_game_weight_6}  report our computational results concerning {\proKEGwei} with a large, a medium and a small player.  Most of these instances have a $\epsilon$-{\SWE} with $\epsilon>0.81$.  Combining this with the fact that the first  $\epsilon$-{\SWE} is generally determined before the $\frac{K}{2}$-th best weighted matching, it can be argued that these  $\epsilon$-{\SWE}  can be reasonable solutions for the game as the social optimum is not significantly sacrificed. Prince Edward Island is almost always present in the column $Players_{min}$, while in the column $Players_{max}$ Ontario appears more often than Quebec and  Prince Edward Island almost does not appear. Once again, this seems to indicate that equilibria provide fair benefits to the participating players.

\begin{landscape}
	\setlength{\tabcolsep}{6pt}
	\begin{table}[t]
		\hspace*{-7cm}
		\tiny
		\centering
		\begin{tabularx}{\textwidth}{rrrr|rrrrrrrrrrrrr}
			year & $\vert V \vert$ & ins & $\vert E \vert$ &
			$W(M)_{max}$ & $K$   & \% W-NE & $\epsilon$ & iter &$ IMP_{max}$  &  $Players_{max} $ &   $ IMP_{min}$ &  $Players_{min} $&
			{\IndA}$_{high}$ & {\IndA}$_{low}$ &
			$time_{gen}$ & $time_{NE}$   \\
			\cline{1-17}
			2009    &   30   &   1   &   32   &   8.73   &   1000   &   0.20   &  0.99  &   4   & 0.21 & QC & 0.00 & AB SK NS NB PE NL & 0.66 & 0.54 &   29.06   &   0.03\\
			&     &   2   &   62   &   8.82   &   1000   &   0.00   &  $<$ 0.96  &     &     &     &      &     &       &       &   100.61   &   0.03\\
			&     &   3   &   30   &   9.39   &   1000   &   24.80   &  0.97  &   3   & 0.23 & QC & 0.00 & AB SK NS NB PE NL & 0.86 & 0.50 &   37.13   &   0.05\\
			&     &   4   &   40   &   7.82   &   1000   &   0.30   &  1.00  &   0   & 0.19 & AB & 0.00 & SK MB QC NS NB PE NL & 0.58 & 0.41 &   47.08   &   0.03\\
			&     &   5   &   40   &   8.80   &   1000   &   0.70   &  0.97  &   118   & 0.22 & BCYT  & 0.00 & SK MB NB PE NL & 0.68 & 0.63 &   58.32   &   0.04\\
			&     &   6   &   20   &   6.01   &   567   &   0.18   &  0.99  &   1   & 0.23 & MB & 0.00 & BCYT  AB SK QC NS NB PE NL & 0.39 & 0.39 &   7.10   &   0.03\\
			&     &   7   &   31   &   6.95   &   1000   &   0.20   &  0.95  &   20   & 0.17 & AB & -0.00 & QC & 0.60 & 0.46 &   34.73   &   0.02\\
			&     &   8   &   29   &   8.28   &   1000   &   2.40   &  0.98  &   1   & 0.15 & MB & 0.00 & AB SK QC NS NB PE NL & 0.65 & 0.31 &   30.62   &   0.03\\
			&     &   9   &   51   &   11.07   &   1000   &   0.20   &  0.94  &   164   & 0.09 & BCYT  & 0.00 & MB NS NB PE NL & 0.63 & 0.63 &   94.40   &   0.04\\
			&     &   10   &   16   &   3.41   &   157   &   36.31   &  1.00  &   0   & 0.37 & BCYT  & 0.00 & BCYT  AB SK MB QC NS NB PE NL & 1.00 & 0.25 &   1.69   &   0.04\\
			2013    &   30   &   1   &   1   &   0.78   &   1   &   100.00   &  1.00  &   0   & 0.58 & SK & 0.00 & BCYT  AB MB QC NS NB PE NL & 1.00 & 1.00 &   0.01   &   0.03\\
			&     &   2   &   8   &   4.51   &   65   &   100.00   &  1.00  &   0   & 0.52 & ON & 0.00 & all players & 1.00 & 0.14 &   0.31   &   0.04\\
			&     &   3   &   9   &   3.39   &   29   &   6.90   &  1.00  &   0   & 0.29 & ON & 0.00 & BCYT  AB SK QC NS NB PE NL & 0.73 & 0.44 &   0.25   &   0.03\\
			&     &   4   &   19   &   5.65   &   318   &   0.63   &  0.85  &   6   & 0.15 & BCYT  & 0.00 & SK MB ON QC NS NB PE NL & 0.39 & 0.22 &   2.82   &   0.01\\
			&     &   5   &   25   &   5.32   &   1000   &   0.30   &  0.99  &   2   & 0.28 & QC & 0.00 & BCYT  AB SK MB NS NB PE NL & 0.57 & 0.42 &   22.14   &   0.03\\
			&     &   6   &   3   &   1.22   &   3   &   100.00   &  1.00  &   0   & 0.61 & BCYT  & 0.00 & AB SK MB ON QC NS NB PE NL & 1.00 & 0.90 &   0.02   &   0.03\\
			&     &   7   &   8   &   3.88   &   29   &   6.90   &  1.00  &   0   & 0.27 & ON & 0.00 & SK MB QC NS NB PE NL & 1.00 & 0.63 &   0.17   &   0.01\\
			&     &   8   &   4   &   2.13   &   7   &   14.29   &  0.96  &   1   & 0.28 & ON & 0.00 & AB SK MB QC NS NB PE NL & 0.42 & 0.42 &   0.03   &   0.01\\
			&     &   9   &   1   &   1.09   &   1   &   100.00   &  1.00  &   0   & 0.00 & QC & 0.00 & BCYT  AB SK MB ON NS NB PE NL & 0.00 & 0.00 &   0.01   &   0.03\\
			&     &   10   &   3   &   1.81   &   5   &   60.00   &  1.00  &   0   & 0.37 & ON & 0.00 & all players & 0.66 & 0.00 &   0.02   &   0.03\\
			2009    &   40   &   1   &   63   &   10.78   &   1000   &   0.00   &  $<$ 0.97  &     &     &     &      &     &       &       &   83.05   &   0.04\\
			&     &   2   &   12   &   6.95   &   207   &   0.48   &  1.00  &   0   & 0.21 & BCYT  & 0.00 & SK MB QC NS PE NL & 0.83 & 0.83 &   1.85   &   0.03\\
			&     &   3   &   31   &   7.30   &   1000   &   1.80   &  1.00  &   0   & 0.18 & BCYT  & 0.00 & AB SK MB QC NS NB PE NL & 0.45 & 0.32 &   44.15   &   0.04\\
			&     &   4   &   61   &   11.22   &   1000   &   0.00   &  $<$ 0.93  &     &     &     &      &     &       &       &   68.99   &   0.05\\
			&     &   5   &   56   &   14.53   &   1000   &   0.00   &  $<$ 0.97  &     &     &     &      &     &       &       &   71.56   &   0.04\\
			&     &   6   &   24   &   4.02   &   491   &   0.20   &  0.91  &   13   & 0.38 & BCYT  & 0.00 & AB SK MB QC NS PE NL & 0.49 & 0.49 &   11.77   &   0.01\\
			&     &   7   &   76   &   10.91   &   1000   &   0.00   &  $<$ 1.00  &     &     &     &      &     &       &       &   113.62   &   0.04\\
			&     &   8   &   11   &   3.65   &   55   &   10.91   &  0.74  &   16   & 0.34 & QC & 0.00 & BCYT  AB SK MB ON NS NB PE NL & 0.35 & 0.25 &   0.69   &   0.03\\
			&     &   9   &   74   &   11.59   &   1000   &   0.00   &  $<$ 0.97  &     &     &     &      &     &       &       &   108.39   &   0.06\\
			&     &   10   &   50   &   11.03   &   1000   &   0.10   &  0.94  &   395   & 0.10 & ON & 0.00 & AB SK MB QC NS NB PE NL & 0.20 & 0.20 &   43.35   &   0.03\\
			2013   &   40   &   1   &   11   &   5.36   &   139   &   0.72   &  0.97  &   2   & 0.10 & MB & -0.00 & BCYT  & 0.16 & 0.16 &   1.09   &   0.01\\
			&     &   2   &   6   &   3.23   &   18   &   11.11   &  0.98  &   1   & 0.26 & ON & 0.00 & BCYT  SK MB QC NS NB PE NL & 0.75 & 0.42 &   0.10   &   0.03\\
			&     &   3   &   35   &   7.46   &   1000   &   0.70   &  0.93  &   80   & 0.27 & ON & 0.00 & BCYT  AB SK PE NL & 0.78 & 0.56 &   32.09   &   0.04\\
			&     &   4   &   0   &   0.00   &   1   &   100.00   &  1.00  &   0   & 0.00 & all players & 0.00 & all players & 0.00 & 0.00 &   0.00   &   0.03\\
			&     &   5   &   20   &   4.42   &   241   &   2.07   &  1.00  &   0   & 0.29 & BCYT  & -0.00 & ON & 0.57 & 0.26 &   3.74   &   0.02\\
			&     &   6   &   34   &   7.45   &   1000   &   0.50   &  0.94  &   37   & 0.27 & BCYT  & 0.00 & AB SK MB QC NB PE NL & 0.67 & 0.50 &   40.54   &   0.02\\
			&     &   7   &   3   &   1.50   &   3   &   33.33   &  0.81  &   2   & 0.00 & all players & 0.00 & all players & 0.00 & 0.00 &   0.02   &   0.03\\
			&     &   8   &   35   &   9.36   &   1000   &   3.50   &  0.98  &   36   & 0.16 & BCYT  & 0.00 & AB MB QC NB PE NL & 0.57 & 0.37 &   29.39   &   0.02\\
			&     &   9   &   2   &   1.42   &   2   &   50.00   &  1.00  &   0   & 0.00 & all players & 0.00 & all players & 0.00 & 0.00 &   0.01   &   0.02\\
			&     &   10   &   31   &   8.78   &   1000   &   1.90   &  1.00  &   0   & 0.33 & ON & 0.00 & BCYT  SK NS NB PE NL & 0.88 & 0.54 &   27.72   &   0.04\\
		\end{tabularx}
		\caption{Results for the  {\proKEGwei} with the Canadian KPD instances.}
		\label{Table:Provinces_game_weigh_1_10}
	\end{table}

\end{landscape}

\begin{landscape}
	\setlength{\tabcolsep}{6pt}
	\begin{table}[t]
		\hspace*{-7cm}
		\tiny
		\centering
		\begin{tabularx}{\textwidth}{rrrr|rrrrrrrrrrrrr}
			year & $\vert V \vert$ & ins & $\vert E \vert$ &
			$W(M)_{max}$ & $K$   & \% W-NE & $\epsilon$ & iter &$ IMP_{max}$  &  $Players_{max} $ &   $ IMP_{min}$ &  $Players_{min} $&
			{\IndA}$_{high}$ & {\IndA}$_{low}$ &
			$time_{gen}$ & $time_{NE}$   \\
			\cline{1-17}
			2009   &    50   &   1   &   40   &   7.02   &   1000   &   13.40   &  0.97  &   5   & 0.34 & BCYT  & 0.00 & AB SK MB QC NS NB PE NL & 0.63 & 0.41 &   50.17   &   0.06\\
			&     &   2   &   128   &   18.07   &   1000   &   0.00   &  $<$ 0.98  &     &     &     &      &     &       &       &   160.04   &   0.13\\
			&     &   3   &   90   &   17.33   &   1000   &   0.00   &  $<$ 0.98  &     &     &     &      &     &       &       &   141.41   &   0.14\\
			&     &   4   &   136   &   18.66   &   1000   &   0.00   &  $<$ 0.99  &     &     &     &      &     &       &       &   267.80   &   0.25\\
			&     &   5   &   116   &   19.56   &   1000   &   1.30   &  0.99  &   88   & 0.15 & ON & 0.00 & SK NS NB PE & 0.67 & 0.52 &   316.52   &   0.12\\
			&     &   6   &   85   &   18.40   &   1000   &   1.50   &  0.99  &   43   & 0.20 & ON & 0.00 & NS NB PE NL & 0.68 & 0.61 &   165.87   &   0.10\\
			&     &   7   &   99   &   12.85   &   1000   &   0.00   &  $<$ 1.00  &     &     &     &      &     &       &       &   221.68   &   1.01\\
			&     &   8   &   99   &   18.61   &   1000   &   0.20   &  0.98  &   12   & 0.26 & ON & 0.00 & SK NB PE & 0.62 & 0.49 &   165.25   &   0.06\\
			&     &   9   &   74   &   12.64   &   1000   &   4.20   &  0.99  &   43   & 0.17 & QC & 0.00 & MB NB PE NL & 0.76 & 0.58 &   127.31   &   0.12\\
			&     &   10   &   89   &   16.11   &   1000   &   6.90   &  1.00  &   4   & 0.15 & ON & 0.00 & SK MB NB PE NL & 0.75 & 0.49 &   118.91   &   0.19\\
			2013     &   50   &   1   &   8   &   3.58   &   21   &   19.05   &  1.00  &   0   & 0.26 & AB & 0.00 & all players & 0.50 & 0.00 &   0.17   &   0.03\\
			&     &   2   &   23   &   4.74   &   705   &   0.85   &  1.00  &   2   & 0.30 & ON & 0.00 & AB SK MB NS NB PE NL & 0.74 & 0.54 &   12.50   &   0.02\\
			&     &   3   &   24   &   4.69   &   892   &   0.34   &  1.00  &   0   & 0.13 & BCYT  & 0.00 & SK MB ON QC NS NB PE NL & 0.22 & 0.19 &   13.63   &   0.01\\
			&     &   4   &   29   &   6.93   &   1000   &   3.10   &  0.96  &   4   & 0.17 & ON & 0.00 & AB SK MB ON QC NB PE NL & 0.60 & 0.24 &   25.63   &   0.02\\
			&     &   5   &   23   &   7.47   &   1000   &   0.60   &  0.99  &   11   & 0.18 & BCYT  & 0.00 & AB SK MB NS NB PE NL & 0.67 & 0.36 &   18.27   &   0.02\\
			&     &   6   &   14   &   5.74   &   662   &   0.15   &  1.00  &   0   & 0.23 & ON & 0.00 & SK MB NS NB PE NL & 0.77 & 0.77 &   5.19   &   0.03\\
			&     &   7   &   42   &   10.25   &   1000   &   0.10   &  0.98  &   5   & 0.15 & ON & 0.00 & MB NS NB PE NL & 0.54 & 0.54 &   49.68   &   0.03\\
			&     &   8   &   19   &   4.44   &   437   &   0.46   &  0.82  &   31   & 0.15 & ON & 0.00 & BCYT  AB SK QC NS NB PE NL & 0.41 & 0.21 &   5.79   &   0.02\\
			&     &   9   &   1   &   1.11   &   1   &   100.00   &  1.00  &   0   & 0.57 & ON & 0.00 & AB SK MB QC NS NB PE NL & 1.00 & 1.00 &   0.01   &   0.04\\
			&     &   10   &   51   &   10.25   &   1000   &   0.00   &  $<$ 0.94  &     &     &     &      &     &       &       &   47.02   &   0.04\\
		\end{tabularx}
		\caption{Continuation of Table~\ref{Table:Provinces_game_weigh_1_10}. Results for the  {\proKEGwei} with the Canadian KPD instances.}
		\label{Table:Provinces_game_weight_1_10_V50}
	\end{table}
	
\end{landscape}


\begin{landscape}
	\setlength{\tabcolsep}{6pt}
	\begin{table}[t]
		\hspace*{-7cm}
		\tiny
		\centering
		\begin{tabularx}{\textwidth}{rrrr|rrrrrrrrrrrrr}
			year & $\vert V \vert$ & ins & $\vert E \vert$ &
			$W(M)_{max}$ & $K$   & \% W-NE & $\epsilon$ & iter &$ IMP_{max}$  &  $Players_{max} $ &   $ IMP_{min}$ &  $Players_{min} $&
			{\IndA}$_{high}$ & {\IndA}$_{low}$ &
			$time_{gen}$ & $time_{NE}$   \\
			\cline{1-17}
			2009     &   30   &   11   &   31   &   6.73   &   1000   &   0.50   &  0.88  &   82   & 0.11 & ON & 0.00 & AB & 0.28 & 0.11 &   27.13   &   0.02\\
			&     &   12   &   65   &   13.92   &   1000   &   0.00   &  $<$ 0.90  &     &     &     &      &     &       &       &   90.81   &   0.03\\
			&     &   13   &   21   &   6.93   &   1000   &   0.10   &  0.97  &   22   & 0.09 & BCYT  & 0.00 & AB & 0.17 & 0.17 &   14.13   &   0.01\\
			&     &   14   &   31   &   7.96   &   1000   &   0.90   &  0.94  &   40   & 0.22 & AB & 0.02 & ON & 0.44 & 0.29 &   31.14   &   0.02\\
			&     &   15   &   28   &   6.10   &   1000   &   0.00   &  $<$ 0.50  &     &     &     &      &     &       &       &   21.05   &   0.02\\
			&     &   16   &   38   &   8.11   &   1000   &   1.20   &  0.96  &   79   & 0.16 & BCYT  & 0.07 & ON & 0.50 & 0.29 &   45.68   &   0.03\\
			&     &   17   &   13   &   2.84   &   53   &   3.77   &  0.96  &   3   & 0.00 & AB ON & -0.00 & BCYT  & 0.00 & 0.00 &   0.72   &   0.02\\
			&     &   18   &   24   &   8.71   &   1000   &   0.10   &  0.84  &   175   & 0.14 & AB & 0.07 & BCYT  & 0.26 & 0.26 &   23.96   &   0.01\\
			&     &   19   &   13   &   4.27   &   85   &   1.18   &  0.85  &   15   & 0.23 & ON & 0.00 & AB & 0.33 & 0.33 &   0.94   &   0.02\\
			&     &   20   &   30   &   6.08   &   1000   &   0.20   &  1.00  &   1   & 0.15 & ON & -0.00 & BCYT  & 0.37 & 0.22 &   33.60   &   0.02\\
			2013    &   30   &   11   &   6   &   3.66   &   15   &   6.67   &  1.00  &   0   & 0.32 & AB & 0.09 & BCYT  & 0.65 & 0.65 &   0.10   &   0.01\\
			&     &   12   &   2   &   1.56   &   2   &   50.00   &  1.00  &   0   & 0.00 & BCYT  AB ON & 0.00 & BCYT  AB ON & 0.00 & 0.00 &   0.01   &   0.03\\
			&     &   13   &   10   &   5.13   &   69   &   8.70   &  1.00  &   1   & 0.31 & AB & 0.00 & AB ON & 0.40 & 0.00 &   0.46   &   0.02\\
			&     &   14   &   4   &   2.63   &   6   &   100.00   &  1.00  &   0   & 0.56 & ON & 0.00 & BCYT  AB & 1.00 & 0.36 &   0.03   &   0.03\\
			&     &   15   &   12   &   3.59   &   59   &   22.03   &  1.00  &   0   & 0.42 & ON & 0.00 & BCYT  AB & 0.90 & 0.22 &   0.65   &   0.03\\
			&     &   16   &   0   &   0.00   &   1   &   100.00   &  1.00  &   0   & 0.00 & all players & 0.00 & all players & 0.00 & 0.00 &   0.00   &   0.03\\
			&     &   17   &   13   &   5.48   &   144   &   0.69   &  1.00  &   0   & 0.14 & ON & 0.00 & AB & 0.24 & 0.24 &   0.89   &   0.01\\
			&     &   18   &   8   &   3.55   &   41   &   2.44   &  0.76  &   3   & 0.00 & BCYT  & 0.00 & AB ON & 0.00 & 0.00 &   0.25   &   0.01\\
			&     &   19   &   10   &   3.54   &   53   &   11.32   &  0.90  &   7   & 0.36 & BCYT  & 0.00 & AB & 0.60 & 0.25 &   0.47   &   0.03\\
			&     &   20   &   1   &   0.97   &   1   &   100.00   &  1.00  &   0   & 0.72 & BCYT  & 0.00 & ON & 1.00 & 1.00 &   0.01   &   0.03\\
			2009     &   40   &   11   &   24   &   9.61   &   1000   &   0.50   &  1.00  &   0   & 0.28 & BCYT  & 0.10 & AB & 0.59 & 0.47 &   21.21   &   0.02\\
			&     &   12   &   13   &   7.15   &   439   &   0.23   &  0.96  &   4   & 0.17 & ON & 0.05 & BCYT  & 0.34 & 0.34 &   3.37   &   0.01\\
			&     &   13   &   59   &   11.70   &   1000   &   0.20   &  0.95  &   660   & 0.11 & BCYT  & 0.07 & AB & 0.28 & 0.28 &   77.10   &   0.02\\
			&     &   14   &   39   &   8.58   &   1000   &   0.10   &  0.98  &   6   & 0.21 & BCYT  & 0.00 & AB & 0.31 & 0.31 &   46.15   &   0.02\\
			&     &   15   &   26   &   4.91   &   835   &   0.12   &  0.86  &   57   & 0.13 & BCYT  & 0.00 & AB & 0.22 & 0.22 &   14.28   &   0.01\\
			&     &   16   &   75   &   11.01   &   1000   &   0.10   &  0.93  &   764   & 0.07 & ON & 0.04 & AB & 0.23 & 0.23 &   104.58   &   0.04\\
			&     &   17   &   33   &   8.91   &   1000   &   0.20   &  0.92  &   68   & 0.14 & ON & 0.03 & AB & 0.29 & 0.21 &   41.93   &   0.03\\
			&     &   18   &   86   &   15.29   &   1000   &   1.00   &  0.95  &   890   & 0.17 & AB & 0.00 & ON & 0.50 & 0.42 &   107.80   &   0.08\\
			&     &   19   &   52   &   11.49   &   1000   &   0.60   &  0.98  &   5   & 0.26 & AB & 0.00 & BCYT  & 0.62 & 0.50 &   61.80   &   0.02\\
			&     &   20   &   92   &   13.29   &   1000   &   0.20   &  1.00  &   60   & 0.11 & BCYT  & 0.07 & ON & 0.39 & 0.39 &   148.90   &   0.07\\
			2013    &   40   &   11   &   9   &   4.83   &   71   &   14.08   &  1.00  &   0   & 0.43 & BCYT  & 0.00 & AB & 0.80 & 0.40 &   0.53   &   0.03\\
			&     &   12   &   9   &   4.32   &   63   &   3.17   &  0.97  &   2   & 0.43 & BCYT  & 0.14 & AB & 0.80 & 0.80 &   0.43   &   0.01\\
			&     &   13   &   15   &   3.72   &   128   &   0.78   &  1.00  &   2   & 0.00 & BCYT  AB ON & 0.00 & BCYT  AB ON & 0.00 & 0.00 &   1.52   &   0.01\\
			&     &   14   &   22   &   6.02   &   625   &   0.32   &  0.95  &   2   & 0.24 & BCYT  & 0.00 & AB & 0.36 & 0.36 &   9.86   &   0.01\\
			&     &   15   &   9   &   4.00   &   31   &   100.00   &  1.00  &   0   & 0.56 & ON & 0.00 & BCYT  AB ON & 1.00 & 0.21 &   0.20   &   0.04\\
			&     &   16   &   32   &   5.36   &   1000   &   0.10   &  0.88  &   228   & 0.00 & ON & 0.00 & BCYT  AB & 0.00 & 0.00 &   25.69   &   0.01\\
			&     &   17   &   6   &   2.76   &   15   &   20.00   &  0.86  &   3   & 0.23 & BCYT  & 0.00 & AB & 0.36 & 0.26 &   0.10   &   0.03\\
			&     &   18   &   46   &   10.22   &   1000   &   0.20   &  0.90  &   109   & 0.25 & BCYT  & 0.09 & AB & 0.56 & 0.56 &   66.48   &   0.03\\
			&     &   19   &   5   &   2.33   &   11   &   9.09   &  0.95  &   1   & 0.00 & BCYT  AB ON & 0.00 & BCYT  AB ON & 0.00 & 0.00 &   0.07   &   0.03\\
			&     &   20   &   13   &   6.37   &   335   &   1.79   &  0.99  &   1   & 0.24 & AB & 0.00 & BCYT  & 0.44 & 0.26 &   2.54   &   0.03\\
		\end{tabularx}
		\caption{Results for the {\proKEGwei} with the Canadian KPD instances with  the 3 largest players: Ontario, British Columbia \& Yukon and Alberta.}
		\label{Table:Provinces_game_weight_3}
	\end{table}
	
\end{landscape}


\begin{landscape}
	\setlength{\tabcolsep}{6pt}
	\begin{table}[t]
		\hspace*{-7cm}
		\tiny
		\centering
		\begin{tabularx}{\textwidth}{rrrr|rrrrrrrrrrrrr}
			year &$\vert V \vert$ & ins & $\vert E \vert$ &
			$W(M)_{max}$ & $K$   & \% W-NE & $\epsilon$ & iter &$ IMP_{max}$  &  $Players_{max} $ &   $ IMP_{min}$ &  $Players_{min} $&
			{\IndA}$_{high}$ & {\IndA}$_{low}$ &
			$time_{gen}$ & $time_{NE}$   \\
			\cline{1-17}
			2009    &   50   &   11   &   47   &   10.19   &   1000   &   0.30   &  0.99  &   20   & 0.20 & BCYT  & 0.08 & AB & 0.69 & 0.68 &   50.84   &   0.04\\
			&      &   12   &   168   &   19.28   &   1000   &   0.00   &  $<$ 1.00  &     &     &     &      &     &       &       &   319.22   &   0.13\\
			&     &   13   &   103   &   14.21   &   1000   &   0.00   &  $<$ 0.95  &     &     &     &      &     &       &       &   139.76   &   0.05\\
			&     &   14   &   88   &   16.28   &   1000   &   0.00   &  $<$ 0.98  &     &     &     &      &     &       &       &   166.19   &   0.05\\
			&     &   15   &   48   &   9.78   &   1000   &   0.10   &  0.95  &   111   & 0.25 & BCYT  & 0.04 & AB & 0.50 & 0.50 &   74.86   &   0.02\\
			&     &   16   &   52   &   13.31   &   1000   &   0.20   &  1.00  &   14   & 0.16 & BCYT  & 0.11 & ON & 0.44 & 0.44 &   60.27   &   0.03\\
			&     &   17   &   56   &   14.17   &   1000   &   0.00   &  $<$ 0.93  &     &     &     &      &     &       &       &   87.98   &   0.04\\
			&     &   18   &   76   &   14.31   &   1000   &   0.00   &  $<$ 0.97  &     &     &     &      &     &       &       &   102.77   &   0.02\\
			&     &   19   &   77   &   12.29   &   1000   &   0.00   &  $<$ 0.97  &     &     &     &      &     &       &       &   109.10   &   0.04\\
			&     &   20   &   63   &   12.06   &   1000   &   0.00   &  $<$ 0.92  &     &     &     &      &     &       &       &   79.00   &   0.03\\
			2013    &   50   &   11   &   22   &   6.87   &   1000   &   0.10   &  0.99  &   2   & 0.00 & BCYT  AB ON & 0.00 & BCYT  AB ON & 0.00 & 0.00 &   15.93   &   0.02\\
			&      &   12   &   22   &   7.16   &   1000   &   0.90   &  0.98  &   3   & 0.37 & ON & 0.07 & BCYT  & 0.82 & 0.62 &   12.26   &   0.02\\
			&     &   13   &   23   &   7.57   &   1000   &   0.20   &  0.94  &   14   & 0.26 & ON & 0.06 & BCYT  & 0.42 & 0.32 &   12.84   &   0.01\\
			&     &   14   &   30   &   7.71   &   1000   &   0.30   &  0.79  &   158   & 0.10 & AB & 0.00 & BCYT  & 0.14 & 0.10 &   24.21   &   0.02\\
			&     &   15   &   20   &   7.30   &   1000   &   0.20   &  0.99  &   1   & 0.17 & ON & 0.03 & BCYT  & 0.32 & 0.26 &   13.79   &   0.02\\
			&     &   16   &   12   &   5.18   &   81   &   1.23   &  0.77  &   13   & 0.00 & ON & 0.00 & BCYT  AB & 0.00 & 0.00 &   0.73   &   0.02\\
			&     &   17   &   52   &   6.77   &   1000   &   0.10   &  0.86  &   582   & 0.11 & ON & 0.00 & AB & 0.14 & 0.14 &   68.30   &   0.03\\
			&     &   18   &   14   &   4.59   &   127   &   0.79   &  0.98  &   2   & 0.27 & BCYT  & 0.00 & AB & 0.45 & 0.45 &   1.44   &   0.01\\
			&     &   19   &   25   &   7.07   &   1000   &   0.10   &  0.79  &   301   & 0.07 & ON & 0.00 & AB & 0.08 & 0.08 &   21.72   &   0.02\\
			&     &   20   &   10   &   3.48   &   34   &   5.88   &  1.00  &   0   & 0.05 & ON & 0.00 & BCYT  AB & 0.21 & 0.19 &   0.37   &   0.02\\
		\end{tabularx}
		\caption{Continuation of Table~\ref{Table:Provinces_game_weight_3}. Results for the {\proKEGwei} with the Canadian KPD instances with  the 3 largest players: Ontario, British Columbia \& Yukon and Alberta.}
		\label{Table:Provinces_game_weight_4}
	\end{table}
	
\end{landscape}

\begin{landscape}
	\setlength{\tabcolsep}{6pt}
	\begin{table}[t]
		\hspace*{-7cm}
		\tiny
		\centering
		\begin{tabularx}{\textwidth}{rrrr|rrrrrrrrrrrrr}
			year & $\vert V \vert$ & ins & $\vert E \vert$ &
			$W(M)_{max}$ & $K$   & \% W-NE & $\epsilon$ & iter & $ IMP_{max}$  &  $Players_{max} $ &   $ IMP_{min}$ &  $Players_{min} $&
			{\IndA}$_{high}$ & {\IndA}$_{low}$ &
			$time_{gen}$ & $time_{NE}$   \\
			\cline{1-17}
			2009    &   30   &   21   &   32   &   9.62   &   1000   &   0.10   &  0.81  &   539   & 0.13 & QC & 0.00 & PE & 0.37 & 0.37 &   27.19   &   0.02\\
			&     &   22   &   62   &   9.01   &   1000   &   1.90   &  1.00  &   4   & 0.08 & ON & 0.00 & QC PE & 0.62 & 0.18 &   89.24   &   0.08\\
			&     &   23   &   30   &   9.41   &   1000   &   5.70   &  0.92  &   18   & 0.26 & QC & 0.00 & PE & 0.60 & 0.44 &   36.98   &   0.04\\
			&     &   24   &   40   &   7.76   &   1000   &   0.20   &  0.88  &   249   & 0.08 & QC & 0.00 & ON QC PE & 0.18 & 0.13 &   46.54   &   0.03\\
			&     &   25   &   40   &   8.75   &   1000   &   0.70   &  0.97  &   39   & 0.16 & QC & 0.00 & PE & 0.57 & 0.47 &   55.56   &   0.02\\
			&     &   26   &   20   &   6.01   &   567   &   0.88   &  0.90  &   7   & 0.12 & QC & 0.00 & QC PE & 0.60 & 0.12 &   6.89   &   0.03\\
			&     &   27   &   31   &   6.66   &   1000   &   0.50   &  0.97  &   17   & 0.06 & ON & 0.00 & ON QC PE & 0.45 & 0.29 &   35.84   &   0.03\\
			&     &   28   &   29   &   9.40   &   1000   &   0.40   &  0.91  &   7   & 0.08 & ON & 0.00 & QC PE & 0.55 & 0.38 &   32.54   &   0.03\\
			&     &   29   &   51   &   11.44   &   1000   &   0.00   &  $<$ 0.91  &     &     &     &      &     &       &       &   79.86   &   0.03\\
			&     &   30   &   16   &   4.03   &   157   &   24.20   &  0.96  &   6   & 0.32 & ON & 0.00 & ON QC PE & 0.69 & 0.00 &   1.69   &   0.03\\
			2013   &   30   &   21   &   1   &   1.07   &   1   &   100.00   &  1.00  &   0   & 0.49 & ON & 0.00 & QC PE & 1.00 & 1.00 &   0.01   &   0.03\\
			&     &   22   &   8   &   4.88   &   65   &   100.00   &  1.00  &   0   & 0.34 & ON & 0.00 & ON QC PE & 1.00 & 0.15 &   0.31   &   0.04\\
			&     &   23   &   9   &   3.47   &   29   &   37.93   &  1.00  &   0   & 0.18 & QC & 0.00 & ON QC PE & 0.71 & 0.00 &   0.24   &   0.03\\
			&     &   24   &   19   &   6.28   &   318   &   6.92   &  1.00  &   0   & 0.20 & ON & 0.00 & QC PE & 0.85 & 0.56 &   2.81   &   0.02\\
			&     &   25   &   25   &   5.51   &   1000   &   0.20   &  1.00  &   0   & 0.21 & QC & 0.00 & PE & 0.77 & 0.56 &   21.07   &   0.03\\
			&     &   26   &   3   &   1.08   &   3   &   100.00   &  1.00  &   0   & 0.51 & ON & 0.00 & ON QC PE & 1.00 & 0.85 &   0.02   &   0.03\\
			&     &   27   &   8   &   2.55   &   29   &   10.34   &  1.00  &   0   & 0.26 & ON & 0.00 & QC PE & 0.53 & 0.24 &   0.17   &   0.01\\
			&     &   28   &   4   &   1.48   &   7   &   28.57   &  0.93  &   2   & 0.49 & ON & 0.00 & ON QC PE & 0.46 & 0.00 &   0.03   &   0.02\\
			&     &   29   &   1   &   0.85   &   1   &   100.00   &  1.00  &   0   & 0.00 & ON PE & -0.00 & QC & 0.00 & 0.00 &   0.01   &   0.03\\
			&     &   30   &   3   &   2.10   &   5   &   60.00   &  1.00  &   0   & 0.16 & ON & 0.00 & ON QC PE & 0.37 & 0.00 &   0.02   &   0.03\\
			2009    &   40   &   21   &   32   &   7.75   &   1000   &   0.10   &  0.96  &   10   & 0.06 & ON & 0.00 & QC PE & 0.13 & 0.13 &   42.68   &   0.02\\
			&     &   22   &   79   &   12.78   &   1000   &   0.60   &  0.99  &   153   & 0.08 & ON & 0.00 & PE & 0.66 & 0.66 &   125.39   &   0.07\\
			&     &   23   &   8   &   3.09   &   23   &   4.35   &  0.98  &   1   & 0.00 & ON QC PE & 0.00 & ON QC PE & 0.00 & 0.00 &   0.21   &   0.03\\
			&     &   24   &   40   &   11.20   &   1000   &   4.60   &  0.96  &   180   & 0.20 & QC & 0.03 & PE & 0.72 & 0.56 &   43.87   &   0.05\\
			&     &   25   &   67   &   11.75   &   1000   &   0.00   &  $<$ 0.94  &     &     &     &      &     &       &       &   78.70   &   0.03\\
			&     &   26   &   33   &   7.92   &   1000   &   0.80   &  0.92  &   172   & 0.12 & ON & 0.00 & QC PE & 0.47 & 0.42 &   30.10   &   0.03\\
			&     &   27   &   55   &   10.86   &   1000   &   16.00   &  1.00  &   0   & 0.29 & ON & 0.00 & PE & 0.89 & 0.71 &   76.12   &   0.07\\
			&     &   28   &   48   &   10.76   &   1000   &   4.30   &  1.00  &   3   & 0.19 & ON & 0.00 & PE & 0.70 & 0.47 &   63.68   &   0.04\\
			&     &   29   &   61   &   12.20   &   1000   &   1.60   &  1.00  &   1   & 0.17 & QC & 0.00 & PE & 0.90 & 0.67 &   74.77   &   0.03\\
			&     &   30   &   37   &   9.75   &   1000   &   18.70   &  1.00  &   1   & 0.16 & QC & 0.00 & PE & 0.85 & 0.67 &   50.19   &   0.03\\
			2013    &   40   &   21   &   27   &   7.00   &   1000   &   100.00   &  1.00  &   0   & 0.46 & ON & 0.00 & QC PE & 1.00 & 0.70 &   20.68   &   0.06\\
			&     &   22   &   18   &   6.31   &   572   &   5.94   &  1.00  &   0   & 0.35 & ON & 0.00 & QC PE & 0.67 & 0.40 &   6.06   &   0.03\\
			&     &   23   &   15   &   6.35   &   239   &   0.84   &  1.00  &   0   & 0.24 & ON & 0.00 & QC PE & 0.81 & 0.61 &   3.19   &   0.03\\
			&     &   24   &   7   &   2.10   &   12   &   50.00   &  1.00  &   0   & 0.24 & PE & 0.00 & ON QC PE & 0.69 & 0.00 &   0.09   &   0.03\\
			&     &   25   &   16   &   5.47   &   141   &   2.84   &  1.00  &   0   & 0.18 & ON & 0.00 & ON QC PE & 1.00 & 0.74 &   1.69   &   0.01\\
			&     &   26   &   6   &   3.66   &   20   &   45.00   &  1.00  &   0   & 0.33 & ON & 0.00 & QC PE & 1.00 & 0.40 &   0.12   &   0.02\\
			&     &   27   &   50   &   7.59   &   1000   &   0.00   &  $<$ 0.90  &     &     &     &      &     &       &       &   53.50   &   0.03\\
			&     &   28   &   34   &   8.80   &   1000   &   0.40   &  0.94  &   26   & 0.12 & ON & 0.00 & QC PE & 0.67 & 0.35 &   31.37   &   0.02\\
			&     &   29   &   14   &   3.82   &   80   &   2.50   &  0.89  &   14   & 0.27 & ON & 0.00 & QC PE & 0.36 & 0.29 &   0.77   &   0.03\\
			&     &   30   &   7   &   4.20   &   37   &   100.00   &  1.00  &   0   & 0.34 & ON & 0.00 & ON QC PE & 1.00 & 0.19 &   0.20   &   0.04\\
		\end{tabularx}
		\caption{Results for the {\proKEGwei} with the Canadian KPD instances with  the largest player, a medium player and the smallest player: Ontario, Quebec and Prince Edward Island.}
		\label{Table:Provinces_game_weight_5}
	\end{table}
	
\end{landscape}


\begin{landscape}
	\setlength{\tabcolsep}{6pt}
	\begin{table}[t]
		\hspace*{-7cm}
		\tiny
		\centering
		\begin{tabularx}{\textwidth}{rrrr|rrrrrrrrrrrrr}
			year & $\vert V \vert$ & ins & $\vert E \vert$ &
			$W(M)_{max}$ & $K$   & \% W-NE & $\epsilon$ & iter & $ IMP_{max}$  &  $Players_{max} $ &   $ IMP_{min}$ &  $Players_{min} $&
			{\IndA}$_{high}$ & {\IndA}$_{low}$ &
			$time_{gen}$ & $time_{NE}$   \\
			\cline{1-17}
			2009    &   50   &   21   &   101   &   14.74   &   1000   &   0.00   &  $<$ 0.98  &     &     &     &      &     &       &       &   180.86   &   0.29\\
			&     &   22   &   85   &   12.96   &   1000   &   0.00   &  $<$ 0.97  &     &     &     &      &     &       &       &   92.70   &   0.06\\
			&     &   23   &   82   &   12.16   &   1000   &   0.50   &  0.98  &   223   & 0.14 & ON & 0.00 & PE & 0.56 & 0.54 &   107.51   &   0.23\\
			&     &   24   &   100   &   15.37   &   1000   &   0.10   &  0.99  &   174   & 0.08 & QC & 0.00 & PE & 0.58 & 0.58 &   228.05   &   0.14\\
			&     &   25   &   55   &   12.71   &   1000   &   1.70   &  0.97  &   178   & 0.09 & ON & 0.00 & PE & 0.66 & 0.47 &   76.85   &   0.08\\
			&     &   26   &   29   &   9.32   &   1000   &   3.90   &  0.94  &   72   & 0.26 & ON & 0.00 & QC PE & 0.56 & 0.26 &   20.00   &   0.02\\
			&     &   27   &   94   &   11.08   &   1000   &   1.40   &  0.99  &   397   & 0.13 & ON & 0.00 & PE & 0.51 & 0.32 &   154.15   &   0.15\\
			&     &   28   &   47   &   9.97   &   1000   &   10.70   &  1.00  &   0   & 0.22 & ON & 0.00 & QC PE & 0.65 & 0.41 &   39.25   &   0.08\\
			&     &   29   &   53   &   10.22   &   1000   &   2.90   &  1.00  &   31   & 0.20 & ON & 0.00 & QC PE & 0.75 & 0.65 &   64.09   &   0.10\\
			&     &   30   &   73   &   12.08   &   1000   &   0.20   &  0.98  &   119   & 0.06 & ON & 0.00 & PE & 0.47 & 0.46 &   104.49   &   0.05\\
			2013    &   50   &   21   &   9   &   3.94   &   53   &   9.43   &  1.00  &   0   & 0.11 & ON & 0.00 & ON QC PE & 0.48 & 0.00 &   0.37   &   0.03\\
			&     &   22   &   15   &   5.60   &   336   &   0.60   &  1.00  &   0   & 0.07 & ON & 0.00 & QC PE & 0.12 & 0.12 &   2.32   &   0.01\\
			&     &   23   &   23   &   6.85   &   1000   &   0.30   &  1.00  &   0   & 0.20 & ON & 0.00 & PE & 0.48 & 0.36 &   10.90   &   0.02\\
			&     &   24   &   41   &   12.88   &   1000   &   0.20   &  0.92  &   43   & 0.12 & ON & 0.00 & PE & 0.55 & 0.47 &   53.98   &   0.03\\
			&     &   25   &   14   &   5.49   &   327   &   7.95   &  1.00  &   0   & 0.20 & ON & 0.00 & ON PE & 0.55 & 0.16 &   2.78   &   0.02\\
			&     &   26   &   37   &   7.80   &   1000   &   0.10   &  0.96  &   16   & 0.11 & ON & 0.00 & QC PE & 0.31 & 0.31 &   33.71   &   0.02\\
			&     &   27   &   25   &   6.08   &   1000   &   0.20   &  1.00  &   0   & 0.29 & ON & 0.00 & PE & 0.78 & 0.76 &   20.08   &   0.02\\
			&     &   28   &   38   &   8.03   &   1000   &   14.10   &  1.00  &   0   & 0.24 & QC & 0.00 & QC PE & 0.91 & 0.58 &   37.52   &   0.06\\
			&     &   29   &   21   &   6.69   &   780   &   0.38   &  0.86  &   43   & 0.32 & ON & 0.00 & QC PE & 0.69 & 0.48 &   8.84   &   0.02\\
			&     &   30   &   35   &   6.98   &   1000   &   3.90   &  0.98  &   9   & 0.19 & QC & -0.00 & ON & 0.57 & 0.35 &   37.05   &   0.03\\
		\end{tabularx}
		\caption{Continuation of Table~\ref{Table:Provinces_game_weight_5}. Results for the {\proKEGwei} with the Canadian KPD instances with  the largest player, a medium player and the smallest player: Ontario, Quebec and Prince Edward Island.}
		\label{Table:Provinces_game_weight_6}
	\end{table}
	
\end{landscape}

\section{Conclusions and open questions}\label{sec:conclusions}
In this work, we reviewed the literature on cross-border Kidney Exchange Programs, and motivated the reason to concentrate on a decentralized non-cooperative game theoretical model, where players control internal exchanges. Known results for the 2-player case were generalized to an arbitrary finite number of players. In this context, for {\proKEGcard} (1) it was discussed the mechanism that incentives players to fully disclose their compatibility graphs and (2) it was established the existence of social welfare equilibria that can be computed in polynomial time, stressing the feasibility of such solution being computable by the players in practice.  Furthermore, it was discussed how the game outcome would change upon the availability of information on transplants quality. Namely, we proved that the verification of Nash equilibrium for the {\proKEGwei} becomes an NP-complete problem, which is an indication that such solution might not be practical. Nevertheless, our computational results have shown that, in practice, this is a tractable problem (recall Section~\ref{sec:computational}, in particular, columns  $time_{NE}$ on the computational tables). Finally, on our computational results, we tested the concept of Nash equilibria on instances closely inspired by the topology of  the Canadian Kidney Pair Donation Program. These results demonstrated that for the  {\proKEGcard}, random selection of a social optimal solution, \ie, a randomly-selected maximum-cardinality matching, might not result in an equilibrium, revealing the need to use this concept as the only mean to avoid unstable outcomes. Interestingly, the computed social welfare equilibria  provide a good balance in terms of advantage to the participating players. In what concerns {\proKEGwei}, computational results show that {\SWE} might fail to exist and even (pure) equilibria might not exist, at least among the $K$ best solutions computed in practice. Nevertheless, a majority of instances had equilibria, which did not seem to be far from social optimal solutions.

The analysis of this game theoretical model opens interesting research directions. The {\proKEGcard} enjoys excellent properties from the players and social welfare point of view and thus, it seems a promising model to extend the study, namely, to exchanges of size larger than 2 and to include chains. However, once exchanges of size 3 are allowed, {\SWE} might not exist (see the example in Figure~\ref{fig:Example2}) and if there are 3 players, the game might not have a pure equilibrium (see Figure~\ref{fig:N3L3}). A possible way to overcome these issues can be to restrict the {\IndA} to exchanges of size 2 or consider that the game is played over time, so that cooperation might be achieved because players can punish any disobedient on future plays. Indeed, similarly to many KEP platforms, in practice, our game would be played over multiple periods. In this context, \citet{Dickerson2015} showed good social outcomes for a centralized mechanism over an infinite time horizon, where players have a cardinality utility. Hence, given that our static version with $L=2$, {\proKEGcard}, already overcome impossibility results for (static) centralized mechanisms, we believe that accounting for a dynamic setting adds an extra degree of flexibility for obtaining equilibrium guarantees when $L>2$.

Our computational results on {\proKEGcard}  show that KEPs can easily have multiple optimal solutions and thus, for sake of fairness and proper evaluation of KEPs outcomes, it is crucial to develop tools that enable the generation of random optimal solutions\footnote{In the KEP literature, interest on this direction of random selection of maximum cardinality matchings is present in the context of egalitarian mechanisms (which do not focus on uniform randomization), {\eg}, see~\cite{Roth_Sonme_Unver_2005_b,Jian2014}.}. While we proposed a dynamic programming approach, more scalable methods are necessary.

This work also studied for the first time {\proKEGwei}. Under this setting, although examples can be found where equilibria do not exist, the computational results  reveal that, in practice, the social optimum might not be too sacrificed in order to get an equilibrium. Ideally, one would prefer to have theoretical guarantees. Some criticism might be pointed out to this game model: players might manipulate the information provided in order  to associate high weights to exchanges that would benefit their patients and/or players might use different technologies to evaluate pairs health. In fact, achieving medical consensus on the attribution of weights to exchanges and / or establishing a common international legislation may not be possible. Plus, the consideration of weights is by itself a limit on the players and the {\IndA} optimization criteria modeling. As a matter of fact, there are KEPs, such as the Dutch program, using hierarchical optimization instead of weighted matchings. As generally the first optimization criterion is the maximization on the number of patients, a potential way to overcome the aforementioned limitations of {\proKEGwei} is through the consideration that players use hierarchical optimization, optimizing first the objectives of {\proKEGcard}, and, second, among all solutions that lead to the same number of their patients matched, they maximize matchings' weights. 
An interesting alternative could be to allow players to provide a list of preferences to their potential transplants, similarly to what has been considered in the game theory literature when patients are the players and they have preferences based on transplant quality~\cite{NICOLO2012,NICOLO2017}. In this way, two games  would be played: countries play  {\proKEGcard} (game 1)  while guaranteeing that their pairs are under a \textit{stable} matching (game 2), \ie, no two pairs would be better off (higher in their preference list) by leaving their current match and establishing a matching among them. The idea is that players seek an equilibrium to {\proKEGcard} that is also a stable matching. 

Another interesting research venue with potential to overcome the no pure Nash equilibrium guarantees and players coordination towards a social optimum when $L>2$ and exchange weights are considered is the study of other solution concepts such as correlated equilibria. Correlated equilibrium is a  more broader concept than Nash equilibrium and, computationally, it is generally faster to determine. In fact, the latter would benefit from the fact that computing a player best response is efficient in practice  allowing the use of our Algorithm~\ref{Alg:Best_weighted_Optimistic_response}\footnote{Algorithm~\ref{Alg:Best_weighted_Optimistic_response} can be adapted in a straightforward way to account for $L>2$.} as the separation oracle in the framework of~\cite{JIANG2015347} to compute correlated equilibria.

In conclusion, there are many important points both on modeling cross boarder Kidney Exchange Programs and on their analysis that need to be addressed and, in this paper, we provided a consistent base to support further research. The ultimate goal is to design an interaction among the countries that implements the social choice function targeting social optimality (\eg, see  \cite{PALFREY20022271} for details in implementation theory). 

\begin{figure}[!h]
	\centering
	\begin{tikzpicture}[-,>=stealth',shorten >=0.3pt,auto,node distance=2cm,
		thick,countryA node/.style={circle,fill=black!20,draw},countryB node/.style={diamond,draw}, scale=.75, transform shape]]
		\tikzstyle{matched} = [draw,line width=3pt,-]
		
		\node[countryA node] (5) {5};
		\node[countryA node] (4) [right of=5] {4};
		\node[countryA node] (1) [above of=4] {1};
		\node[countryB node] (2) [right of=1] {2};
		\node[countryB node] (3) [below of=2] {3};
		\node[countryB node] (7) [below of=4] {7};
		\node[countryB node] (6) [left of=7] {6};
		
		\path[->,every node/.style={font=\sffamily\small}]
		( 1) edge[blue] node {} (2)
		( 2) edge[blue] node {} (3)
		( 3) edge[blue] node {} (1)
		(1) edge node {} (4)
		(4) edge node {} (5)
		(5) edge node {} (1)
		(4) edge[blue] node {} (7)
		(7) edge[blue] node {} (6)
		(6) edge[blue] node {} (4);
	\end{tikzpicture}
	\caption{A game instance with $l=3$. The social optimum (in blue) is 6 but the only equilibrium is for the gray player to select her internal cycle of length 3. }
	\label{fig:Example2}
\end{figure}
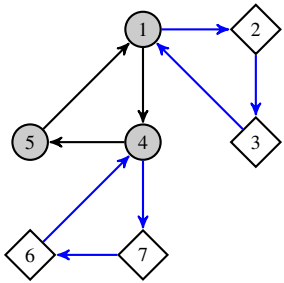

\begin{figure}[!h]
	\centering
	\begin{tikzpicture}[-,>=stealth',shorten >=0.3pt,auto,node distance=2cm,
		thick,countryA node/.style={circle,fill=black!20,draw},countryB node/.style={diamond,draw}, countryC node/.style={rectangle,fill=magenta!60,draw},scale=.75, transform shape]]
		\tikzstyle{matched} = [draw,line width=3pt,-]
		
		\node[countryB node] (1) {1};
		\node[countryB node] (2) [right of=1] {2};
		\node[countryB node] (3) [right of=2] {3};
		\node[countryA node] (4) [right of=3] {4};
		\node[countryC node] (5) [right of=4] {5};
		\node[countryA node] (6) [above of=2] {6};
		\node[countryA node] (7) [above of=4] {7};
		\node[countryC node] (8) [above of=6] {8};
		\node[countryA node] (9) [right of=8] {9};
		\node[countryB node] (10) [below of=3] {10};

		\path[->,every node/.style={font=\sffamily\small}]
		( 1) edge node {} (6)
		( 2) edge node {} (1)
		edge node {} (3)
		( 3) edge node {} (10)
		(4) edge node {} (3)
		edge node {} (7)
		(5) edge node {} (4)
		(6) edge node {} (2)
		edge node {} (4)
		edge node {} (8)
		(7) edge node {} (6)
		edge node {} (5)
		(8) edge node {} (9)
		(9) edge node {} (6)
		(10) edge node {} (2)
		edge node {} (4);
	\end{tikzpicture}
	\caption{ Player {\color{magenta}square} has only the strategy $\emptyset$; Player {\color{gray}ball} has two possible strategies $\emptyset$ and $(6,4,7)$; Player diamond has two possible strategies $\emptyset$ and $(3,10,2)$. The {\IndA} selects cycle (1,6,2) and (4,3,10) when no internal exchange is selected.}
	\label{fig:N3L3}
\end{figure}
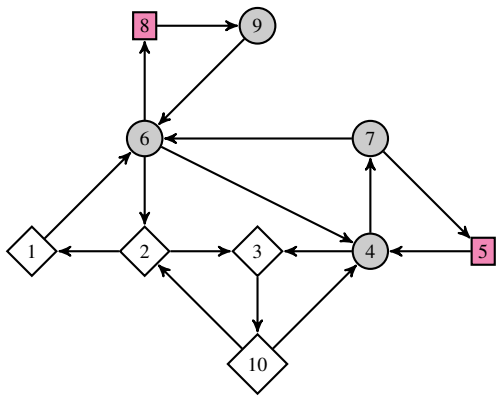

\section*{Acknowledgements}

The authors thank the Canadian Blood Services for providing the data used in this work.

The first author  wishes to thank the support of the Institut de valorisation des donn\'ees and Fonds de Recherche du Qu\'ebec through the FRQ–IVADO Research Chair in Data Science for Combinatorial Game Theory, and the Natural Sciences and Engineering Research Council of Canada through the discovery grant 2019-04557.


\newpage

\begin{appendix}
	\section{Proof of Theorem \ref{THM:SWE_EXISTENCE}}
	\label{app:SWE}
	
	For sake of clarity and in order to make the paper self contained, we recall some known results on matching theory and kidney exchange games.
	
	Algorithm~\ref{Alg:maximum matching} recalls the high level computation of a maximum matching: starting in a matching $M$, iteratively, the matching is augmented through the symmetric difference with a $M$-augmenting path, increasing the set of matched vertices. A valid starting matching can always be $M=\emptyset$. \citet{Edmonds1965a} proved that step~\ref{step:augmentation_path} can run in polynomial time which implies that the full algorithm also runs in polynomial time.
	
	\begin{algorithm}[!h]
		\caption{Computation of a maximum matching for graph $G$ given a matching $M$ (possibly, $M=\emptyset$).}
		\begin{algorithmic}[1]
			\FOR{$v \in V$, $M$-unmatched}
			\STATE Find a $M$-augmenting path $\mathfrak{p}$ starting in $v$ \label{step:augmentation_path}
			\IF{$\mathfrak{p}$ exists}
			\STATE $ M:= M \oplus \mathfrak{p}$ \hspace*{0.7cm}		{\color{gray}\# $\oplus$ represents the symmetric difference of two sets}
			\ENDIF
			\ENDFOR
			\RETURN $M$
		\end{algorithmic}
		\label{Alg:maximum matching}
	\end{algorithm}

	Next, we recover from \citet{Carvalho2017} the theorem providing the necessary and sufficient condition under which a player has incentive to deviate from a maximum matching. For an illustration of their result see Figures~\ref{fig:incentiveDeviateA} and \ref{fig:incentiveDeviateB}.
	
	\begin{theorem}
		Let $M$ be a maximum matching and $G^p(M)$ the subgraph of $G$ restricted to the edges in $E^p$ and the edges of $E^I$ without $M$-matched internally vertices. A player $p \in  N$ has incentive to deviate from $M \cap E^p$ if and only if
		\begin{enumerate}
			\item there is a $M^p \cup \left(M\cap E^I \right)$-alternating path $\mathfrak{p}$ in  $G^p(M)$ whose origin is a node in $V^p$, unmatched in this path, and the destination is a $M\cap E^I_p$-matched node in $V^{-p}$ and
			\item $\mathcal{A}\left(\mathbf{M\oplus \mathfrak{p}}\right)$ is equal to $ \left( M \oplus \mathfrak{p} \right) \cap E^I$.\footnote{This condition is not on the original paper by~\citet{Carvalho2017} which is restricted to the 2-player case, where players are indifferent about $\mathcal{A}$.}
		\end{enumerate}

			\begin{proof}
				In Theorem 6 of \citet{Carvalho2017}, if $M$  is a maximum matching of $G$, then only condition (ii) is necessary and sufficient, as the others correspond to augmenting paths and maximum matchings cannot have augmenting paths. 
				
				Next, note that Theorem 6 of \citet{Carvalho2017} refers exclusively to a player $p$ incentive to unilaterally deviate and, in particular, it just considers the graph $G^p(M)$, \ie, the part of the graph that player $p$'s internal matching can change when the opponents play according to $M$. Moreover, their proof does not consider any assumption on the {\IndA} selection when there are multiple optimal external exchanges. Hence, it does not matter  the fact that \citet{Carvalho2017} focus in 2-player games, neither the deterministic selection of the {\IndA} for their result to extend to any number of opponents for player $p$ and deterministic algorithm $\mathcal{A}$.
			\end{proof}

		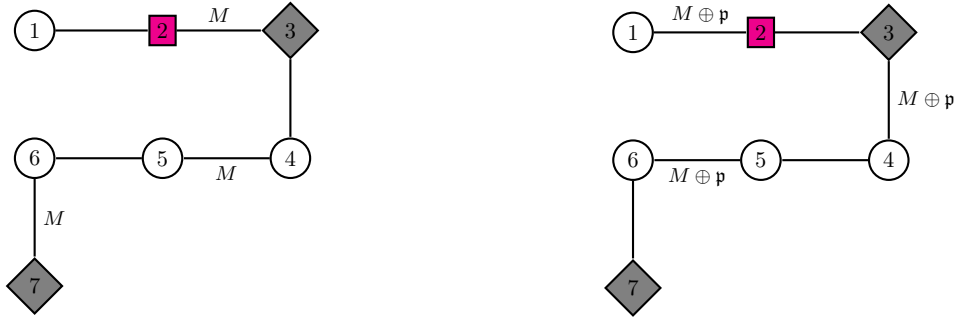
\begin{figure}
			\centering
			\begin{minipage}{.5\textwidth}
				\centering
				\begin{tikzpicture}[-,>=stealth',shorten >=0.3pt,auto,node distance=2cm,
					thick,countryA node/.style={circle,draw},countryB node/.style={diamond,draw,fill=gray},countryC node/.style={rectangle,draw,fill=magenta} , scale=.85, transform shape]
					\tikzstyle{matched} = [draw,line width=3pt,-]
					
					\node[countryA node] (1) {$1$};
					\node[countryC node] (2) [right of=1] {$2$};
					\node[countryB node] (3) [right of=2] {$3$};
					\node[countryA node] (4) [below of=3] {$4$};
					\node[countryA node] (5) [left of=4] {$5$};
					\node[countryA node] (6) [left of=5] {$6$};
					\node[countryB node] (7) [below of=6] {$7$};
					
					\path[-,every node/.style={font=\sffamily\small}]
					(1) edge node  {} (2)
					(2) edge node  {$M$} (3)
					(3) edge node  {} (4)
					(4) edge node  {$M$} (5)
					(5) edge node  {} (6)
					(6) edge node  {$M$} (7);
				\end{tikzpicture}
			\end{minipage}%
			\begin{minipage}{0.5\textwidth}
				\centering
				\begin{tikzpicture}[-,>=stealth',shorten >=0.3pt,auto,node distance=2cm,
					thick,countryA node/.style={circle,draw},countryB node/.style={diamond,draw,fill=gray},countryC node/.style={rectangle,draw,fill=magenta} , scale=.85, transform shape]
					\tikzstyle{matched} = [draw,line width=3pt,-]
					
					\node[countryA node] (1) {$1$};
					\node[countryC node] (2) [right of=1] {$2$};
					\node[countryB node] (3) [right of=2] {$3$};
					\node[countryA node] (4) [below of=3] {$4$};
					\node[countryA node] (5) [left of=4] {$5$};
					\node[countryA node] (6) [left of=5] {$6$};
					\node[countryB node] (7) [below of=6] {$7$};
					
					\path[-,every node/.style={font=\sffamily\small}]
					(1) edge node  {$M \oplus \mathfrak{p}$} (2)
					(2) edge node  {} (3)
					(3) edge node  {$M \oplus \mathfrak{p}$} (4)
					(4) edge node  {} (5)
					(5) edge node  {$M \oplus \mathfrak{p}$} (6)
					(6) edge node  {} (7);
				\end{tikzpicture}
			\end{minipage}
			\caption{Player 1 owns the white circles, player 2 owns the gray diamonds and player 3 owns the magenta squares. If $ \mathfrak{p} = (1,2,3,4,5,6,7)$, player 1 has incentive to deviate from $M \cap E^1=\lbrace (5,4) \rbrace$ to $(M \oplus \mathfrak{p}) \cap E^1 = \lbrace (6,5) \rbrace$, since $\mathcal{A}(M \oplus \mathfrak{p} )= \lbrace (4,3),(2,1) \rbrace$, resulting in more vertices of player 1 matched.}
			\label{fig:incentiveDeviateA}
		\end{figure}
		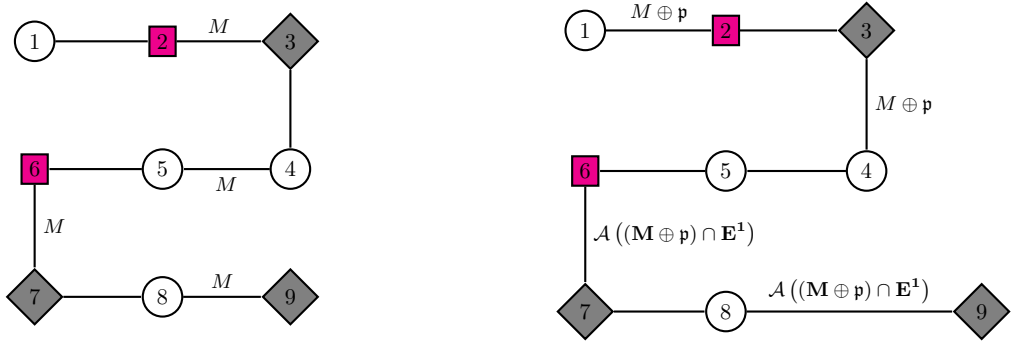
\begin{figure}
			\centering
			\begin{minipage}{.5\textwidth}
				\centering
				\begin{tikzpicture}[-,>=stealth',shorten >=0.3pt,auto,node distance=2cm,
					thick,countryA node/.style={circle,draw},countryB node/.style={diamond,draw,fill=gray},countryC node/.style={rectangle,draw,fill=magenta} , scale=.85, transform shape]
					\tikzstyle{matched} = [draw,line width=3pt,-]
					
					\node[countryA node] (1) {$1$};
					\node[countryC node] (2) [right of=1] {$2$};
					\node[countryB node] (3) [right of=2] {$3$};
					\node[countryA node] (4) [below of=3] {$4$};
					\node[countryA node] (5) [left of=4] {$5$};
					\node[countryC node] (6) [left of=5] {$6$};
					\node[countryB node] (7) [below of=6] {$7$};
					\node[countryA node] (8) [right of=7] {$8$};
					\node[countryB node] (9) [right of=8] {$9$};
					
					\path[-,every node/.style={font=\sffamily\small}]
					(1) edge node  {} (2)
					(2) edge node  {$M$} (3)
					(3) edge node  {} (4)
					(4) edge node  {$M$} (5)
					(5) edge node  {} (6)
					(6) edge node  {$M$} (7)
					(7) edge node  {} (8)
					(8) edge node  {$M$} (9);
				\end{tikzpicture}
			\end{minipage}%
			\begin{minipage}{0.5\textwidth}
				\centering
				\begin{tikzpicture}[-,>=stealth',shorten >=0.3pt,auto,node distance=2.2cm,
					thick,countryA node/.style={circle,draw},countryB node/.style={diamond,draw,fill=gray},countryC node/.style={rectangle,draw,fill=magenta} , scale=.85, transform shape]
					\tikzstyle{matched} = [draw,line width=3pt,-]
					
					\node[countryA node] (1) {$1$};
					\node[countryC node] (2) [right of=1] {$2$};
					\node[countryB node] (3) [right of=2] {$3$};
					\node[countryA node] (4) [below of=3] {$4$};
					\node[countryA node] (5) [left of=4] {$5$};
					\node[countryC node] (6) [left of=5] {$6$};
					\node[countryB node] (7) [below of=6] {$7$};
					\node[countryA node] (8) [right of=7] {$8$};
					\node[countryB node] (9) [right of=8,node distance=4cm] {$9$};
					
					\path[-,every node/.style={font=\sffamily\small}]
					(1) edge node  {$M \oplus \mathfrak{p}$} (2)
					(2) edge node  {} (3)
					(3) edge node  {$M \oplus \mathfrak{p}$} (4)
					(4) edge node  {} (5)
					(5) edge node  {} (6)
					(6) edge node  {$\mathcal{A}\left(\mathbf{(M \oplus \mathfrak{p})\cap E^1} \right)$} (7)
					(7) edge node  {} (8)
					(8) edge node  {$\mathcal{A}\left(\mathbf{(M \oplus \mathfrak{p})\cap E^1 }\right)$} (9);
				\end{tikzpicture}
			\end{minipage}
			\caption{Player 1 owns the white circles, player 2 owns the gray diamonds and player 3 owns the magenta squares. Suppose algorithm $\mathcal{A}$ takes decisions accordingly with the figures above. If $ \mathfrak{p}= (1,2,3,4,5,6,7,8,9)$, player 1 does not have incentive to deviate from $M \cap E^1= \lbrace(5,4) \rbrace$   to $(M \oplus \mathfrak{p}) \cap E^1 = \emptyset$ because algorithm $\mathcal{A}$ does not output matchings $(5,6)$ and $(7,8)$, but $(6,7)$ and $(8,9)$ which is also maximum (external) matching.}
			\label{fig:incentiveDeviateB}
		\end{figure}
		\label{thm:Carvalho2KEG}
	\end{theorem}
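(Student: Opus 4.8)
The plan is to reduce the $N$-player question to the two-player structural characterization of \citet{Carvalho2017} (their Theorem~6) and then to isolate the single genuinely new ingredient, condition~(ii), that the deterministic tie-breaking of {\IndA} contributes. The pivotal observation is that when player $p$ contemplates a unilateral deviation the profile $M^{-p}$ is frozen, so every vertex matched internally by an opponent under $M$ is permanently unavailable, and the only edges player $p$ can ever influence are those of $G^p(M)$, namely $E^p$ together with the international edges of $E^I$ avoiding opponents' internal matches. On $G^p(M)$ the reference matching is $M^p \cup (M\cap E^I)$ and the opponents collapse into one ``international pool'', which is exactly the situation Theorem~6 of \citet{Carvalho2017} analyses; crucially that proof uses nothing outside $G^p(M)$ and makes no assumption about how {\IndA} breaks ties among maximum external matchings. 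First I would record that, since $M$ is a maximum matching of $G$, there are no $M$-augmenting paths; hence every clause of the two-player characterization that refers to an augmenting path is vacuous, and only the alternating-path clause survives, which is condition~(i).

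For the sufficiency direction I would assume (i) and (ii) and let player $p$ deviate to $R^p := (M\oplus \mathfrak{p})\cap E^p$. Condition~(ii) forces $\mathcal{A}$ to respond with $(M\oplus\mathfrak{p})\cap E^I$, so the realised global matching is exactly $M\oplus\mathfrak{p}$. Because $\mathfrak{p}$ is $\bigl(M^p\cup(M\cap E^I)\bigr)$-alternating, begins at an $M$-unmatched vertex of $V^p$, and ends at a vertex of $V^{-p}$ matched by $M\cap E^I_p$, flipping along $\mathfrak{p}$ newly matches the origin, leaves every interior vertex of $V^p$ matched (each retains exactly one incident matching edge), and merely releases the terminal international edge at its $V^{-p}$ endpoint, which does not enter player $p$'s utility. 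Since $2|M^p|+|M^I_p|$ is precisely the number of player $p$'s patients matched, this quantity rises by one, so player $p$ strictly benefits and has incentive to deviate.

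For the necessity direction I would take any profitable deviation $R^p$, write $M^I(R^p):=\mathcal{A}(R^p,M^{-p})$ and set $B:=R^p\cup M^I(R^p)$. As $B$ agrees with $M$ on all opponents' internal edges, the symmetric difference $D:=M\oplus B$ contains no opponent-internal edge, hence $D\subseteq E^p\cup E^I$ lives inside $G^p(M)$ and is $\bigl(M^p\cup(M\cap E^I)\bigr)$-alternating. Decomposing $D$ into paths and cycles, cycles and components not touching $V^p$ leave player $p$'s count unchanged, so the strict gain is carried by a path $\mathfrak{p}$ whose origin is an $M$-unmatched vertex of $V^p$. Maximality of $M$ forbids an augmenting path, so the other endpoint is $M$-matched; using that $M$ is globally maximum, player $p$'s gain must be paid for by an opponent, which pins the terminus to be a $V^{-p}$ vertex whose $M$-match is an international edge incident to $p$, i.e.\ in $M\cap E^I_p$ --- this is condition~(i). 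Finally, since deviating along $\mathfrak{p}$ is by assumption profitable under the very rule $\mathcal{A}$ that produced $M^I(R^p)$, {\IndA}'s response to the post-deviation profile must realise $(M\oplus\mathfrak{p})\cap E^I$, which is condition~(ii).

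The part I expect to be the real obstacle is twofold. First is the bookkeeping that pins the terminus of $\mathfrak{p}$ to an edge of $M\cap E^I_p$ rather than to an opponent--opponent international edge: because $M^I(R^p)$ may itself route through opponents, one must argue --- via global maximality --- that the chain of gains and losses can always be truncated at an international edge incident to $p$. Second, and more fundamentally, is condition~(ii) itself, which has no analogue in \citet{Carvalho2017}: in the two-player world both players are indifferent to which maximum external matching {\IndA} picks, whereas with $N$ players a path that improves ``in principle'' is neutralised unless $\mathcal{A}$ selects $(M\oplus\mathfrak{p})\cap E^I$, exactly the phenomenon exhibited in Figure~\ref{fig:incentiveDeviateB}. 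Making (ii) the precise hinge that is simultaneously necessary and sufficient once (i) holds is where the argument must be most careful.
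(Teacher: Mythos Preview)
Your proposal is correct and follows essentially the same route as the paper: reduce to Theorem~6 of \citet{Carvalho2017} by observing that (a) their argument lives entirely in $G^p(M)$ and uses nothing about the number of opponents or about how {\IndA} breaks ties, so it transfers verbatim to the $N$-player setting, and (b) since $M$ is a maximum matching of $G$, the augmenting-path clauses of that theorem are vacuous and only the alternating-path clause---your condition~(i)---remains; condition~(ii) is then the single new ingredient needed to pin down the deterministic $\mathcal{A}$. The paper's own proof is in fact just this two-sentence citation argument; you supply considerably more detail by sketching both implications explicitly and by correctly flagging the delicate point in the necessity direction (isolating a single path $\mathfrak{p}$ from a general profitable deviation $R^p$ so that $\mathcal{A}$ still cooperates), which the paper leaves entirely to the cited reference.
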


	The results above will enable us to prove Theorem~\ref{THM:SWE_EXISTENCE}.  To that end we  introduce the definition of maximal $M$-augmenting path in $G^{\mathcal{D}}=(V \cup \mathcal{D}, E \cup E^{\mathcal{D}})$ which is a $M$-alternating path starting in a $M$-unmatched vertex of $V$, ending in a edge of $E^{\mathcal{D}}$ which cannot be removed and replaced by other edges without the path stopping to be $M$-augmenting. Now, we have all elements to build the overall algorithm that we denote as Algorithm~\ref{Alg:SWE_matching}.
	
	\begin{algorithm}[!ht]
		\caption{Computation of a {\SWE} given a maximum matching $M$.}
		\begin{algorithmic}[1]
			\STATE $\mathcal{D} := \lbrace d_{(u,v)}, d_{(v,u)}: (u,v) \in E^I \cap M, u \in V^i, v \in V^j, i\neq j \rbrace$ \label{step:artificialGraph}
			\STATE $E^\mathcal{D} := \lbrace (d_{u,v},u): d_{(u,v)} \in \mathcal{D} \rbrace$
			\STATE $G^\mathcal{D}:=(V \cup \mathcal{D}, E \cup E^\mathcal{D})$
			\STATE $ \bar{M}:=M$ // maximum matching of $G$ obtained from Algorithm~\ref{Alg:maximum matching} \label{step:maxmat}
			\WHILE{there is a maximal $\bar{M}$-augmenting path $\mathfrak{p}$ in  $G^\mathcal{D}$}  \label{step:disjoint_paths}
			\STATE $ \bar{M}:= \bar{M} \oplus \mathfrak{p}$
			\ENDWHILE
			\RETURN  $ \bar{M}$
		\end{algorithmic}
		\label{Alg:SWE_matching}
	\end{algorithm}
	
	The correctness of  Algorithm~\ref{Alg:SWE_matching} is proven by the following theorem.
	
	\begin{theorem}
		If for an output of Algorithm~\ref{Alg:SWE_matching}, $\mathcal{A}\left(\mathbf{\bar{M}} \right) = \bar{M} \cap E^I$, then $\bar{M}$ is a {\SWE} and it can be computed in polynomial time.
	\end{theorem}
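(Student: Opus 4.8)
The plan is to prove the statement in three parts: \emph{(i)} Algorithm~\ref{Alg:SWE_matching} terminates in polynomial time, \emph{(ii)} its output, restricted to $E$, is a maximum matching of $G$ and hence a social optimum, and \emph{(iii)} under the hypothesis $\mathcal{A}\left(\mathbf{\bar M}\right)=\bar M\cap E^I$, the profile $(\bar M\cap E^1,\dots,\bar M\cap E^n)$ with the {\IndA} playing $\bar M\cap E^I$ is a Nash equilibrium. The hypothesis is used only in part \emph{(iii)}, but it is essential there: it is precisely what guarantees that $\bar M$ is an actual outcome of the game when each player $p$ plays $\bar M\cap E^p$, so that player $p$'s utility at $\bar M$ equals $2\lvert \bar M\cap E^p\rvert+\lvert \bar M\cap E^I_p\rvert$ and a profitable unilateral deviation is well defined.

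For part \emph{(i)}, the key remark is that every vertex of $\mathcal D$ is pendant in $G^{\mathcal D}$, being adjacent to a single vertex of $V$; hence a maximal $\bar M$-augmenting path visits at most one vertex of $\mathcal D$, and only as its terminal vertex, all of its other vertices lying in $V$. Therefore $\bar M:=\bar M\oplus\mathfrak p$ keeps every $\bar M$-matched vertex of $V$ matched and turns the previously $\bar M$-unmatched origin, which lies in $V$, into a matched vertex; so each iteration strictly decreases the number of $\bar M$-unmatched vertices of $V$ and the while loop runs at most $\lvert V\rvert$ times. Each iteration is one augmentation search in $G^{\mathcal D}$ with a stopping rule, doable in polynomial time by adapting Edmonds' blossom algorithm; together with the maximum-matching computation of line~\ref{step:maxmat} (Algorithm~\ref{Alg:maximum matching}), this gives polynomial total time. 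For part \emph{(ii)}, I would argue by induction over the iterations that $\bar M\cap E$ stays a maximum matching of $G$: it holds at line~\ref{step:maxmat}, and for the inductive step I decompose a maximal $\bar M$-augmenting path as $\mathfrak p=\mathfrak q+e$, where $e\in E^{\mathcal D}$ is the unique dummy edge and $\mathfrak q\subseteq E$ is an $(\bar M\cap E)$-alternating path from an $\bar M$-unmatched vertex of $V$ to an $\bar M$-matched vertex of $V$ ending with a matched edge; since $\mathfrak q$ ends at a matched vertex, $(\bar M\cap E)\oplus\mathfrak q$ has the same size as $\bar M\cap E$, and it equals $(\bar M\oplus\mathfrak p)\cap E$, so maximality is preserved.

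For part \emph{(iii)}, assume for contradiction that some player $p$ has a profitable deviation from $\bar M\cap E^p$ against $\bar M\cap E^{-p}$. Since $\bar M\cap E$ is a maximum matching (part \emph{(ii)}), Theorem~\ref{thm:Carvalho2KEG} applies and yields both an $(\bar M\cap E^p)\cup(\bar M\cap E^I)$-alternating path $\mathfrak q$ in $G^p(\bar M)$ from an $\bar M$-unmatched vertex of $V^p$ to a vertex of $V^{-p}$ matched by some $(u,v)\in\bar M\cap E^I_p$ with $u\in V^p$, and the tie-breaking condition $\mathcal{A}\left(\mathbf{\bar M\oplus\mathfrak q}\right)=(\bar M\oplus\mathfrak q)\cap E^I$. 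I would then verify, via a suitable invariant maintained alongside part \emph{(ii)} on which international edges of $\bar M$ survive in $M$, that $(u,v)\in M$, so that $d_{(u,v)}\in\mathcal D$ and $(u,d_{(u,v)})\in E^{\mathcal D}$, and that replacing the terminal edge $(u,v)$ of $\mathfrak q$ by $(u,d_{(u,v)})$ produces a maximal $\bar M$-augmenting path in $G^{\mathcal D}$ — contradicting termination of the while loop. Hence no such $p$ exists, and $\bar M$ is a {\SWE}.

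The main obstacle is exactly this last translation between \emph{``player $p$ has a profitable deviation''} and \emph{``a maximal $\bar M$-augmenting path still exists in $G^{\mathcal D}$''}. One must pin down the right invariant so that the dummy of the terminal international edge of $\mathfrak q$ is present in $\mathcal D$ (which is frozen from $M$ and never updated), check that cutting $\mathfrak q$ at $u$ and appending the pendant edge yields an alternating, hence augmenting, path, and confirm it is maximal in the algorithm's sense; the bookkeeping between $M$, $\bar M$, the dummy construction, and the {\IndA}'s tie-breaking (condition (ii) of Theorem~\ref{thm:Carvalho2KEG}) is the delicate part, while the rest is routine.
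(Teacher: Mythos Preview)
Your parts \emph{(i)} and \emph{(ii)} are correct and essentially match the paper's argument (the paper is terser, but the content is the same: each pass of the while loop augments by one dummy edge, and restricting to $E$ leaves a maximum matching of $G$).

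Part \emph{(iii)} is where your route diverges from the paper's and where the real gap lies. Your plan is to take the Theorem~\ref{thm:Carvalho2KEG} deviation path $\mathfrak q$, replace its terminal edge $(u,v)\in\bar M\cap E^I$ by the dummy edge $(u,d_{(u,v)})$, and obtain a $\bar M$-augmenting path in $G^{\mathcal D}$ contradicting termination. Two problems. First, the origin $v_0$ of $\mathfrak q$ is only $(\bar M\cap E)$-unmatched; in $G^{\mathcal D}$ it may well be $\bar M$-matched to a dummy $d_{(v_0,w)}$ (this happens whenever $v_0$ was covered by $M\cap E^I$), so your path need not start at a free vertex. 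Second, and more seriously, the dummy $d_{(u,v)}$ exists only when $(u,v)\in M\cap E^I$, and your proposed invariant ``$(u,v)\in M$'' does not hold in general: the while loop can introduce \emph{new} international edges into $\bar M$. For a concrete instance, take $V^1=\{1,2,3\}$, $V^2=\{4,5\}$, $E^1=\{(1,2)\}$, $E^I=\{(2,4),(3,4),(3,5)\}$ and $M=\{(1,2),(3,5)\}$; one pass of the loop along $4$--$3$--$5$--$d_{(5,3)}$ yields $\bar M\cap E=\{(1,2),(3,4)\}$, so $\bar M\cap E^I=\{(3,4)\}\not\subseteq M$. There is no dummy for $(3,4)$, and your construction breaks. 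You correctly flag this bookkeeping as ``the delicate part'', but you do not supply an invariant that actually holds, and the obvious candidate fails.

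The paper avoids both issues by arguing through the symmetric difference $\mathcal P=\bar M\oplus M$ in $G^{\mathcal D}$ rather than trying to build a $\bar M$-augmenting path directly. Each component of $\mathcal P$ is (claimed to be) a maximal $M$-augmenting path; if a Theorem~\ref{thm:Carvalho2KEG} path $\mathfrak p=(v_1,\dots,v_k)$ existed, either it coincides with a component of $\mathcal P$ (impossible, since the loop exhausted those), or $v_1$ is $\bar M$-matched to some dummy, in which case the component of $\mathcal P$ through that dummy edge could have been strictly extended by $\mathfrak p$, contradicting maximality. The key difference is that the paper works with $M$-augmenting structure (where dummies are always present by construction of $\mathcal D$ from $M$) rather than $\bar M$-augmenting structure (where, as above, the needed dummy may be missing). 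If you want to salvage your direct approach, you would have to route the end of $\mathfrak q$ through $\bar M\oplus M$ to reach an edge of $M\cap E^I$ --- at which point you are essentially reproducing the paper's argument.
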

	\begin{proof}
		By construction, $\bar{M}$ is a maximum matching of $G$: $\bar{M}$ starts by being a maximum matching of $G$, step~\ref{step:maxmat}, and afterwards,   maximal $\bar{M}$-alternating paths in $G^{\mathcal{D}}$ are applied to $\bar{M}$ which does not change the cardinality of $\bar{M}$ in $G$ (but augments the number of vertices matched in  $G^{\mathcal{D}}$). Therefore, the output $\bar{M}$ is a social optimum. It remains to show that $\bar{M}$ is a Nash equilibrium. To that end, one must prove that there is no $\bar{M}$-alternating path as in Theorem~\ref{thm:Carvalho2KEG}.
		
		Let us define $\mathcal{P}$ as the set of disjoint paths obtained through   $\bar{M}  \oplus M $; $\bar{M}$ and $M$ are maximum matchings of $G$, thus the operation  $\oplus$ leads to cycles of even size and paths. 
		Note that $\bar{M}  =  M \displaystyle \oplus_{\mathfrak{p} \in \mathcal{P}} \mathfrak{p}$. By construction,  any path in $\mathcal{P}$ is a maximal $M$-augmenting path in $G^{\mathcal{D}.}$
		
		By contradiction, suppose that  $\mathfrak{p} = (v_1, \ldots, v_k)$ is a $\bar{M}$-alternating path in $G$ as in Theorem~\ref{thm:Carvalho2KEG}.
		If $\mathfrak{p}$ is a maximal $M$-augmenting path in $G^{\mathcal{D}}$, then it should have been found in step~\ref{step:disjoint_paths} (or was contained in a maximal $M$-augmenting path in $G^{\mathcal{D}}$). Thus, $\mathfrak{p}$ is not maximal $M$-augmenting path, \ie, it is not in  $\mathcal{P}$. This implies that $v_1$ is $\bar{M}$-matched with a $d_{u,v_1} \in \mathcal{D}$. Let $\hat{\mathfrak{p}} \in \mathcal{P}$ be the path containing the edge $(d_{v_1,u},v_1)$, then $\hat{\mathfrak{p}}$ is not a maximal $M$-augmenting path in $G^{\mathcal{D}}$, since it could have been extended through  $\mathfrak{p}$ (ending in $(d_{v_k,r},r)$).
		
		Algorithm~\ref{Alg:SWE_matching} can run in polynomial time since its most  consuming time operation, step~\ref{step:disjoint_paths}, can be done in polynomial time; recall Algorithm~\ref{Alg:maximum matching}.
		
	\end{proof}
	Finally, we relax the previous theorem dependence in the output of $\mathcal{A}$. First, make $M$ equal to $\left( \tilde{M} \setminus E^I \right) \cup \mathcal{A}\left(\mathbf{\tilde{M}}\right)$, where $\tilde{M}$ is a maximum matching, and give it as input for Algorithm~\ref{Alg:SWE_matching}. Now, proceed as in the previous proof to compute $\mathcal{P}$, but without edges in $E^{\mathcal{D}}$. Each $\mathfrak{p} \in \mathcal{P}$ can be partitioned in $\mathfrak{p}_1, \mathfrak{p}_2, \ldots, \mathfrak{p}_k$ such that $\mathfrak{p} = \mathfrak{p}_1 \mathfrak{p}_2 \ldots \mathfrak{p}_k$ and each $\mathfrak{p}_i \in E^p \cup E^I$. For each  $\mathfrak{p}$, while $\mathcal{A}$ agrees with the application of $ \mathfrak{p}_i$ to $M$ through operation $\oplus$ ( \ie,  $\mathcal{A} \left(\mathbf{M \oplus \mathfrak{p}_i }\right)= (M \oplus \mathfrak{p}_i) \cap E^I$), do $M = M \oplus \mathfrak{p}_i$. In the end, by the same arguments as the previous proof, the obtained $M$ is an equilibrium.

	\section{Proof of Theorem~\ref{THM:FULLINFORMATION}}
	\label{app:FullInformation}
	
	Without loss of generality, assume that the players' set of revealed vertices satisfy $|V^1| \geq |V^2| \geq \ldots \geq |V^n|$.
	
	The optimal solution of the following maximum-weighted matching is a maximum matching and prioritizes the players with more vertices revealed:
	\begin{subequations}
		\begin{alignat}{4}
			\max_y & \sum_{(u,v) \in E^I} y_{u,v} w_{u,v} \\
			\hbox{s.t. } & \sum_{(u,v) \in E^I} y_{u,v} \leq 1 & \forall u \in V \\
			& y_{u,v} \in \lbrace 0,1 \rbrace & \forall (u,v) \in E^I,
		\end{alignat}
		\label{Optimization:FullInfo}
	\end{subequations}
	where $w_{u,v}$   is equal to $(|V^1| + |V^2|) |V| + |V^i|+|V^j|$ for $u \in V^i$ and $v \in V^j$. Note that  Problem~\eqref{Optimization:FullInfo} is a maximum-weighted matching and thus, it can be solved in polynomial time.
	
	Next,  we split the proof in two parts by proving the following claims.
	
	\begin{claim}
		The optimal solution of Problem~\eqref{Optimization:FullInfo} is a maximum cardinality matching
	\end{claim}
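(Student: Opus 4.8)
The plan is to exploit the fact that the weight $w_{u,v}$ built into~\eqref{Optimization:FullInfo} consists of a huge constant that is identical on every international edge, plus a comparatively tiny player-size bonus; the constant forces any optimal solution to be as large as possible, and the bonus is then left free to perform the lexicographic tie-breaking (which is the content of the companion claim). Concretely, I would write $w_{u,v} = C + r_{u,v}$, where $C := (|V^1|+|V^2|)\,|V|$ does not depend on the edge and $r_{u,v} := |V^i|+|V^j|$ for $u \in V^i$, $v \in V^j$ collects the player-size bonus. Since an international edge joins two \emph{distinct} players, the indices $i\neq j$ are among $1,\dots,n$, so by the assumed ordering $|V^1|\ge |V^2|\ge\dots\ge|V^n|$ we have $0 \le r_{u,v} \le |V^1|+|V^2|$ for every $(u,v)\in E^I$.

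Next I would estimate the objective value of an arbitrary matching $M\subseteq E^I$ in terms of $k:=|M|$. Summing the decomposition over the edges of $M$ gives
\[
 kC \;\le\; \sum_{(u,v)\in M} w_{u,v} \;=\; kC + \sum_{(u,v)\in M} r_{u,v} \;\le\; kC + k\,(|V^1|+|V^2|).
\]
Because $M$ is a matching it saturates $2k$ distinct vertices, hence $2k\le|V|$ and therefore $k\,(|V^1|+|V^2|)\le \tfrac{|V|}{2}\,(|V^1|+|V^2|) < |V|\,(|V^1|+|V^2|) = C$ (the only degenerate situation, in which fewer than two players own a vertex, has $E^I=\emptyset$ and is trivial). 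Combining the two displays, the objective value of every matching of cardinality $k$ lies in the half-open interval $[\,kC,\ (k+1)C\,)$.

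The claim then follows at once: any matching of cardinality $k+1$ has objective value at least $(k+1)C$, which strictly exceeds the objective value of any matching of cardinality $k$. So the objective of~\eqref{Optimization:FullInfo} is strictly increasing in the number of matched edges, and every optimal solution must attain the maximum possible cardinality, that is, it is a maximum-cardinality matching of $(V,E^I)$. I do not expect a genuine difficulty here; the only point requiring care is the calibration of $C$ — it must dominate the largest bonus $k(|V^1|+|V^2|)$ collectable by a single matching, and taking the factor $|V|$ (rather than something cruder) keeps all weights polynomially bounded, which is exactly what is needed so that~\eqref{Optimization:FullInfo} remains a polynomially solvable maximum-weight matching problem in Theorem~\ref{THM:FULLINFORMATION}.
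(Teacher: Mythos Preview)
Your argument is correct and is essentially the same as the paper's: both show that the constant part $(|V^1|+|V^2|)|V|$ of every edge weight dominates the total residual that any matching can accumulate, so increasing cardinality always strictly increases the objective. Your presentation via the decomposition $w_{u,v}=C+r_{u,v}$ and the interval $[kC,(k+1)C)$ is a bit cleaner than the paper's direct contradiction chain, but the underlying idea is identical.
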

	A maximum cartinality matching $M$ has weight at least $|M|( (|V^1|+|V^2|) |V| +|V^{n-1}| + |V^n| )$. Suppose that a maximum-weighted matching $\bar{M}$ is not of maximum cardinality. Then, it has weight at most
	
	\begin{subequations}
		\begin{alignat*}{4}
			|\bar{M}|\left( (|V^1|+|V^2|) |V| +|V^1| + |V^2| \right) =&|\bar{M}| \left( |V^1|+|V^2| \right) \left( |V| +1 \right) \\
			\leq & \left(|M|-1 \right) \left( |V^1|+|V^2| \right) \left( |V| +1 \right) \\
			= & |M| \left( |V^1|+|V^2| \right) \left( |V| +1 \right)\\
			& - \left( |V^1|+|V^2| \right) \left( |V| +1 \right) \\
			= & |M| \left( |V^1|+|V^2| \right) |V| + \\
			& |M| \left( |V^1|+|V^2| \right)  - \\
			& \left( |V^1|+|V^2| \right) \left( |V| +1 \right)\\
			= & |M| \left( |V^1|+|V^2| \right) |V| + \\
			& \left( |V^1|+|V^2| \right)\left( |M| -|V|-1 \right)\\
			\leq & |M| \left(|V^1|+|V^2| \right) |V|
		\end{alignat*}
	\end{subequations}
	which contradicts the fact that $\bar{M}$ is a maximum-weighted matching.
	
	\begin{claim}
		The optimal solution of Problem~\eqref{Optimization:FullInfo} prioritizes the players with more revealed vertices.
	\end{claim}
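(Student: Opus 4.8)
\emph{Proof proposal.}
By the preceding claim, every optimal solution $M^*$ of~\eqref{Optimization:FullInfo} is a maximum cardinality matching, so the whole content of the statement is to determine \emph{which} maximum matching is selected. The plan is to peel off the dominant term of the objective and then read the residual term as a player-size-weighted vertex coverage. Fixing the common size $m$ of all maximum matchings, I would write, for any maximum matching $M$,
\[
\sum_{(u,v)\in M} w_{u,v}
= m\,(|V^1|+|V^2|)\,|V|
+ \sum_{\substack{(u,v)\in M \\ u\in V^i,\, v\in V^j}} \big(|V^i|+|V^j|\big) .
\]
The first summand is the same for every maximum matching, so $M^*$ is precisely a maximum matching that maximizes $\Psi(M):=\sum_{(u,v)\in M}(|V^i|+|V^j|)$. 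Since each endpoint of every matched edge (internal or international) is charged the size of the player owning it,
\[
\Psi(M) = \sum_{v\in V(M)} |V^{p(v)}| = \sum_{p\in N} |V^p|\, m_p(M),
\]
where $p(v)$ is the owner of $v$, $V(M)$ the set of vertices covered by $M$, and $m_p(M)=|V^p\cap V(M)|$. By the assumed ordering $|V^1|\ge|V^2|\ge\cdots\ge|V^n|$, maximizing $\Psi$ already puts strictly more weight on covering vertices of players with strictly more revealed vertices, which gives the statement at the level of a single weighted criterion.

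To obtain the sharper reading of ``prioritization'' (lexicographic in decreasing player size), I would pass to the matching matroid: the matchable vertex sets of $G$ are the independent sets of a matroid $\mathcal{M}$ whose bases are exactly the vertex sets covered by maximum matchings of $G$ (this is the one classical fact I would import). Then $\Psi$ is the linear weight on $\mathcal{M}$ with element weights $\omega(v)=|V^{p(v)}|$, and $M^*$ realizes a maximum-weight basis of $\mathcal{M}$. Running the matroid greedy with the elements ordered so that all of $V^1$ comes first, then all of $V^2$, and so on (an order compatible with $\omega$ decreasing), produces a maximum-weight basis that contains as many elements of $V^1$ as any independent set, hence as any basis, then as many of $V^2$ subject to that, etc.\ --- i.e.\ a lexicographically maximal coverage vector $(m_1,\dots,m_n)$. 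When the sizes $|V^p|$ are pairwise distinct, any two maximum-weight bases have the same coverage vector: by the symmetric-exchange property their difference is matched by size-preserving swaps, which for distinct sizes are player-preserving. Hence $M^*$ has this lexicographically maximal coverage, which is exactly ``prioritizing the players with more revealed vertices''. A hands-on substitute for the greedy step, if one prefers to avoid matroid language, is the standard symmetric-difference argument: if some maximum matching $M'$ beat $M^*$ at the first level $k$ where $m_j(M')=m_j(M^*)$ for $j<k$ and $m_k(M')>m_k(M^*)$, then $M^*\oplus M'$ would contain an alternating even path $P$ whose $M'$-exclusive end lies in $V^k$; flipping it yields a maximum matching with $\Psi(M^*\oplus P)-\Psi(M^*)=|V^k|-|V^\ell|$ for the owner $\ell$ of the $M^*$-exclusive end, and the minimality of $k$ forces $\ell\ge k$, contradicting optimality of $M^*$ whenever $|V^\ell|<|V^k|$.

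The main obstacle is the bookkeeping around ties among the $|V^p|$: when several players have equally many revealed vertices, ``prioritization'' is only meaningful up to the order within a size class, and one must check that the argument above never has to break such a tie in a chosen direction --- equivalently, that the only freedom left in $M^*$ is redistributing coverage within size classes, which is precisely the harmless case (and is automatic from the uniqueness statement for distinct sizes applied to each size class). Everything else is routine, and the link back to Lemma~\ref{lem:full} is then immediate, since covering weakly more vertices of the larger players under every comparison is exactly the monotonicity (``revealing more vertices can only help a player'') invoked there.
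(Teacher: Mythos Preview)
Your decomposition of the objective into the constant term $m\,(|V^1|+|V^2|)\,|V|$ plus the residual $\Psi(M)=\sum_p |V^p|\,m_p(M)$ is correct and is the right first move. Your matroid route---matchable vertex sets form a matroid, greedy with weights $\omega(v)=|V^{p(v)}|$ simultaneously maximises $\Psi$ and yields a lexicographically maximal weight profile, and Brualdi exchange pins down the coverage vector of every $\Psi$-maximiser when the $|V^p|$ are distinct---is correct and is a genuinely different approach from the paper's. The paper stays elementary: it takes the symmetric difference of the optimum $M$ with a lex-better $\bar M$ and exhibits a single alternating path along which flipping increases the weight. Your route buys a clean treatment of the lexicographic claim and of ties, at the price of importing matroid theory the paper does not use.

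One caveat: your hands-on alternative via $M^*\oplus M'$ has the same slip as the paper's own three-line argument. You assert that for a path $P$ with $M'$-exclusive end in $V^k$, ``minimality of $k$ forces $\ell\ge k$'' for the owner $\ell$ of the other end; but equal \emph{net} counts $m_j(M')=m_j(M^*)$ for $j<k$ do not rule out individual paths with $M^*$-end in some $V^\ell$ with $\ell<k$, since such a path can be balanced by another path going the opposite way. The correct pigeonhole is over $V^{\le k}$ rather than $V^k$: the number of paths with $M'$-exclusive end in $V^1\cup\cdots\cup V^k$ strictly exceeds the number with $M^*$-exclusive end there (equality for each $j<k$, strict at $k$), so some path has $M'$-end in $V^{\le k}$ and $M^*$-end in $V^{>k}$; flipping \emph{that} path raises $\Psi$ by at least $|V^k|-|V^{k+1}|>0$. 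The paper's proof makes the analogous leap (it only asserts the $M$-end lies ``not in $V^p$'') and would benefit from the same patch.
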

	Suppose that there is a $M$  solution for Problem~\eqref{Optimization:FullInfo} and a $\bar{M}$ with $|M| = |\bar{M}|$, such that $|M_1| = |\bar{M}_1|$, $|M_2| = |\bar{M}_2|$, $\ldots$, $|M_{p-1}| = |\bar{M}_{p-1}|$ and $|M_p| < |\bar{M}_p|$. Then, there is a path $\mathfrak{p}$ in $M \oplus \bar{M}$ such that it starts in an edge $(u_0,v_0) \in \bar{M}$ with $u_0 \in V^p$ and ends in an edge $(u_f,v_f) \in M$  with $v_f \notin V^p$. Therefore, it holds $\sum_{e \in M \cap \mathfrak{p}} w_e < \sum_{e \in \bar{M} \cap \mathfrak{p}} w_e $ which implies that $M$ is not the optimum of Problem~\eqref{Optimization:FullInfo}.

	\section{Proof of Theorem \ref{THM:COMPLEXITY}}\label{app:NP_complete}
	
	Next, we show that {\BestReponsepes} is NP-complete even when player $p$'s internal graph is bipartite. To that end, we reduce in polynomial time the following NP-complete problem to {\BestReponsepes}:
	
	\begin{equation}
		\begin{aligned}
			& \textrm{Problem: {\FVS} }  \\[1.0ex]
			& \textrm{Instance: A diagraph $G=(V,A)$ and a positive integer $l$. } \\[1.0ex]
			& \textrm{Question: Does there exist a set $V'\subseteq V$ such that} \\
			& \hspace{1.7cm}\textrm{$\vert V'\vert \leq l$ and $G'=(V \setminus V',A)$  has no directed cycle?}
		\end{aligned}
		\tag{FVS}\label{quote_feedbackvertex_problem}
	\end{equation}

	\begin{lemma}
		{\BestReponsepes} is NP-complete
	\end{lemma}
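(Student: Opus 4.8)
The plan is to settle membership in NP first, then prove NP-hardness by a polynomial reduction from directed \FVS{}.

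\textbf{Membership in NP.} Given a candidate internal matching $M^p$ of $G^p$, player $p$'s pessimistic profit can be computed in polynomial time, so $M^p$ is a polynomial-size, polynomial-time-checkable certificate. Indeed, once $M^p$ is fixed, the vertices of $V^p$ it covers are removed from the international graph, and the pessimistic {\IndA} reaction is a maximum-cardinality matching of the residual graph $\left(V, E^I(M^{-p},M^p)\right)$ minimising $\sum_{e \in M^I_p} w^p_e$. This lexicographic (maximise cardinality, then minimise the $V^p$-incident weight) matching problem reduces in a standard way to a single maximum-weight matching: give every $V^p$-incident edge $e$ weight $C - w^p_e$ and every other edge weight $C$, with $C$ larger than the total $w^p$-weight, and apply Edmonds' algorithm. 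Adding $\sum_{e \in M^p} w^p_e$ yields the profit, which is then compared with $P$.

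\textbf{NP-hardness.} From an instance $(G=(V,A),l)$ of directed \FVS{} (loopless, $n:=|V|$) I would build a {\BestReponsepes} instance as follows. Player $p$'s internal graph $G^p$ is bipartite, with one side $\{x_v : v \in V\}$ of ``deletable'' vertices (which will also carry international edges) and the other side consisting of $l$ ``token'' vertices, joined as a complete bipartite graph; all internal weights are $0$. Thus a matching $M^p$ of $G^p$ deletes from the international graph an arbitrary set $V'$ of the $x_v$ with $|V'|\le l$, contributes nothing directly to player $p$'s utility, and ``deleting $x_v$'' will mean ``deleting $v$ from $G$''. The international graph encodes $A$ through an arc gadget built from the opponents' vertices (which are plentiful, since $N$ is arbitrary: one or two private opponents per arc), together with a private ``good'' edge $x_v g_v$ of weight $w^p=1$ for each $v$ — the only source of utility. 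The gadget is designed so that (a) every maximum-cardinality matching of the residual international graph matches every undeleted $x_v$, and (b) the pessimistic {\IndA} can divert $x_v$ away from its good edge only by routing it through the arc gadgets along a \emph{directed cycle} of $G$ surviving in $G - V'$; an in-slot/out-slot flow structure on each $x_v$ forces the set of diverted vertices to be a union of vertex-disjoint directed cycles. Consequently, for a fixed choice $V'$, player $p$'s pessimistic profit equals $n - |V'| - \beta(G - V')$, where $\beta(H)$ is the maximum total length of a family of vertex-disjoint directed cycles of $H$. Setting $P := n - l$, I would then show that $G$ has a feedback vertex set of size $\le l$ iff $\min_{|V'|\le l}\bigl(|V'| + \beta(G - V')\bigr) \le l$ iff player $p$ has a best response of profit $\ge P$; the construction is plainly polynomial, and $G^p$ is bipartite by design.

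\textbf{Main obstacle.} The delicate point is the arc gadget together with the calibration of $P$. I must guarantee simultaneously that (i) the maximum cardinality of the residual international matching is the \emph{same} for every choice of $V'$ (otherwise ``pessimistic among maximum-cardinality matchings'' is a moving target and the profit formula collapses), (ii) the adversarial {\IndA} can shave exactly one unit of $p$'s utility per vertex of a surviving directed cycle and nothing more — in particular it may not exploit directed \emph{paths} — which is precisely what the in/out-slot flow device must enforce, and (iii) the equivalence with \FVS{} is tight at the threshold $P$, i.e.\ no ``partially deleted'' $V'$ reaps profit $P$ without extending to a genuine feedback vertex set of size $\le l$; securing this last point may require small adjustments to the weights or to the token budget. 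Since only the \emph{value} of the pessimistic best response enters the argument, working throughout with the (polynomially computable) minimum-weight maximum-cardinality matching sidesteps any need to fix a particular deterministic $\mathcal{A}$. The same construction, with the roles of minimisation and maximisation exchanged in the {\IndA} reaction, is the natural template for the companion statement about {\BestReponseopt}.
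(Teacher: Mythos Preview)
Your NP-membership argument is correct and essentially the paper's: once $M^p$ is fixed, the pessimistic {\IndA} reaction is a minimum-weight matching among those of maximum cardinality (a $k$-cardinality assignment), computable in polynomial time. For hardness, you and the paper share the same skeleton --- a reduction from directed {\FVS} in which $G^p$ is complete bipartite between $\{x_v:v\in V\}$ and $l$ tokens, so that $M^p$ encodes a deletion set $V'$ with $|V'|\le l$. The paper then uses only two players and a compact international part: one copy vertex $v'_i$ per $v_i$, identity edges $(v_i,v'_i)$ of weight $n$, and arc edges $(v_i,v'_j)$ of weight $1$ for each $(v_i,v_j)\in A$, with threshold $P=2l+n(n-l)$. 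You instead propose per-arc gadgets, several opponents, and an in-/out-slot device calibrated so that the pessimistic profit equals $n-|V'|-\beta(G-V')$; the equivalence you state between {\FVS} of size $\le l$ and $\min_{|V'|\le l}\bigl(|V'|+\beta(G-V')\bigr)\le l$ is indeed correct (the vertices of an optimal disjoint-cycle packing form a feedback vertex set of size $\beta$).

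The gap in your proposal is that the gadget is never constructed. Everything hinges on your obstacle~(ii) --- that the adversarial {\IndA} may reroute $x_v$ only along vertex-disjoint \emph{cycles} of $G-V'$ and not along directed paths --- and you merely assert that an in-/out-slot structure will enforce this, without exhibiting it. Absent an explicit gadget and a verification of (i)--(iii), the profit identity $n-|V'|-\beta(G-V')$ is a target, not a theorem; the natural bipartite-double-cover encoding fails exactly here, because a path from $V\setminus V'$ terminating in $V'$ lets the {\IndA} avoid a good edge without closing any cycle. Either build and verify the in-/out-slot gadget concretely, or study the paper's two-player construction, which dispenses with per-arc gadgets altogether --- though you will find that the cycle-versus-path concern you rightly single out is precisely the delicate step in that argument as well.
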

	\begin{proof} We split the proof in two parts by proving the following claims.

		\begin{claim}
			{\BestReponsepes} is in NP.
		\end{claim}
		Given $M^p$,  the pessimistic {\IndA} problem can be solved in polynomial time. Compute (in polynomial time) a maximum cardinality matching for {\IndA} given $M^p$. Let $k$ be the size of such matching. Then, the {\IndA} problem reduces to find a minimum weighted matching for player $p$ of cardinality $k$ which is equivalent to the well-known $k$-cardinality assigment problem. A $k$-cardinality assigment problem can be solved in polynomial time~\cite{Dellamico1997, Lodi2001}.

		\begin{claim}
			{\BestReponsepes} is NP-hard.
		\end{claim}
		First, we reduce in polynomial time, an instance of {\FVS} to one of {\BestReponsepes}:
		\begin{itemize}
			\item Let $N=\lbrace A,B \rbrace$ with $p=A$.
			\item Let $V=\lbrace v_1,v_2, \ldots,v_n\rbrace$. Set $V^A=V \cup \mathcal{L}$ with $\mathcal{L}= \lbrace 1,2,\ldots,l \rbrace$ and $V^B=\lbrace v'_1,v'_2, \ldots,v'_n\rbrace$.
			\item Set $E^A=\lbrace (i,j): i \in \mathcal{L}, j \in V\rbrace$ and $w_e^A=2$ for $e \in E^A$. Note that $G^A$ is bipartite. See Figure~\ref{fig:gadget_i}.
			\item $E^I=Id \cup \mathcal{A}$ with $Id=\lbrace (v_i,v'_j): i=j\rbrace$ and $\mathcal{A}= \lbrace (v_i,v'_j): (v_i,v_j) \in A\rbrace$, and $w^A_e= n$ for $e \in Id$ and $w_e^A= 1$ for $e \in A$. See Figure~\ref{fig:external_gadget}.
			\item Set $P=2l+n(n-l)$.
		\end{itemize}
		We claim that in the constructed instance of {\BestReponsepes}, player A can have a profit of $2l+n(n-l)$ if and only if the {\FVS} instance has answer YES.
		
		\paragraph{Proof of if} Assume the {\FVS} instance has  answer YES. Then, there is a $V' \subseteq V$ with $\vert V' \vert \leq  l$ such that $G'=(V\setminus V',A)$ does not contain any directed cycle. Set $M^A= \lbrace (i,v_i): v_i \in V' \rbrace$, leading to an internal profit of  $2 \vert V' \vert$. Now, the vertices $v_i \in V'$ are $M^A$-matched and thus are not available for the {\IndA}. In this way, the maximum cardinality matching for the {\IndA} is $n-\vert V' \vert$ (number of player A's vertices available for the {\IndA}), which can be attained by matching $(v_i,v'_i)$ for $v_i \in V \setminus V'$.  However, the {\IndA} has incentive to avoid to select edges in $Id$ since they have associated a weight of $n$ for player A. If there is an  {\IndA} matching of maximum cardinality  $n-\vert V' \vert$ that contains an edge $(v_i,v'_j) \in A$, then there is a cycle in the {\FVS} instance: $v_j$ must be matched with $v'_{j_1}$, ${j_1} \neq j$, $v_{j_1}$ must be matched with $v'_{j_2}$, $j_1 \neq j_2$, $\dots$, $v_{j_k}$ is matched with $v'_i$ (note that there is a finite number of vertices); then, $(v_i,v_{j_1}, v_{j_2}, \ldots, v_{j_k}, v_i)$ is a cycle in $G'$. Therefore, player A profit is $2 \vert V' \vert +n(n- \vert V' \vert ) \geq 2l+n(n-l)$ since $\vert V' \vert \leq l$.
		
		\paragraph{Proof of only if} Now assume that the {\BestReponsepes} is a YES instance. Then, there is $M^A \subseteq E^A$ such that player $A$'s profit is equal or greater to $2l+n(n-1)$. Let $V'$ be equal to the set of $M^A$-matched vertices in $V$. Note that $\vert V' \vert \leq l$, since $G^A$ is bipartite with the vertices partition $\mathcal{L}$ with size $l$. If the {\IndA} matching $M^I$ for $M^A$ is not contained in $Id$, as show in the proof of if, there is a cycle in $G'$ of size at least 2. Recall that the {\IndA} behaves pessimistically for player A and thus, she has incentive to select matchings associated with cycles in $G'$. Therefore, player A has incentive to match internally $l$ vertices which means $\vert V' \vert = l$. In this way, player A's profit is equal or less than $2l + 2 + n(n-2-l )$ which contradicts the fact that player A's profit was equal or greater than $2l+n(n-l)$.

		\begin{figure}[!h]
			\centering
			\begin{tikzpicture}[-,>=stealth',shorten >=0.3pt,auto,node distance=2cm,
				thick,countryA node/.style={circle,fill=black!20,draw},countryB node/.style={diamond,draw}, countryNone node/.style={diamond,fill=white},  scale=.75, transform shape]]
				\tikzstyle{matched} = [draw,line width=3pt,-]
				
				\node[countryA node] (1) {1};
				\node[countryA node] (2) [below of=1] {2};
				\node[countryNone node] (3) [below of=2] {$\vdots$};
				\node[countryA node] (4) [below of=3] {$l$};
				\node[countryA node] (5) [right of=2] {$v_i$};
				
				\path[-,every node/.style={font=\sffamily\small}]
				( 1) edge node {2} (5)
				( 2) edge node {2} (5)
				( 3) edge node {2} (5)
				(4) edge node {2} (5);
			\end{tikzpicture}
			\caption{Gadget $i=1,2, \ldots, n$.}
			\label{fig:gadget_i}
		\end{figure}
		
		\begin{figure}[!h]
			\centering
			\begin{tikzpicture}[-,>=stealth',shorten >=0.3pt,auto,node distance=2cm,
				thick,countryA node/.style={circle,fill=black!20,draw},countryB node/.style={diamond,draw}, countryNone node/.style={diamond,fill=white},  scale=.75, transform shape]]
				\tikzstyle{matched} = [draw,line width=3pt,-]
				
				\node[countryA node] (1) {$v_i$};
				\node[countryB node] (2) [right of=1] {$v'_i$};
				\node[countryB node] (3) [below of=2] {};
				\node[countryNone node] (4) [below of=3] {$\vdots$};
				\node[countryB node] (5) [below of=4] {};
				
				\path[-,every node/.style={font=\sffamily\small}]
				( 1) edge node {$n$} (2)
				( 1) edge node {1} (3)
				( 1) edge node {1} (4)
				( 1) edge node {1} (5);
			\end{tikzpicture}
			\caption{Gadget $i=1,2, \ldots, n$.}
			\label{fig:external_gadget}
		\end{figure}
	\end{proof}
	
	Next, we show that {\BestReponseopt} is NP-complete. To that end, we reduce in polynomial time the following NP-complete problem to {\BestReponseopt}:
	
	\begin{equation}
		\begin{aligned}
			& \textrm{Problem: {\threeDIM} }  \\[1.0ex]
			& \textrm{Instance: Given sets $X$, $Y$, $Z$, each of size $n$, and a set $T \subset X \times Y \times Z$. } \\[1.0ex]
			& \textrm{Question: Is there a set of $n$ triples in $T$ such that} \\
			& \hspace{1.7cm}\textrm{each element is contained in exactly one triplet?}
		\end{aligned}
		\tag{3-DIM}\label{quote_3dimentional_matching}
	\end{equation}
	
	\begin{lemma}
		{\BestReponseopt} is NP-complete.
	\end{lemma}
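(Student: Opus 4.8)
My plan is to prove the two halves of NP-completeness separately, re-using the template of the \FVS{} reduction behind Theorem~\ref{THM:COMPLEXITY} for \BestReponsepes{} but adapting it to the cooperative tie-breaking.

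\emph{Membership in NP.} A candidate internal matching $M^p$ of $G^p$ is a polynomial-size certificate whose value can be checked in polynomial time. Given $M^p$ and the fixed opponent matchings $M^{-p}$, compute the maximum cardinality $k$ of a matching in the available international graph $\bigl(V,E^I(M^{-p},M^p)\bigr)$, and then compute a matching of cardinality exactly $k$ maximising $\sum_{e\in M^I_p} w^p_e$; the latter is a $k$-cardinality assignment problem and hence solvable in polynomial time, exactly as in the pessimistic case. Adding $\sum_{e\in M^p} w^p_e$ and comparing with $P$ decides the instance, so \BestReponseopt{} is in NP.

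\emph{NP-hardness: the reduction.} I reduce \threeDIM{} to \BestReponseopt{}. Given $X,Y,Z$ with $|X|=|Y|=|Z|=n$ and $T\subseteq X\times Y\times Z$ with $|T|=m$, I build a two-player instance with $N=\{A,B\}$, $p=A$ and $M^B=\emptyset$. Player $B$ owns one ``element vertex'' $\beta_e$ for each $e\in X\cup Y\cup Z$. Player $A$ owns, for each triple $t=(x,y,z)\in T$, a \emph{triple gadget} holding three ``arm'' vertices $x_t,y_t,z_t$ plus auxiliary internal vertices and edges, designed so that the only internal-matching configurations that can arise in a profit-maximising response are the two clean ones: \emph{deactivated}, in which all three arms are internally matched by $A$ (hence unavailable to the {\IndA}) and $A$ collects an internal bonus $3w$, and \emph{active}, in which none of the arms is internally matched (so all three are offered to the {\IndA}) and the bonus is forgone. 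The international edges are $(x_t,\beta_x),(y_t,\beta_y),(z_t,\beta_z)$ with weight $W$ for $A$; the {\IndA} uses unit weights so that cardinality is settled first, and $\mathcal{A}$ is any fixed deterministic rule maximising $A$'s weight among maximum-cardinality international matchings. Choosing $w,W$ polynomially bounded with $w<W$ (e.g.\ $w=1$, $W=2$) and setting $P:=3w(m-n)+3nW$, I claim that $A$ has a best response of value at least $P$ if and only if the \threeDIM{} instance is a \textsc{yes}-instance.

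\emph{NP-hardness: correctness.} Let $C\subseteq T$ be the set of active triples in a response and write $c=|C|$ and $d=3n-\bigl|\bigcup_{t\in C}t\bigr|$ for the number of elements no active triple touches. Since each arm $e_t$ has the single international neighbour $\beta_e$, the optimistic {\IndA} matches exactly the $3n-d$ element vertices covered by $C$, so $A$'s total profit is $3w(m-c)+(3n-d)W$ (once the gadget is clean, freeing only some arms of a triple, or keeping it active while a sibling triple already covers its elements, cannot beat this). Using $3c\ge |\bigcup_{t\in C}t|=3n-d$ one gets $3w(m-c)+(3n-d)W = P-\bigl(3w(c-n)+dW\bigr)\le P-d(W-w)\le P$, with equality forced by $d=0$ and $c=n$; but $c=n$ triples covering all $3n$ elements must form an exact cover (pigeonhole). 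Hence value $P$ is attainable iff an exact cover exists: the ``if'' direction activates the $n$ cover triples, deactivates the rest, and exhibits the perfect international matching $\{(e_t,\beta_e): t\in C,\ e\in t\}$; the ``only if'' direction reads the exact cover off any response of value $\ge P$. Together with membership in NP this gives the theorem.

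\emph{Main obstacle.} The delicate point is the design of the triple gadget together with the weight calibration. Naive choices (a $6$-cycle or a ``spider'' on $x_t,y_t,z_t$ and their partners) fail, because they let $A$ free a single arm of a triple at little or no cost, so $A$ could pick one arm per element from possibly overlapping triples and collapse the question to a polynomially solvable system-of-distinct-representatives problem. The gadget must therefore genuinely \emph{couple} the three arms, making ``offering them to the {\IndA}'' all-or-nothing — for instance by threading the three arms through an auxiliary structure, by attaching ``garbage'' $B$-vertices that force maximum cardinality only in the two clean states, or by weighting the international edges so that the optimistic {\IndA} is driven to a consistent (triple-coherent) matching. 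One then has to verify that, in the calibrated instance, no partial or mixed configuration beats the clean ones, that deactivating fewer triples than forced is never profitable, and that the {\IndA}'s freedom among maximum-cardinality matchings is exactly enough to realise the exact-cover matching when it exists but cannot otherwise inflate $A$'s profit above $P$. Taking $M^B=\emptyset$ and fixing $\mathcal{A}$ as above are without loss of generality, since only the existence of the target maximum-cardinality matching matters.
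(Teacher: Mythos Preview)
Your NP-membership argument is fine and essentially matches the paper's.

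The NP-hardness half, however, has a genuine gap: you never construct the triple gadget. You \emph{postulate} an internal $A$-gadget on the arms $x_t,y_t,z_t$ whose profit-maximising states are exactly ``all three arms internally matched (bonus $3w$)'' or ``none matched (bonus $0$)'', and your clean inequality $3w(m-c)+(3n-d)W\le P$ is correct \emph{conditional} on that property. But you yourself flag in the ``Main obstacle'' paragraph that naive gadgets (independent arm--partner edges, a $6$-cycle, a spider) fail, and you only list possible fixes without committing to one or verifying it. The difficulty is structural: in any internal gadget, deleting the edge that covers one arm from a matching covering all three yields a feasible matching covering two arms, so partial states are always available; to make them \emph{strictly} dominated you would need a specific weight profile $f(0),f(1),f(2),f(3)$ and then redo the global optimisation over all mixed configurations, which you have not done. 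As you note, with independent arm--partner edges the optimum collapses to picking one arm per element (a system of distinct representatives), which is always feasible and attains your threshold $P$ regardless of whether an exact cover exists --- so that instantiation is outright wrong, and you have supplied no working replacement.

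For comparison, the paper's reduction is also from {\threeDIM} but is organised differently and is fully explicit. The element vertices $x_i,y_j,z_k$ sit in $V^A$ (not in $V^B$), the internal edges are $(x_i,t^x),(y_j,t^y),(z_k,t^z)$, and the coupling is achieved \emph{externally}: the three arm copies $t^x,t^y,t^z$ of a triple all point to a single $B$-vertex that also lies on a heavy edge $(t,t)$ of weight $2|T|$. If even one arm is left free, the {\IndA}'s cardinality-maximisation forces a size-$2$ matching in that gadget that blocks the heavy edge, so the per-gadget external contribution drops from $2|T|$ to $2$. This makes ``fully busy vs.\ not fully busy'' the only distinction that matters, without requiring any special internal structure, and the counting argument then pins down an exact cover. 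If you want to salvage your route, you must either exhibit an explicit gadget with a proof that mixed states are never optimal, or move the coupling into the international graph as the paper does.
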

	\begin{proof} We split the proof in two parts by proving the following claims.
		
		\begin{claim}
			{\BestReponseopt} is in NP.
		\end{claim}
		Given $M^p$, we can solve the optimistic {\IndA} problem in polynomial time by reducing it to a maximum-weighted matching problem using a standard trick:  replace the weights $w_e^p$ for $e \in E^I_p$ by $w_e^p+\vert V^{-p}\vert w_{max}$  and the remaining  $e \in E^I\setminus E^I_{p}$ by $w_{min}-\epsilon+\vert V^{-p} \vert w_{max}$, where $w_{max}= \max_{e \in E^I_p} w_e^p$ $w_{min}= \min_{e \in E^I_p} w_e^p$ and $w_{min}-\epsilon>0$; with these new weights,  the corresponding  maximum-weighted matching $\bar{M}$  is always a matching of maximum cardinality plus it is the one that benefits the most player $p$.
		
		\begin{description}
			\item[-$\bar{M}$ is a maximum matching in the international graph.  ] A maximum matching $M$ has weight at least $\vert M \vert (w_{min}-\epsilon+\vert V^{-p} \vert  w_{max})$ . Suppose that $\bar{M}$ is not a maximum matching. The maximum-weighted matching  $\bar{M}$ has weight at most
			\begin{align*}
				\vert \bar{M} \vert (w_{max} + \vert V^{-p} \vert w_{max}) &= \vert \bar{M} \vert w_{max} (1+\vert V^{-p} \vert) \\
				& \leq (\vert \bar{M} \vert-1) w_{max} (1+\vert V^{-p} \vert)\\
				& = \vert \bar{M} \vert w_{max} (1+\vert V^{-p} \vert)- w_{max} (1+\vert V^{-p} \vert)\\
				& =  \vert \bar{M} \vert w_{max} \vert V^{-p} \vert + \vert \bar{M} \vert w_{max}  -w_{max} (1+\vert V^{-p} \vert)\\
				&= \vert \bar{M} \vert w_{max} \vert V^{-p} \vert +w_{max} (\vert \bar{M} \vert-1-\vert V^{-p} \vert)\\
				& <  \vert \bar{M} \vert w_{max} \vert V^{-p} \vert
			\end{align*}
			but $M$ has a higher weight, which contradicts the fact that $\bar{M}$ is a maximum-weighted matching for the international graph.
			\item[-$\bar{M}$ is the optimistic matching for player $p$.  ]  Otherwise, suppose that there is a maximum cardinality matching $M$ of the international graph that leads to a higher weighted matching for player $p$. Since both $M$ and $\bar{M} $ have maximum cardinality, then $\vert M \cap E^I_{-p} \vert <\vert \bar{M} \cap E^I_{-p} \vert$ and  $\vert M \cap E^I_{p} \vert >\vert \bar{M} \cap E^I_{p} \vert$. However, any edge in $E^I_{-p}$ has weight $w_{min} - \epsilon +\vert V^{-p} \vert$ which is strictly smaller  than the weight of any edge in $E^I_{p}$ (which is $w_e+\vert V^{-p} \vert$), implying that $\bar{M}$ would not have been optimal.
		\end{description}
		
		\begin{claim}
			{\BestReponseopt} is NP-hard.
		\end{claim}
		First, we reduce in polynomial time, an instance of {\threeDIM} to one of {\BestReponseopt}:
		\begin{itemize}
			\item Let $N = \lbrace A,B \rbrace$ with $p=A$.
			\item Let $T=\lbrace t_1,t_2, \ldots, t_{\vert T \vert} \rbrace$. Make $V^A=X \cup Y \cup Z\cup T \cup T^x \cup T^y \cup T^z$, where $T^{(\cdot)} = \lbrace t_1^{(\cdot)}, t_2^{(\cdot)}, \ldots, t_{\vert T \vert}^{(\cdot)} \rbrace$ (that is, we make 3 copies of each $t \in T$).  Make $V^B= T \cup T'$, where $T' =\lbrace t'_1,t'_2, \ldots, t'_{\vert T \vert} \rbrace$ (that is, a copy of $T$).
			\item Set $E^A = \lbrace (x_i,t_l^x), (y_j,t_l^y), (z_k,t_l^z):  t_l=(x_i,y_j,z_k) \in T\rbrace$ and
			
			$E^I= \lbrace (t^x,t), (t^y,t), (t^z,t), (t,t), (t,t'): t' \in T' \rbrace$. All edges have weight 1, except nodes $(t,t) \in E^I$ that have weight $2\vert T \vert$. See Figure \ref{fig:gadget_t}.
			\item Set $P=3n+2n \vert T \vert + 2(\vert T\vert -n)$.
		\end{itemize}
		
		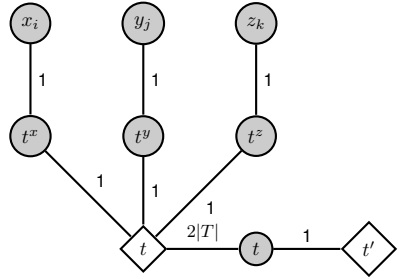
\begin{figure}[!h]
			\centering
			\begin{tikzpicture}[-,>=stealth',shorten >=0.3pt,auto,node distance=2cm,
				thick,countryA node/.style={circle,fill=black!20,draw},countryB node/.style={diamond,draw}, countryNone node/.style={diamond,fill=white},  scale=.75, transform shape]]
				\tikzstyle{matched} = [draw,line width=3pt,-]
				
				\node[countryA node] (1) {$x_i$};
				\node[countryA node] (2) [right of=1] {$y_j$};
				\node[countryA node] (3) [right of=2] {$z_k$};
				\node[countryA node] (4) [below of=1] {$t^x$};
				\node[countryA node] (5) [below of=2] {$t^y$};
				\node[countryA node] (6) [below of=3] {$t^z$};
				\node[countryB node] (7) [below of=5] {$t$};
				\node[countryA node] (8) [right of=7] {$t$};
				\node[countryB node] (9) [right of=8] {$t'$};
				
				\path[-,every node/.style={font=\sffamily\small}]
				( 1) edge node {1} (4)
				( 2) edge node {1} (5)
				( 3) edge node {1} (6)
				( 4) edge node {1} (7)
				( 5) edge node {1} (7)
				( 6) edge node {1} (7)
				( 7) edge node {$2\vert T \vert$} (8)
				( 8) edge node {1} (9);
			\end{tikzpicture}
			\caption{Gadget $t=(x_i,y_j,z_k) \in T$. Gray vertices belong to player A and white vertices belong to player B.}
			\label{fig:gadget_t}
		\end{figure}
		We claim that in the constructed instance of {\BestReponseopt}, player A can have a profit of at least $P=3n+2n \vert T \vert + 2(\vert T\vert -n)$ if and only if the {\threeDIM} instance has answer YES.
		
		\paragraph{Proof of if}   Assume the {\threeDIM} instance has  answer YES. Then, there is a 3-dimensional matching $M \subset T$ of size $n$. Select the internal matching $M^A= \lbrace (x_i,t_l^x), (y_j,t_l^y), (z_k,t_l^z): t_l \in M \rbrace$. Note that $M^A$ is a matching since w.l.o.g. if $(x_i,t_l^x), (x_i,t_r^x) \in M^A$, then $t_l$ and $t_r$ are in $M$, meaning that $M$ was not a valid 3-dimensional matching. This gives an internal profit for player A equal to $3n$ . For all gadgets $t$, the optimistic {\IndA} selects the edge $(t,t) \in E^I$ leading to a total weight of $2n \vert T \vert$. Finally, for all the gadgets $t \notin M$, the {\IndA} has to select a matching of cardinality 2 which can be $(t^x,t), (t,t')$, leading to a total weight of $2(\vert T\vert -n)$. In this way, player A's profit is equal to $3n + 2n\vert T \vert + 2(\vert T\vert -n) = P.$

		\paragraph{Proof of only if} Now, assume that the {\BestReponseopt} is a YES instance. Then, there is $M^A$ such that player A's profit is greater or equal to $P = 3n + 2n \vert T \vert+ 2(\vert T\vert -n)$. If this internal matching does not match $t^x, t^y$ and $t^z$ for any gadget $t$, the optimistic {\IndA} has always a matching of size two available in each gadget, leading to a profit of at most $4 \vert T \vert$ that  is less than $P$.   In this way, player A must match internally as many as possible of the vertices $t^x, t^y$ and $t^z$ for each gadget $t$ so that the {\IndA} does not have available the external matching of size 2 and it is forced to choose $(t,t)$ with weight $2\vert T \vert$. Whenever this is done, player A adds $3+2\vert T \vert$ to her profit. If only $k<n$ gadgets have all $t^x, t^y$ and $t^z$ matched internally, the utility is at most  $3k+2k\vert T \vert + 4(\vert T\vert -k )$. If $k=n-1$, we have utility equal to $3(n-1)+2(n-1)\vert T \vert + 4(\vert T\vert -n+1) = 3n+2n \vert T \vert +2 (\vert T \vert -n)+\overbrace{1-n}^{< 0},$ which is smaller than $P$.
		
	\end{proof}

	\section{Verifying Equilibria}\label{app:Best_responses}
	
	Algorithm~\ref{Alg:Best_weighted_Optimistic_response} enables to verify if there is an algorithm $\mathcal{A}$ such that a player $p$ has no incentive to deviate from a matching $M$: player $p$ utility under $M$ is a lower bound $LB$ for her best response to $M^{-p}$; player $p$'s opponents strategies are fixed to $M^{-p}$; player $p$ optimal solution when she controls the {\IndA} output and does not repeat previous evaluated solutions is determined leading to an upper bound $UB$; if the $UB$ is less or equal to $LB$ the player $p$ has no incentive to deviate and the algorithm returns $True$; otherwise, the {\IndA} decision for new player $p$ strategy is determined and it is evaluated if it improves player $p$ solutions (\ie, if it is greater than $LB$); if yes, the algorithm returns $False$, otherwise, it computes a new strategy for player $p$.
	
	Algorithm~\ref{Alg:Pessimistic_IA} computes the {\IndA} solution that minimizes the benefits for player $p$.
	
	\renewcommand{\baselinestretch}{1}
	\begin{algorithm}[hbtp]
		\caption{Return True if player $p$ has no incentive to deviate from given $(M^p,M^{-p})$, and False otherwise.}
		\begin{algorithmic}[1]
			\STATE $x^{p,0} \leftarrow$ binary vector of iteration 0 with one entry for each edge in $E^p$, where an entry is 1 if the associated edge is in $M^p$ and 0, otherwise.   {\color{gray}\# Representation of player $p$'s strategy } \\ $k \leftarrow 1$ \\ $aux = True$ {\color{gray}\#  represents the non-incentive to deviate} \\ $LB \leftarrow \sum_{e \in M^p} w_e^p + \sum_{e \in M \cap E^I_p} w_e^p$ {\color{gray}\# Player $p$ utility under $M$ }
			\FOR{$j \in N \setminus \lbrace p \rbrace$}
			\STATE $x^j \leftarrow$ binary  vector with one entry for each edge in $E^j$, where an entry is 1 if the associated edge is in $M^j$ and 0, otherwise.   {\color{gray}\# Representation of player $j$'s strategy }
			\ENDFOR
			\WHILE{aux}
			\STATE Solve
			\begin{subequations}
				\begin{alignat}{5}
					(x^{p,k},\bar{y}) = &\argmax_{x,y \in \{ 0,1\}}  &&   \sum_{e \in E^p} w_e x_e^p+ \sum_{e \in E^I_p} w_e^p y_e \\[0.4ex]
					&\mbox{s.t.~~}  &&\sum_{e \in E^p: i \in e} x^p_e+\sum_{e \in E^I_p: i \in e} y_e \leq 1 \quad  \forall i \in V^{p}     \\
					&&& \sum_{e \in E^I_p: i \in e} y_e \leq 1-\sum_{j=1}^N\sum_{e \in E^j: i\in e} x_e^j \quad  \forall i \in V^{-p}\\
					&&& \sum_{e \in E^p: x_e^{p,j}=0} x^p_e+\sum_{e \in E^p: x_e^{p,j}=1}1- x^p_e\geq 1 \quad j=0,\ldots,k-1,
				\end{alignat}
				\label{BestResponseNogood}
			\end{subequations}
			where the solutions, $x^{p,k}$  and $\bar{y}$ represent a matching in $E^p$ and  in $E^I$, respectively.
			Let $UB$ be the optimal objective value of Problem~\eqref{BestResponseNogood}.
			\IF{Optimization~\eqref{BestResponseNogood} infeasible or $UB\leq LB$}
			\RETURN aux {\color{gray}\# which is True, \ie, no incentive to deviate}
			\ENDIF
			\STATE $y^k \leftarrow $ Algorithm \ref{Alg:Pessimistic_IA} for $x^{p,k}$
			\IF{$\sum_{e \in E^p} w_e x_e^{p,k} + \sum_{e \in E^I_p} w_e^p y_e^k > LB$}
			\RETURN False	{\color{gray}\# \ie, incentive to deviate}
			\ENDIF
			\STATE $k \leftarrow k+1$
			\ENDWHILE
		\end{algorithmic}
		\label{Alg:Best_weighted_Optimistic_response}
	\end{algorithm}


	\begin{algorithm}[hbtp]
		\caption{Compute an {\IndA} decision that minimizes player $p$ utility for $(x^{-p},x^{p,k})$.}
		\begin{algorithmic}[1]
			\STATE Let $OPT^{k}$ be the optimal value to the {\IndA} given $(x^{-p},x^{p,k})$.
			\STATE Solve  {\color{gray}\# minimize player $p$ utility}
			\begin{subequations}
				\begin{alignat}{5}
					y^k = &\arg\min_{y \in \{ 0,1\}}  &&  \ \ \sum_{e \in E^I_p} w_e^p y_e \\[0.4ex]
					&\mbox{s.t.~~}  &&\sum_{e \in E^I_p: i \in e} y_e \leq 1-\sum_{e \in E^p: i\in e} x_e^{p,k} \quad  \forall i \in V^{p}     \\
					&&& \sum_{e \in E^I: i \in e} y_e \leq 1-\sum_{j=1}^N\sum_{e \in E^j: i\in e} x_e^j \quad  \forall i \in V^{-p}\\
					&&& \sum_{e \in E^I} w_e y_e \geq OPT^k \textrm{\color{gray} \# keep {\IndA} solution optimal}
				\end{alignat}
			\end{subequations}
			\RETURN $y^k$
		\end{algorithmic}
		\label{Alg:Pessimistic_IA}
	\end{algorithm}
	
	\renewcommand{\baselinestretch}{1.5}
	\section{Uniform generation of maximum matchings}\label{app:uniform_gen}
	
	In \cite{Dyer:2003} a dynamic programming method for counting the number of optimal solutions for a multiple-knapsack is presented; furthermore, such computation is then used to uniformly generate optimal solutions. We adapt this paper idea to  our matching polytope
	\begin{subequations}
		\begin{alignat}{4}
			&    &
			\sum_{e \in E: i \in e} x_e \leq 1 & \hspace*{2cm} \forall i \in V\\
			& &  \sum_{e \in E}  x_e = OPT \\
			& & x_e \in \lbrace 0,1 \rbrace & \hspace*{2cm} \forall e \in E, \label{eq:2}
		\end{alignat}
		\label{IntialModelPoly}
	\end{subequations}
	where $OPT$ is the size of a maximum cardinality matching.
	
	Our strategy for the uniform generation of maximum matchings will be based on counting them as we describe next.
	
	Let $F(K, r, s) $ for a graph $G=(V,E)$ be the number of matchings with cardinality $K$ when the subset of edges $ \lbrace e_1,e_2, \ldots, e_r \rbrace \subseteq E= \lbrace e_1,e_2, \ldots, e_m \rbrace$  and the subset of vertices $\lbrace v \in V: s_v = 1 \rbrace$ are available ($s$  is a binary vector of size $|V|$ that represents the vertices available). We aim to determine $F(OPT, m, s') $ when $s'=\left[1,1, \ldots, 1 \right]$, \ie,  the number of matchings with maximum cardinality. This can be done through the dynamic programming recursion
	{\footnotesize
		$$F(K,r,s)= \left\{
		\begin{array}{ll}
			0, &  K<0 \vee r <0 \vee \min(s)<0 \quad \tiny  \textit{(the entries of F must be non negative)}\\
			0, & K>r  \quad \tiny \textit{(there is no matching}  )\\
			1, & K=0 \wedge   r \geq 0\wedge \min(s) \geq 0  \quad \tiny \textit{(there is one matching:  $\emptyset$)}  \\
			0, & K>0 \wedge   r = 0\wedge \min(s) \geq 0  \quad \tiny \textit{(there is no matching}  )\\
			F(K,r-1,s)+ & \\
			F(K-1,r-1,\hat{s}) & K>0 \wedge   r > 0\wedge \min(s) \geq 0  \quad \tiny \textit{(number of matchings not using  $e_r$}\\
			& \tiny  \textit{ plus number of matchings using  $e_r$  )}
		\end{array}
		\right.  $$
	}
	where $\hat{s}_i=s_i$, except for $i \in e_r$ where $\hat{s}_i= s_i-1$, \ie, if edge $e_r$ is part of a matching then the incident vertices become matched and thus, unavailable for the subsequent recursions.
	The computational time of this approach is $O(|E|^2 2^{|V|})$.
	
	The $F(\cdot)$ table can be used to determine a uniform point in \eqref{IntialModelPoly}, by tracing back probabilistically from $F(OPT, m, s) $, as follows: with probability $\frac{F(OPT, m-1, s')}{F(OPT, m, s')} $ set $x_{e_m}=0$, else set $x_{e_m}=1$; if $x_{e_m}=0$, recursively determine $x_{e_{m-1}}$, $\ldots$, $x_{e_1}$ by tracing back from $F(OPT, m-1, s')$ and if $x_{e_m}=1$, trace back similarly from $F(OPT-1, m-1, \hat{s}')$. The resulting matching has probability $\frac{1}{F(OPT, m, s')}$ of being sampled, \ie, it is uniformly generated.
	
	We remark that the problem of randomly generating a matching of maximum cardinality is \#P-hard~\cite{VALIANT1979}. Indeed, this is a very difficult problem in practice, restricting the scalability of the approach described here.
	
\end{appendix}

\bibliographystyle{abbrv}
\bibliography{ref_revision}
\end{document}